\documentclass[conference,letter]{IEEEtran}
\usepackage{mathdots}
\usepackage[utf8]{inputenc} 
\usepackage[T1]{fontenc}
\usepackage{url}
\usepackage{cite}             

\usepackage[cmex10]{amsmath}  

\DeclareMathOperator*{\essinf}{ess\,inf}

\usepackage{mleftright}       
\mleftright                   

\usepackage{graphicx}         
\usepackage{booktabs}         

\usepackage{soul}
\usepackage{enumitem}
\usepackage{acronym}
\usepackage{xcolor}

\newcommand{\myVec}[1]{{\boldsymbol{#1}}}
\newcommand{\myfun}[1]{{\mathsf{#1}}}

\newcommand{\E}[1]{\mathbb{E}\left[{#1}\right]}	
\newcommand{\Prob}[1]{\mathbb{P}\left[{#1}\right]}	
\newcommand{\aver}[2]{\frac{#1}{n}\sum_{i=1}^n {#2}}	
\newcommand{\product}[1]{\prod_{i=1}^n{#1}}
\newcommand{\var}[1]{\mathrm{Var}[#1]}	
\newcommand{\mySet}[1]{{\mathcal{#1}}}

\acrodef{iid}[i.i.d.]{independent and identically distributed}
\acrodef{rdf}[RDF]{rate-distortion function}
\acrodef{aep}[AEP]{asymptotic equipartition property}

\usepackage[font=small]{caption}
\usepackage{bbm}

\allowdisplaybreaks[2]
\graphicspath{{figures/}}

\usepackage{amsthm}
\usepackage{amsmath}
\usepackage{amsfonts}
\usepackage{cancel}
\usepackage[
  colorlinks=true,
  citecolor=green,
  linkcolor=blue,
  urlcolor=blue
]{hyperref}

\theoremstyle{plain}
\newtheorem{theorem}{Theorem}
\newtheorem{lemma}{Lemma}
\newtheorem{definition}{Definition}
\newtheorem{corollary}{Corollary}

\usepackage{subcaption}

\usepackage{cleveref}
\crefname{equation}{Eq}{Eqs} 
\crefrangelabelformat{equation}{(#3#1#4-#5#2#6)}

\usepackage{amssymb}

\usepackage{float}

\usepackage[font=small]{caption}
\IEEEoverridecommandlockouts

\begin{document}

\title{Dispersion of Gaussian Sources with Memory and an Extension to Abstract Sources\ \thanks{
E. Tasci and V. Kostina are with the Department of Electrical Engineering, California Institute of Technology, Pasadena, CA 91125, USA. E-mails: \textit{\{etasci, vkostina\}@caltech.edu.}}} 

\author{%
  \IEEEauthorblockN{Eyyüp Taşçı, Victoria Kostina}
}
\maketitle 

\begin{abstract}
We consider finite blocklength lossy compression of information sources whose components are independent but non-identically distributed. Crucially, Gaussian sources with memory can be cast in this form. We show that under the operational constraint of exceeding distortion \(d\) with probability at most \(\epsilon\), the minimum achievable rate at blocklength \(n\) satisfies \(R(n, d, \epsilon)=\mathbb{R}_n(d)+\sqrt{\frac{\mathbb{V}_n(d)}{n}}Q^{-1}(\epsilon)+O \left(\frac{\log n}{n}\right)\), where \(Q^{-1}(\cdot)\) is the inverse \(Q\)-function, while \(\mathbb{R}_n(d)\) and \(\mathbb{V}_n(d)\) are fundamental characteristics of the source computed using its \(n\)-letter joint distribution and the distortion measure, called the \(n\)th-order informational rate-distortion function and the source dispersion, respectively. This result generalizes the existing dispersion result for abstract sources with i.i.d. components. The key novel technical tool in our analysis is the point-mass product proxy measure, which enables the construction of typical sets. This proxy generalizes the empirical distribution beyond the i.i.d. setting by preserving additivity across coordinates and facilitating a typicality analysis for sums of independent, non-identical terms. Furthermore, for Gaussian autoregressive sources, we quantify how fast $\mathbb{R}_n(d)$ and $\mathbb{V}_n(d)$ approach their limiting values as the blocklength $n$ tends to infinity, by approximating the eigenvalues of the $n$th-letter covariance matrix. Using these convergence results, we sharpen and extend the only known dispersion result for a source with memory, namely the scalar Gauss-Markov source, to more general Gaussian autoregressive sources with finite memory.
\end{abstract}
\begin{IEEEkeywords}
Rate-distortion theory, dispersion, finite blocklength, Gaussian sources, Gauss-Markov source, Gaussian autoregressive sources, beyond i.i.d.
\end{IEEEkeywords}

\section{Introduction}
Rate-distortion theory studies the minimum coding rate required for data compression when imperfect reproduction within a distortion level $d$ is permitted. The fundamental limits of this trade-off are characterized by \textit{source coding theorems}, which quantify the operational rate-distortion trade-off in terms of informational quantities.

In his seminal paper, Shannon~\cite{shannon59_RD} proved the first source coding theorem in rate-distortion theory, establishing that the minimum rate required to reproduce a discrete memoryless source with average distortion at most $d$ is equal to the informational \ac{rdf} $\mathbb{R}(d)$, which is given by a single-letter mutual-information minimization. Shannon also sketched ideas for generalizing his results to continuous-amplitude sources with memory, but did not prove a coding theorem for that setting. Subsequent work addressed this gap and generalized Shannon's source coding theorem to broader classes of stationary sources. Goblick~\cite{Goblick69_StrongMix} proved a source coding theorem for continuous-amplitude strongly mixing sources, and Berger~\cite{berger68_AbstractAlphabet} treated abstract-alphabet block-ergodic sources. Gallager~\cite[Ch. 9]{gallager68_book} established a coding theorem for finite-alphabet ergodic sources and showed that the asymptotic \ac{rdf} is characterized as the limit of a sequence of $n$-letter mutual-information minimization problems. Berger~\cite[Ch. 7]{berger71_RDBook} later extended this characterization to sources with abstract alphabets. Gray and Davisson~\cite{Gray74_NoErgodic} then showed that the ergodicity assumption can be removed, proving a source coding theorem under stationarity alone.

While proving source-coding theorems of increasing generality was a challenging endeavour, evaluating the \ac{rdf} in closed form presented a challenge of its own. Since the \ac{rdf} is defined through a variational problem whose complexity grows exponentially with the blocklength, explicit formulas are known only in special cases. For stationary Gaussian processes under quadratic distortion, Kolmogorov~\cite{kolmogorov56_RD} characterized the \ac{rdf} in terms of the source power spectral density by decorrelating the source through a unitary transformation, resulting in the classical \textit{reverse water-filling} formula. Gallager~\cite{gallager68_book} later gave the corresponding source-coding theorem for stationary Gaussian random processes, establishing the operational meaning of this \ac{rdf}. These works were later extended beyond stationary sources. Berger~\cite{berger70_InformationRatesWiener} treated the Wiener process under quadratic distortion, proving a source-coding theorem and deriving its \ac{rdf}. Generalizing this result, Gray~\cite{gray70_Autoregressive} gave a reverse water-filling expression for the \ac{rdf} of possibly nonstationary Gaussian autoregressive sources and proved a corresponding source-coding theorem. In the same paper, Gray also derived a lower bound for the binary symmetric Markov source under bit-error distortion, which becomes tight in the low-distortion regime, and later extended his results to finite-state, finite-alphabet Markov sources~\cite{gray71_FiniteStateMarkov}. 



The classical results above characterize optimal performance as the blocklength $n$ grows without bound, but they do not directly quantify the operational limits at finite blocklength, where practical compression schemes operate. This motivates finite blocklength analyses that quantify the convergence rate of the optimal coding rate to the \ac{rdf} as $n$ increases. Sakrison~\cite{Sakrison68_FiniteGauss} and Wyner~\cite{Wyner68_FiniteGauss}, in separate contributions in the same year, investigated the rate at which the performance of a code can approach the \ac{rdf} as a function of the blocklength for an \ac{iid} Gaussian source under squared error distortion. Zhang, Yang, and Wei~\cite{Yang97_redundancy_finite} studied the redundancy of the minimum rate achieved at finite blocklength $n$ relative to the asymptotically optimal rate for fixed-rate compression of a finite-alphabet \ac{iid} source under an average distortion constraint. Zhang and Yang~\cite{yang99_RedundancyRD} later generalized this result to abstract sources. Kontoyiannis~\cite{kontoyiannis00_PointwiseRedundancy} established refined asymptotics for the pointwise redundancy of abstract sources under variable-length compression in terms of distortion $d$-ball probabilities. For fixed-rate compression of finite-alphabet \ac{iid} sources, second-order asymptotics were derived by Ingber and Kochman~\cite{ingber11_Dispersion}, and for abstract \ac{iid} sources by Kostina and Verdú~\cite{kostina12_FixedLength}.  Second-order asymptotic analyses show that the minimum achievable rate $R(n,d,\epsilon)$ at blocklength $n$ under the operational constraint of exceeding distortion $d$ with probability at most $\epsilon$ admits the following expansion, referred to as the Gaussian approximation~\cite{kostina12_FixedLength, ingber11_Dispersion}:
\begin{equation}
\label{eq:rnd_approx}
    R(n,d,\epsilon)
    \approx \mathbb{R}(d)
    + \sqrt{\frac{\mathbb{V}(d)}{n}}\,Q^{-1}(\epsilon),
\end{equation}
where the first term is the informational \ac{rdf} $\mathbb{R}(d)$, and the second term quantifies the convergence rate to $\mathbb{R}(d)$ through the source characteristic termed source dispersion $\mathbb{V}(d)$, and $Q^{-1}(\cdot)$ denotes the inverse $Q$-function. For sources with memory, the finite blocklength analysis remains limited to the work of Tian and Kostina on stationary and nonstationary Gauss-Markov sources~\cite{tian19_Stationary, tian21_NonStationary}. They provided a water-filling expression for the source dispersion, which parallels that of the \ac{rdf}. See~\cite{motani23_monograph} for a recent monograph on finite blocklength results in lossy compression.

In this paper, we extend the second-order asymptotic analysis of~\cite{kostina12_FixedLength} beyond \ac{iid} sources, and the Gaussian process results of~\cite{tian19_Stationary, tian21_NonStationary} beyond the scalar Gauss-Markov source. We consider sources whose components are independent but not necessarily identically distributed, and, like most prior work in rate-distortion theory, we consider separable distortion measures. Our framework covers Gaussian sources under squared error distortion since any Gaussian source can be decorrelated into independent components by an orthogonal transformation that preserves Euclidean distances. We show that the minimum achievable rate admits a Gaussian approximation
\begin{equation}
\label{eq:rnd}
    R(n,d,\epsilon)
    = \mathbb{R}_n(d)
    + \sqrt{\frac{\mathbb{V}_n(d)}{n}}\,Q^{-1}(\epsilon)
    + O\left(\frac{\log n}{n}\right),
\end{equation}
where $\mathbb{R}_n(d)$ and $\mathbb{V}_n(d)$ are the $n$th-order informational \ac{rdf} and source dispersion, respectively, both associated with the $n$-letter mutual information minimization problem. For sources with \ac{iid} components,~\eqref{eq:rnd} reduces to the Gaussian approximation \eqref{eq:rnd_approx} in~\cite{ingber11_Dispersion, kostina12_FixedLength}, except in the zero-dispersion regime, which requires separate treatment. Furthermore, if the $n$th-order \ac{rdf} and the source dispersion converge to their limiting functions sufficiently fast, our analysis yields a Gaussian approximation in terms of these limiting functions. In particular, for stationary Gaussian autoregressive sources, we provide a reverse water-filling representation of the second-order term. This extends the dispersion result for the Gauss-Markov source established in~\cite{tian19_Stationary}, while also sharpening the remainder term. Our key technical novelty lies in the achievability proof, where we construct a typical set using a point-mass product proxy measure. This choice preserves the additive structure of the distortion and information densities and enables typicality analysis via Berry--Esseen bounds. For \ac{iid} sources, previous works~\cite{kostina12_FixedLength, yang99_RedundancyRD, kontoyiannis00_PointwiseRedundancy} use the empirical measure of the observed sequence to construct a typical set. However, an empirical proxy measure does not facilitate typicality arguments when the source has non-identical components. Tian and Kostina~\cite{tian19_Stationary,tian21_NonStationary} instead use a specially designed maximum likelihood estimator for the gain parameter of the Gauss-Markov source to construct typical sets. Because that approach relies on source-specific estimation, it does not easily generalize to other Gaussian sources with memory or beyond Gaussians. 

The paper is organized as follows. Section~\ref{sec:problem_setup} introduces the operational and information-theoretic definitions. Section~\ref{sec:main_results} states our main Gaussian approximation theorem, along with corollaries that specialize it to three distinct cases: i.i.d. sources, Gaussian sources with memory, and stationary Gaussian autoregressive sources. The proof of our main theorem is split into converse and achievability parts, presented in Sections~\ref{sec:converse} and~\ref{sec:achievability}. 

Throughout the paper, we use boldface lower-case (upper-case) letters for vectors (random vectors) of length $n$. The $i$th element of a vector $\myVec{X}$ is denoted by $X_i$. For a random variable $X$, we use $\E{X}$ and $\var{X}$ to denote its mean and variance, respectively. We use standard $O(\cdot)$ and $o(\cdot)$ notations to describe the asymptotic growth rates of functions.
Namely, let $f(n)$ and $g(n)$ be two functions of $n$, then $f(n) = O(g(n))$ if and only if there exists a positive real number $M$ and $n_0 \in \mathbb{N}$ such that $|f(n)| \leq M |g(n)|$ for any $ n \geq n_0$. Similarly, $f(n) = o(g(n))$ if and only if for any positive real number $M$, there exists $n_0 \in \mathbb{N}$ such that $|f(n)| \leq M |g(n)|$ for any $ n \geq n_0$. All exponents and logarithms are to base $e$.

\section{Problem Setup}
\label{sec:problem_setup}

\subsection{Operational Definitions}
\label{subsec:operational_defs}
We study sources with independent but not necessarily identical components. Let $\{X_i\}_{i=1}^\infty$ be independent random variables with $X_i \sim P_{X_i}$, each taking values in the alphabet $\mySet{X}_i$. For each blocklength $n\in\mathbb{N}$, define the source block $\myVec{X} \triangleq (X_1,\dots,X_n)\in \boldsymbol{\mySet{X}}$, where the product alphabet is $\boldsymbol{\mySet{X}} \triangleq \product{\mySet{X}_i}$ and the joint distribution factorizes as $P_{\myVec{X}} = \product{P_{X_i}}$. The reproduction sequence takes values in the product alphabet $\boldsymbol{\mySet{Y}} \triangleq \product{\mySet{Y}_i}$. The discrepancy between the source and its reproduction is measured via single-letter distortion functions $\mathsf{d}_i(\cdot, \cdot)\colon\mySet{X}_i \times \mySet{Y}_i \rightarrow [0, +\infty]$, which induce the separable block distortion $\mathsf{d}(\myVec{x}, \myVec{y}) = \aver{1}{\mathsf{d}_i(x_i, y_i)}$.

\begin{definition}
    For a fixed blocklength $n \in \mathbb{N}$, distortion threshold $d  \geq  0$, and excess distortion probability $\epsilon \in [0,1)$, an $(n, M, d, \epsilon)$ code consists of an encoder $\myfun{f} \colon \boldsymbol{\mySet{X}} \rightarrow \{1, \ldots, M \}$ and a decoder $\myfun{g} \colon \{1, \ldots, M \} \rightarrow \boldsymbol{\mySet{Y}}$ such that 
    \begin{equation}
        \Prob{\mathsf{d}(\myVec{X}, \myfun{g}(\myfun{f}(\myVec{X}))) > d } \leq \epsilon.
    \end{equation}
    The rate associated with an $(n, M, d, \epsilon)$ code is $R \triangleq \frac{\log M}{n}.$
\end{definition}
 The minimum achievable code size for given blocklength $n$, distortion $d$, and excess distortion probability $\epsilon \in [0,1)$ is
\begin{equation}
M^\star(n, d, \epsilon) \triangleq \min \{ M \colon\exists \text{ an } (n, M, d, \epsilon) \text{ code} \} ,
\end{equation}
and the corresponding minimum source coding rate is
\begin{equation}
\label{eq:rnd_oper}
R(n, d, \epsilon) \triangleq \frac{\log M^\star(n, d, \epsilon)}{n}.
\end{equation}

\subsection{Informational Definitions}
\label{subsec:informational_defs}
For a given source distribution and distortion measure, the $n$th-order informational \ac{rdf} is
\begin{equation}
\label{eq:rd}
    \mathbb{R}_n(d) \triangleq \inf_{\substack{P_{\myVec{Y}|\myVec{X}} \colon \boldsymbol{\mySet{X}} \mapsto \boldsymbol{\mySet{Y}} \\
    \mathbb{E}[\mathsf{d}(\myVec{X}, \myVec{Y})] \leq d}} \enspace \frac{1}{n}  I(\myVec{X}; \myVec{Y}).
\end{equation}
Here, $I(\myVec{X}; \myVec{Y})$ denotes the mutual information, and the infimum is over all conditional distributions $P_{\myVec{Y}|\myVec{X}}$ satisfying the distortion constraint. We define
\begin{equation}
\label{eq:dmin} 
d_{\min} \triangleq \inf \{d \colon \limsup_{n\rightarrow\infty} \mathbb{R}_n(d) < \infty\}.
\end{equation}
Throughout the paper, we impose the following restrictions:
\begin{enumerate}[label=(\Roman*)]
    \item \label{res:unique}
    The infimum in \eqref{eq:rd} is achieved by a unique conditional distribution $P_{\myVec{Y^\ast}|\myVec{X}}$.
    
    \item \label{res:min_d}
    The distortion level satisfies $d > d_{\min}$.
\end{enumerate}
We refer to $\myVec{Y^\ast}$ as the \ac{rdf}-achieving reproduction random vector. Under restrictions~\ref{res:unique}-\ref{res:min_d},
differentiation of $\mathbb{R}_n(d)$ with respect to $d$ is justified~\cite[Eq.~16]{kostina12_FixedLength}, and we define 
\begin{equation}
\label{eq:lambda}
    \lambda_n^\ast \triangleq -\mathbb{R}_n'(d).
\end{equation}
The associated tilted information density introduced in~\cite[Def.~6]{kostina12_FixedLength} is defined for $d>d_{\min}$ and any $\myVec{x}\in\boldsymbol{\mySet{X}}$ as
\begin{equation}
\label{eq:d_tilted}
\jmath(\myVec{x}, d) \triangleq -\lambda_n^\ast n d 
- \log \mathbb{E} \left[ \exp \left( -\lambda_n^\ast n \mathsf{d}(\myVec{x}, \myVec{Y^\ast})\right) \right],
\end{equation}
where the expectation is with respect to the \ac{rdf}-achieving reproduction distribution $P_{\myVec{Y^\ast}}$. The $d$-tilted information is the key quantity appearing in the nonasymptotic bounds in~\cite{kostina12_FixedLength}, which are also the starting point of our analysis. Since the joint distribution $P_{\myVec{X}}P_{\myVec{Y^\ast}|\myVec{X}}$ factorizes~\cite[Thm. 6.1]{Polyanskiy_Wu_2025} and the distortion is separable, the $d$-tilted information single-letterizes as 
\begin{equation}
\label{eq:tilted_single_letter}
\jmath(\myVec{x},d)=\sum_{i=1}^n \jmath(x_i,d_i), 
\enspace \text{where} \enspace d_i \triangleq \mathbb{E}[\mathsf{d}_i(X_i,Y_i^\ast)].
\end{equation}
The first moment of the $d$-tilted information satisfies the following identity~\cite{csiszar74_ExtremumInfoTheory}:
\begin{equation}
\label{eq:csiszar}
    \mathbb{R}_n(d) = \frac{1}{n}\E{\jmath(\myVec{X}, d)}.
\end{equation}
For finite-alphabet sources,~\eqref{eq:csiszar} follows from the Karush--Kuhn--Tucker (KKT) conditions~\cite[Eq.~(2.5.12)--(2.5.16)]{berger71_RDBook}. Csisz\'ar~\cite{csiszar74_ExtremumInfoTheory} later extended this identity to abstract alphabets.

Because the first moment of the $d$-tilted information determines the \ac{rdf} through~\eqref{eq:csiszar}, the fluctuations of $\jmath(\myVec{X}, d)$ around its mean play a central role in nonasymptotic analysis. These fluctuations are quantified by the $n$th-order informational dispersion, defined as
\begin{equation}
\label{eq:dispersion}
\mathbb{V}_n(d) \triangleq \frac{1}{n}\var{\jmath(\myVec{X}, d)}.
\end{equation}

We also define the limiting informational \ac{rdf} and source dispersion as 
\begin{align}
    \mathbb{R}(d) &\triangleq \lim_{n\rightarrow\infty} \mathbb{R}_n(d), \label{eq:rd_info} \\
    \mathbb{V}(d) &\triangleq \lim_{n\rightarrow\infty} \mathbb{V}_n(d), \label{eq:disp_info}
\end{align}
whenever these limits exist. For Gaussian sources under quadratic distortion, the limiting \ac{rdf} admits a reverse water-filling characterization \cite{kolmogorov56_RD, gray70_Autoregressive, Hashimoto80_RDAutoregressive, Gray08_NoteOnAutoregressive}. In this paper, we derive a reverse water-filling expression for the source dispersion of stationary Gaussian autoregressive sources, paralleling the corresponding \ac{rdf} characterization.

Next, we introduce a related variational problem first introduced by Blahut~\cite{blahut72_ComputationCapacityRD}, in which the reproduction marginal is fixed in advance. This decouples the choice of the output distribution from the choice of the kernel in~\eqref{eq:rd}. Specifically, for a given source distribution $P_{\myVec{X}}$ and a fixed reproduction distribution $P_{\myVec{Y}}$, define
\begin{equation}
\label{eq:crem}
    \mathbb{R}_n(\myVec{X}, \myVec{Y}, d) \triangleq \inf_{\substack{P_{\myVec{Z}|\myVec{X}} \colon \boldsymbol{\mySet{X}} \mapsto \boldsymbol{\mySet{Y}} \\
    \mathbb{E}[\mathsf{d}(\myVec{X}, \myVec{Z})] \leq d}} \enspace \frac{1}{n}  D(P_{\myVec{Z}|\myVec{X}} \| P_\myVec{Y} | P_\myVec{X}).
\end{equation}

Here, $D(P_{\myVec{Z}|\myVec{X}} \| P_\myVec{Y} | P_\myVec{X})$ denotes the conditional relative entropy. The optimization in~\eqref{eq:crem} upper bounds that in~\eqref{eq:rd}, with equality if and only if $P_\myVec{Y} = P_\myVec{Y^\ast}$. Because the reproduction distribution $P_{\myVec{Y}}$ is fixed in advance, $\mathbb{R}_n(\myVec{X}, \myVec{Y}, d)$ is often more convenient to work with than $\mathbb{R}_n(d)$, hence, it has appeared extensively in the nonasymptotic analysis of lossy compression \cite{kontoyiannis00_PointwiseRedundancy, kostina12_FixedLength, tian21_NonStationary, tian19_Stationary, yang99_RedundancyRD}. Operationally, it corresponds to lossy compression of a source distributed according to $P_{\myVec{X}}$ using a random codebook whose codewords are drawn according to $P_{\myVec{Y}}$ \cite{Dembo02_ConditionalRD}. 

We introduce the corresponding tilted information analogously to~\eqref{eq:d_tilted}: for $\myVec{x}\in\boldsymbol{\mySet{X}}$ and $\lambda>0$, we define~\cite{kostina12_FixedLength}
\begin{equation}
J_\myVec{Y}(\myVec{x}, \lambda)
\triangleq
- \log \mathbb{E} \left[ \exp \left( -\lambda n \mathsf{d}(\myVec{x}, \myVec{Y}) \right) \right],
\end{equation}
where the expectation is with respect to the reproduction marginal. Letting $\lambda_{\myVec{X}, \myVec{Y}}^\ast \triangleq -\mathbb{R}'_n(\myVec{X}, \myVec{Y}, d)$, we obtain
\begin{equation}
    \mathbb{R}_n(\myVec{X}, \myVec{Y}, d) = \frac{1}{n}\E{J_\myVec{Y}(\myVec{X}, \lambda_{\myVec{X}, \myVec{Y}}^\ast)} - \lambda_{\myVec{X}, \myVec{Y}}^\ast d.
\end{equation}
Choosing $P_\myVec{Y} = P_\myVec{Y^\ast}$ in~\eqref{eq:crem} yields 
\begin{equation}
    \jmath(\myVec{x}, d)  =  - \lambda_n^\ast n d + J_\myVec{Y^\ast}(\myVec{x}, \lambda_n^\ast).
\end{equation}

\subsection{Operational Fundamental Limits}
Source coding theorems~\cite{shannon59_RD, berger68_AbstractAlphabet, gallager68_book, Goblick69_StrongMix, berger71_RDBook, Gray74_NoErgodic} characterize the asymptotic relationship between the
operational \ac{rdf} and its
informational counterpart. In particular, for sources for which the source coding theorem is established, for every $\epsilon \in (0,1)$ and $d > d_{\min}$, we have~\cite[Eq. 40]{kostina12_FixedLength}
\begin{equation}
    \lim_{n\rightarrow\infty} R(n,d,\epsilon) = \mathbb{R}(d),
\end{equation}
where $R(n,d,\epsilon)$ is defined in~\eqref{eq:rnd_oper}, and $\mathbb{R}(d)$ is defined in~\eqref{eq:rd_info}. However, this first-order characterization does not quantify the speed of convergence.

Kostina and Verdú~\cite{kostina12_FixedLength} provide second-order asymptotics for \ac{iid} sources by deriving the Gaussian approximation in~\eqref{eq:rnd_approx}, which shows that the second-order term scales as $\frac{1}{\sqrt{n}}$. The optimal constant governing this second-order behavior is captured by a source characteristic called the operational source dispersion. Formally, the operational dispersion is defined as
\begin{equation}
\label{eq:disp_oper}
    V(d) \triangleq \lim_{\epsilon \rightarrow 0} \lim_{n \rightarrow \infty} n \left( \frac{R(n,d, \epsilon) - \mathbb{R}(d)}{Q^{-1}(\epsilon)}\right)^2.
\end{equation}
when the limit exists. Hence, the Gaussian approximation~\eqref{eq:rnd_approx} identifies the operational dispersion $V(d)$ defined in~\eqref{eq:disp_oper} as the informational dispersion $\mathbb{V}(d)$ in~\eqref{eq:disp_info} for \ac{iid} sources. While the class of sources for which the informational and operational dispersions coincide is not fully understood, we show that they coincide for stationary Gaussian autoregressive sources.

\section{Main Results}
\label{sec:main_results}

We first present a general Gaussian approximation theorem for independent, but not necessarily identically distributed sources over abstract alphabets. We consider sources satisfying the following regularity assumptions:
\begin{enumerate}[label=(\roman*)]

\item
\label{ass:bdd_disp}
The dispersion is uniformly bounded away from zero for sufficiently large $n\in \mathbb{N}$:
\begin{equation}
0<\liminf_{n\rightarrow\infty}\mathbb{V}_n(d).
\end{equation}

\item
\label{ass:dist_lev}
The distortion level satisfies $d\in(d_{\min},d_{\max})$, where $d_{\min}$ is defined in~\eqref{eq:dmin} and 
\begin{equation}
\bar d_n \triangleq \inf_{\myVec{y}\in\boldsymbol{\mySet{Y}}} \E{\mathsf{d}(\myVec{X}, \myVec{y})},
\qquad
d_{\max}\triangleq \liminf_{n\rightarrow\infty} \bar d_n.
\end{equation}

\item
\label{ass:uni_moments}
There exists a constant $K_0<\infty$ such that, for all sufficiently large $n\in\mathbb{N}$,
\begin{equation}
\label{eq:K0}
\frac{1}{n}\sum_{i=1}^n \mathbb{E}\left[\mathsf{d}_i^6(X_i,Y_i^\ast)\right] \le K_0,
\end{equation}
where the expectation is with respect to $P_{X_i}\times P_{Y_i^\ast}$.

\end{enumerate}
While the assumptions on the distortion level and bounded moments are standard in the literature~\cite{kostina12_FixedLength, yang99_RedundancyRD}, assumption~\ref{ass:bdd_disp} 
is specific to this paper and it controls the asymptotic behavior of the source. The main result of this paper is that, under these assumptions, the minimum achievable rate admits the following Gaussian approximation.
\begin{theorem}[Gaussian approximation]
\label{thm:gaussian_approximation}
Consider a source $\{X_i\}_{i=1}^\infty$ with independent but not necessarily identically distributed components under a separable distortion measure. Suppose that assumptions~\ref{ass:bdd_disp}--\ref{ass:uni_moments} hold. Then for every $\epsilon\in(0,1)$, the minimum achievable rate satisfies
\begin{equation}
\label{eq:thm_rnd}
R(n,d,\epsilon) = \mathbb{R}_n(d) + \sqrt{\frac{\mathbb{V}_n(d)}{n}}Q^{-1}(\epsilon)
+ O\left(\frac{\log n}{n}\right),
\end{equation}
where $\mathbb{R}_n(d)$ is the informational $n$th-order \ac{rdf} defined in~\eqref{eq:rd} and $\mathbb{V}_n(d)$ is the source dispersion in~\eqref{eq:dispersion}.
\end{theorem}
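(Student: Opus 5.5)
We split the proof into a converse and an achievability bound, both built on the nonasymptotic bounds of~\cite{kostina12_FixedLength}. In both directions the key fact is the single-letterization~\eqref{eq:tilted_single_letter}: $\jmath(\myVec{X},d)=\sum_{i=1}^n \jmath(X_i,d_i)$ is a sum of independent, non-identically distributed terms. By~\eqref{eq:csiszar} its mean is $n\mathbb{R}_n(d)$, and by~\eqref{eq:dispersion} its variance is $n\mathbb{V}_n(d)$. Assumption~\ref{ass:bdd_disp} keeps the variance of order $n$. Assumption~\ref{ass:uni_moments}, combined with a bound on $\jmath(x_i,d_i)$ in terms of $\lambda_n^\ast$ and the distortions, gives a uniformly bounded average third absolute moment. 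Here we first show that $\lambda_n^\ast$ stays bounded, using $d\in(d_{\min},d_{\max})$ from assumption~\ref{ass:dist_lev}. The Berry--Esseen theorem for non-identical summands then yields
\begin{equation}
\Prob{\jmath(\myVec{X},d)\ge n\mathbb{R}_n(d)+\sqrt{n\mathbb{V}_n(d)}\,t}=Q(t)+O\left(\tfrac{1}{\sqrt n}\right)
\end{equation}
uniformly in $t$.

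For the converse, we use the bound $\epsilon\ge \Prob{\jmath(\myVec{X},d)\ge \log M+\gamma}-e^{-\gamma}$ from~\cite{kostina12_FixedLength}, which holds for any code. Choosing $\gamma=\tfrac12\log n$ and $\log M=n\mathbb{R}_n(d)+\sqrt{n\mathbb{V}_n(d)}Q^{-1}(\epsilon)-O(\log n)$, and applying Berry--Esseen together with a Taylor expansion of $Q^{-1}$, contradicts the existence of such a code. This establishes the lower bound in~\eqref{eq:thm_rnd}.

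For achievability, we use the random-coding bound of~\cite{kostina12_FixedLength}:
\begin{equation}
\epsilon\le \inf_{P_{\myVec{Y}}}\E{\exp\bigl(-M\,P_{\myVec{Y}}(B_d(\myVec{X}))\bigr)},
\end{equation}
where $B_d(\myVec{x})$ is the distortion $d$-ball. We take $P_{\myVec{Y}}=P_{\myVec{Y^\ast}}$, which is a product measure. The crux is a lower bound on the ball probability of the form
\begin{equation}
\log\frac{1}{P_{\myVec{Y^\ast}}(B_d(\myVec{x}))}\le \jmath(\myVec{x},d)+O(\log n),
\end{equation}
valid for all $\myVec{x}$ in a typical set of probability at least $1-O(1/\sqrt n)$. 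Following the strategy of~\cite{kostina12_FixedLength}, we prove this by a change of measure to the tilted conditional distribution $P_{\myVec{Z}|\myVec{X}=\myVec{x}}\propto e^{-\lambda n\mathsf{d}(\myVec{x},\myVec{Z})}P_{\myVec{Y^\ast}}$. Under this distribution $n\mathsf{d}(\myVec{x},\myVec{Z})$ is a sum of independent terms, so a local Berry--Esseen argument bounds the probability that it falls in $[nd-\delta, nd]$ from below by order $1/\sqrt n$. The cost is a $\tfrac12\log n$ penalty together with the discrepancy between $J_{\myVec{Y^\ast}}(\myVec{x},\lambda)$ evaluated at the $\myVec{x}$-dependent tilting $\lambda(\myVec{x})$ and at $\lambda_n^\ast$.

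The main obstacle is controlling that discrepancy. For i.i.d. sources, \cite{kostina12_FixedLength} expands around the empirical distribution of $\myVec{x}$, an approach unavailable here. Instead we introduce a point-mass product proxy $\prod_i\delta_{x_i}$. Relative to this proxy, $\mathbb{R}_n$, its derivative, and the tilted quantities are all sums over coordinates of functions of $x_i$ alone. We define the typical set by requiring that these sums (the first and second derivatives of $\sum_i \log\mathbb{E}[e^{-\lambda\mathsf{d}_i(x_i,Y_i^\ast)}]$ at $\lambda_n^\ast$) lie within $O(\sqrt{\log n/n})$ of their means. Chebyshev or Berry--Esseen, using the sixth-moment assumption~\ref{ass:uni_moments}, shows that the typical set has the required probability. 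A second-order Taylor expansion in $\lambda$ then shows that the first-order term vanishes at $\lambda_n^\ast$ up to fluctuations, leaving a remainder of $O(\log n)$ on the typical set.

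Finally, we substitute $\log M=n\mathbb{R}_n(d)+\sqrt{n\mathbb{V}_n(d)}Q^{-1}(\epsilon)+O(\log n)$ and use $e^{-M p}\le e^{-n^{c}}$ whenever $\log M\ge\log(1/p)+c\log n$. Combining this with the Berry--Esseen bound and the atypical-set probability gives excess-distortion probability at most $\epsilon$, which proves the upper bound in~\eqref{eq:thm_rnd}.
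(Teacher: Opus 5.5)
Your overall route is the paper's: a converse via Kostina--Verd\'u's Theorem~7 with $\gamma=\frac12\log n$ and Berry--Esseen for non-identical summands, and an achievability via the random-coding bound at $P_{\myVec{Y^\ast}}$, the change-of-measure ball bound, the point-mass product proxy, a Berry--Esseen shell estimate, and the quadratic bound $\frac{(\Sigma')^2}{2\Sigma''}$ on the tilting discrepancy.

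\textbf{Gap 1: the first-derivative condition in your typical set.} The step that fails as written is the probability of this condition, $|\Sigma'(\myVec{x})|\lesssim\sqrt{n\log n}$ with $\Sigma'(\myVec{x})=\sum_i\bigl(J'_{Y_i^\ast}(x_i,\lambda_n^\ast)-d_i\bigr)$. You need it with probability $1-O(1/\sqrt n)$.
\begin{itemize}
\item Chebyshev gives only $O(1/\log n)$, and a sixth-moment Markov bound only improves this to $O(\log^{-3} n)$.
\item Berry--Esseen gives $O(1/\sqrt n)$ only if $V_n'=\frac1n\sum_i\var{J'_{Y_i^\ast}(X_i,\lambda_n^\ast)}$ is bounded away from zero uniformly in $n$. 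Nothing in the hypotheses guarantees this. Assumption (i) concerns $\var{\jmath(\myVec{X},d)}$, which is a different function of the $X_i$.
\item Unlike the i.i.d.\ case, you cannot split into ``variance zero, so $\Sigma'\equiv 0$'' versus ``variance positive, so Berry--Esseen with a fixed constant.'' Here $V_n'$ may be positive and still tend to zero.
\end{itemize}
This is exactly where the paper departs from Kostina--Verd\'u. It uses the threshold $\sqrt{(V_n'+\epsilon')\,n\log n}$, truncates at $b_n=n^{1/4}$, and applies Bernstein's inequality to the truncated sum. This gives $n^{-1/2}$ because $3(V_n'+\epsilon')\ge\frac12(6V_n'+2\epsilon')$. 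The truncation is paid for by a union bound plus sixth-moment Markov, $\sum_i\E{|W_i|^6}/n^{3/2}=O(1/\sqrt n)$. Alternatively, adding independent $\mathcal{N}(0,\epsilon')$ noise to each summand before applying Berry--Esseen would also repair the step, since $\Prob{S>t}\le 2\,\Prob{S+G>t}$.

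\textbf{Gap 2: the uniform curvature lower bound.} Your sketch leaves implicit a uniform lower bound on the average tilted curvature,
\[
\mu_n''=\tfrac1n\E{|J''_{\myVec{Y^\ast}}(\myVec{X},\lambda_n^\ast)|}\ge\kappa_0>0
\]
for all large $n$, together with a version that holds uniformly for $|\lambda-\lambda_n^\ast|<\delta$. Without it, asking the second-derivative sums to stay near their means accomplishes nothing. Three steps depend on it:
\begin{itemize}
\item the shell Berry--Esseen bound needs the tilted variance at $\hat\lambda_n^\ast(\myVec{x})$ to be of order $n$;
\item the bound $\frac{(\Sigma')^2}{2\Sigma''}=O(\log n)$ needs $\Sigma''\gtrsim n$;
\item localizing $\hat\lambda_n^\ast(\myVec{x})$ near $\lambda_n^\ast$ needs $\mathbb{R}_n''(\myVec{X},\myVec{Y^\ast},\cdot)=1/\mu_n''$ to be bounded.
\end{itemize}
Assumption (i) does not supply this bound either. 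The paper derives it from $d<d_{\max}$ by truncating the distortion at a level $M_1$ and invoking the moment assumption (iii). For this condition a constant relative deviation suffices, so Chebyshev, which gives $O(1/n)$, is enough there; the $\sqrt{\log n/n}$ scale is not needed.
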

The proof of Theorem~\ref{thm:gaussian_approximation} consists of a converse and an achievability part, presented in Sections~\ref{sec:converse} and~\ref{sec:achievability}, respectively. Although the $n$th-order \ac{rdf} and the source dispersion in~\eqref{eq:thm_rnd} are defined using only the first $n$ components of the source, the remainder term relies on regularity assumptions~\ref{ass:bdd_disp}--\ref{ass:uni_moments}
imposed on the entire process. This is necessary because Theorem~\ref{thm:gaussian_approximation} is an asymptotic statement, and the proof of the theorem requires uniform bounds over blocklength $n$.

The rest of this section derives three consequences of Theorem~\ref{thm:gaussian_approximation}, yielding Gaussian approximation results for i.i.d. sources over abstract alphabets, Gaussian sources with memory, and finite-memory stationary Gaussian autoregressive sources.

\subsection{Independent and Identically Distributed Sources}

In the \ac{iid} setting, Theorem~\ref{thm:gaussian_approximation} recovers the result of~\cite[Thm.~12]{kostina12_FixedLength}:
\begin{corollary}
\label{corr:iid}
    Consider \ac{iid} source $\{X_i\}_{i=1}^\infty$ under distortion measure $\mathsf{d}(\myVec{x}, \myVec{y}) = \aver{1}{\mathsf{d}(x_i, y_i)}$ which is separable with identical single-letter distortion functions. Suppose that assumptions~\ref{ass:bdd_disp}--\ref{ass:uni_moments} hold. Then for every $\epsilon\in(0,1)$, the minimum achievable rate satisfies
    \begin{equation}
    R(n,d,\epsilon) = \mathbb{R}(d) + \sqrt{\frac{\mathbb{V}(d)}{n}}Q^{-1}(\epsilon)
    + O\left(\frac{\log n}{n}\right).
    \end{equation}
\end{corollary}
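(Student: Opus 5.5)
The corollary follows from Theorem~\ref{thm:gaussian_approximation} once we show that, for an \ac{iid} source under a separable distortion with identical single-letter functions, the $n$th-order quantities do not depend on $n$: $\mathbb{R}_n(d)=\mathbb{R}_1(d)=\mathbb{R}(d)$ and $\mathbb{V}_n(d)=\mathbb{V}_1(d)=\mathbb{V}(d)$. The assumptions~\ref{ass:bdd_disp}--\ref{ass:uni_moments} are assumed directly in the corollary, so Theorem~\ref{thm:gaussian_approximation} applies verbatim. Substituting these identities into~\eqref{eq:thm_rnd} then gives the claim.

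\textbf{Step 1: the rate-distortion function tensorizes.} We show $\mathbb{R}_n(d)=\mathbb{R}_1(d)$.
\begin{itemize}
\item For the upper bound, take the product kernel $P_{Y^\ast|X}^{\otimes n}$, where $P_{Y^\ast|X}$ is the single-letter minimizer. It satisfies the distortion constraint and gives $\frac1n I(\myVec X;\myVec Y)=\mathbb{R}_1(d)$.
\item For the lower bound, use independence of the $X_i$: $I(\myVec X;\myVec Y)\ge\sum_i I(X_i;Y_i)\ge\sum_i \mathbb{R}_1(\E{\mathsf d(X_i,Y_i)})$. Then convexity and monotonicity of $\mathbb{R}_1$ give $\frac1n\sum_i\mathbb{R}_1(\E{\mathsf d(X_i,Y_i)})\ge\mathbb{R}_1(d)$.
\end{itemize}
Hence $\mathbb{R}_n(d)=\mathbb{R}_1(d)$ for every $n$, so the limit $\mathbb{R}(d)$ exists and equals $\mathbb{R}_1(d)$.

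\textbf{Step 2: the optimizer and the per-letter distortions.} By uniqueness, restriction~\ref{res:unique}, the $n$-letter minimizer must be the product kernel $P_{Y^\ast|X}^{\otimes n}$. Consequently each per-letter distortion in~\eqref{eq:tilted_single_letter} is $d_i=\E{\mathsf d(X_i,Y_i^\ast)}=d$. Differentiating $\mathbb{R}_n(d)=\mathbb{R}_1(d)$ also gives $\lambda_n^\ast=\lambda_1^\ast$.

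\textbf{Step 3: the dispersion is constant in $n$.} By~\eqref{eq:tilted_single_letter}, $\jmath(\myVec X,d)=\sum_i\jmath(X_i,d)$. Each summand is the same function of the corresponding $X_i$, and the $X_i$ are \ac{iid}, so the summands are \ac{iid}. Therefore
\begin{equation}
\mathbb{V}_n(d)=\tfrac1n\var{\jmath(\myVec X,d)}=\var{\jmath(X_1,d)}=\mathbb{V}(d).
\end{equation}

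The only delicate point is the identification $d_i=d$ for all $i$, which is needed so that the single-letterized tilted information in~\eqref{eq:tilted_single_letter} is the same function in every coordinate. This comes from the uniqueness of the $n$-letter minimizer together with the fact that the product kernel attains it. Once this is established, the rest is bookkeeping.
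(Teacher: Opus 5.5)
Your proposal is correct and follows the paper's proof, which simply invokes Theorem~\ref{thm:gaussian_approximation} after observing that $\mathbb{R}_n(d)=\mathbb{R}(d)$ and $\mathbb{V}_n(d)=\mathbb{V}(d)$ for every $n$ in the i.i.d.\ case; you supply the tensorization details that the paper leaves implicit. One small remark: the identification $d_i=d$ uses that the single-letter distortion constraint is active, which holds because $d\in(d_{\min},d_{\max})$, but you do not actually need it, since with $\lambda_n^\ast=\lambda_1^\ast$ and $P_{\myVec{Y^\ast}}=P_{Y^\ast}^{\otimes n}$ the decomposition of $\jmath(\myVec{x},d)$ into i.i.d.\ single-letter terms can be read directly off its definition.
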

\begin{proof}
The result follows by applying Theorem~\ref{thm:gaussian_approximation}, and noting that for an \ac{iid} source and a separable distortion measure, the $n$-letter \ac{rdf} and source dispersion reduce to $\mathbb{R}_n(d) = \mathbb{R}(d)$ and $\mathbb{V}_n(d) = \mathbb{V}(d)$ for all $n \in \mathbb{N}$. 
\end{proof}

Comparing Corollary~\ref{corr:iid} with~\cite[Thm.~12]{kostina12_FixedLength}, we recover the same first and second-order terms. 
The assumptions~\ref{ass:dist_lev}--\ref{ass:uni_moments} simplify to their single-letter versions, which are equivalent to assumptions of~\cite[Thm.~12]{kostina12_FixedLength}. Furthermore, we are able to relax the ninth-order moment assumption in~\cite[Thm.~12]{kostina12_FixedLength} to a sixth-moment assumption in~\ref{ass:uni_moments} by applying Chebyshev's inequality, rather than the Berry--Esseen Theorem, when bounding the variance of $J''_{\myVec{Y^\ast}}(\myVec{X}, \lambda)$. The assumption~\ref{ass:bdd_disp} excludes the zero-dispersion regime, which is handled separately in~\cite{kostina12_FixedLength} both in the achievability and converse proofs. 

\subsection{Gaussian Sources with Memory}
\label{subsec:decor}
To specialize Theorem~\ref{thm:gaussian_approximation} to Gaussian sources with memory, we first recall the standard decorrelation technique, which reveals a useful equivalence between lossy compression of Gaussian sources with memory and that of independent but not necessarily identically distributed Gaussian sources.

Fix $n\in\mathbb{N}$ and let $\myVec{X} \sim \mathcal{N}(0, \mathsf{\Sigma}_n)$ where $\mathsf{\Sigma}_n \in \mathbb{R}^{n\times n}$ is the covariance matrix with eigendecomposition 
\begin{equation}
    \mathsf{\Sigma}_n = \mathsf{Q}_n^{\top}\mathsf{\Lambda}_n \mathsf{Q}_n.
\end{equation} Here, $\mathsf{Q}_n \in \mathbb{R}^{n\times n}$ is an orthogonal matrix and $\mathsf{\Lambda}_n = \mathrm{diag}(\sigma_{n,1}^2, \ldots, \sigma_{n,n}^2)$. Let 
\begin{equation}
    \label{eq:decor}
    \myVec{U} \triangleq \mathsf{Q}_n \myVec{X}
\end{equation}
be the decorrelated process. Since $\mathsf{\Lambda}_n$ is diagonal, $\myVec{U} \sim \mathcal{N}(0, \mathsf{\Lambda}_n)$ has independent Gaussian components. Furthermore, because orthogonal transformations preserve the squared error, the distortion is invariant under $\mathsf{Q}_n$. Thus, any $(n, M, d, \epsilon)$ code for $\myVec{U}$ can be transformed into a code for $\myVec{X}$ with the same parameters, and this equivalence also
extends to the $d$-tilted information: for every
$\myVec{x}\in\mathbb{R}^n$ and $\myVec{u}=\mathsf{Q}_n\myVec{x}$,
\begin{equation}
\label{eq:equivalence}
    \jmath(\myVec{u}, d) = \jmath(\myVec{x}, d).
\end{equation}
Applying Theorem~\ref{thm:gaussian_approximation} to this decorrelated representation yields the following corollary, which generalizes \cite[Thm.~12]{kostina12_FixedLength} to a class of Gaussian sources with memory.

\begin{corollary} 
\label{corr:gaussian}
Consider a zero-mean Gaussian process $\{X_i\}_{i=1}^\infty$ under mean squared error distortion. Suppose that assumptions~\ref{ass:bdd_disp}--\ref{ass:dist_lev} 
hold. Let $\{\sigma_{n,i}^2\}_{i=1}^n$ denote the eigenvalues of the $n$-letter covariance matrix of the source, and define $\nu_{n,i} \triangleq \max(0, \sigma_{n,i}^2 -\theta_n^\ast)$, where $\theta^\ast_n > 0$ is the water-level chosen to satisfy
\begin{equation}
\label{eq:corr_distortion}
d = \frac{1}{n} \sum_{i=1}^n \min(\theta^\ast_n, \sigma_{n,i}^2).
\end{equation}
Assume that there exists a constant $K'<\infty$ such that, for all sufficiently large $n\in\mathbb{N}$
\begin{equation}
\label{eq:sixth_moment}
    \frac{1}{n}\sum_{i=1}^n (\sigma_{n,i}^2 +  \nu_{n,i})^6 \le K'.
\end{equation}
Then for every $\epsilon\in(0,1)$, the minimum achievable rate satisfies
\begin{equation}
    \label{eq:corr_rnd}
    R(n,d,\epsilon) = \mathbb{R}_n(d) + \sqrt{\frac{\mathbb{V}_n(d)}{n}} Q^{-1}(\epsilon) + O\left(\frac{\log n}{n}\right),
\end{equation}
where the \ac{rdf} $\mathbb{R}_n(d)$ and the source dispersion $\mathbb{V}_n(d)$ are given by reverse water-filling 
\begin{align}
    \mathbb{R}_n(d) &= \frac{1}{n} \sum_{i=1}^n \max \left( 0, \frac{1}{2} \log \frac{\sigma_{n,i}^2}{\theta^\ast_n} \right), \label{eq:corr_rd} \\
    \mathbb{V}_n(d) &= \frac{1}{n} \sum_{i=1}^n \frac{1}{2} \min \left( 1, \left( \frac{\sigma_{n,i}^2}{\theta^\ast_n} \right)^2 \right). \label{eq:corr_disp}
\end{align} 
\end{corollary}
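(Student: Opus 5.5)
We reduce Corollary~\ref{corr:gaussian} to Theorem~\ref{thm:gaussian_approximation} through the decorrelation in~\eqref{eq:decor}. For each $n$, apply $\myVec{U}=\mathsf{Q}_n\myVec{X}$. The operational problem for $\myVec{X}$ then becomes the operational problem for a source $\myVec{U}$ with independent components $U_i\sim\mathcal{N}(0,\sigma_{n,i}^2)$ under separable squared error. Since $\mathsf{Q}_n$ is orthogonal, codes map bijectively with the same $(n,M,d,\epsilon)$. By~\eqref{eq:equivalence}, $\mathbb{R}_n(d)$ and $\mathbb{V}_n(d)$ also coincide for the two sources. One subtlety is that the component distributions of $\myVec{U}$ depend on $n$: we have a triangular array rather than a single sequence $\{X_i\}$. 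I would check that the proof of Theorem~\ref{thm:gaussian_approximation} only uses, for each fixed $n$, the product structure together with bounds that are uniform in $n$ (Berry--Esseen constants and moment bounds). Assumptions~\ref{ass:bdd_disp}--\ref{ass:uni_moments} are themselves stated for each $n$, so the theorem applies verbatim to the array.

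\textbf{Step 1: computing the RDF-achieving kernel.} For independent Gaussians under MSE, the standard solution is as follows. Each component with $\sigma_{n,i}^2>\theta_n^\ast$ has $Y_i^\ast\sim\mathcal{N}(0,\nu_{n,i})$ and the backward channel $U_i=Y_i^\ast+N_i$ with $N_i\sim\mathcal{N}(0,\theta_n^\ast)$. Components with $\sigma_{n,i}^2\le\theta_n^\ast$ have $Y_i^\ast=0$. This gives $d_i=\min(\theta_n^\ast,\sigma_{n,i}^2)$, the constraint~\eqref{eq:corr_distortion}, and~\eqref{eq:corr_rd}. Uniqueness, restriction~\ref{res:unique}, follows from strict convexity for Gaussians. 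The slope is $\lambda_n^\ast=1/(2\theta_n^\ast)$, because all active components share the same slope by the KKT conditions.

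\textbf{Step 2: computing the tilted information and the dispersion.} Evaluate~\eqref{eq:tilted_single_letter} with a Gaussian integral.
\begin{itemize}
\item For active $i$, $\jmath(u_i,d_i)=\frac12\log\frac{\sigma_{n,i}^2}{\theta_n^\ast}+\frac{u_i^2-\sigma_{n,i}^2}{2\sigma_{n,i}^2}$.
\item For inactive $i$, $\jmath(u_i,d_i)=\frac{u_i^2-\sigma_{n,i}^2}{2\theta_n^\ast}$.
\end{itemize}
Since $\var{U_i^2}=2\sigma_{n,i}^4$, the variances are $\frac12$ and $\frac12(\sigma_{n,i}^2/\theta_n^\ast)^2$, respectively. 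This yields~\eqref{eq:corr_disp}, and~\eqref{eq:csiszar} rechecks~\eqref{eq:corr_rd}.

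\textbf{Step 3: verifying assumption~\ref{ass:uni_moments}.} Under $P_{U_i}\times P_{Y_i^\ast}$, the difference $U_i-Y_i^\ast$ is Gaussian with variance $\sigma_{n,i}^2+\nu_{n,i}$. Hence
\[
\E{(U_i-Y_i^\ast)^{12}}=10395\,(\sigma_{n,i}^2+\nu_{n,i})^6 ,
\]
and averaging over $i$ and using~\eqref{eq:sixth_moment} gives~\eqref{eq:K0} with $K_0=10395K'$.

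Assumptions~\ref{ass:bdd_disp}--\ref{ass:dist_lev} are assumed directly. One further check is that $d>d_{\min}$ forces $\theta_n^\ast$ to be bounded away from $0$ uniformly in $n$, so that $\lambda_n^\ast$ stays bounded. I would obtain this from $\limsup_n\mathbb{R}_n(d)<\infty$: if $\theta_n^\ast\to0$ along a subsequence, then $\mathbb{R}_n(d')$ would blow up for some $d'\in(d_{\min},d)$, a contradiction. With these verifications, invoking Theorem~\ref{thm:gaussian_approximation} gives~\eqref{eq:corr_rnd}.

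The main obstacle is the triangular-array issue together with the uniform control of $\theta_n^\ast$. I expect both to be routine once one confirms that the constants in the proof of Theorem~\ref{thm:gaussian_approximation} depend only on $K_0$, on $\liminf_n\mathbb{V}_n(d)$, and on how far $d$ sits inside $(d_{\min},d_{\max})$.
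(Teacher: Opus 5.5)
Your proposal is correct and follows essentially the same route as the paper. You decorrelate to $\myVec{U}$, use $V_i^\ast\sim\mathcal{N}(0,\nu_{n,i})$ with $d_i=\min(\theta_n^\ast,\sigma_{n,i}^2)$, verify assumption~\ref{ass:uni_moments} through the Gaussian twelfth moment $11!!\,(\sigma_{n,i}^2+\nu_{n,i})^6$, and read off \eqref{eq:corr_rd}--\eqref{eq:corr_disp} from the explicit $d$-tilted information (which you derive directly where the paper cites Tian and Kostina). Your triangular-array remark identifies a subtlety the paper passes over when applying Theorem~\ref{thm:gaussian_approximation} to $\myVec{U}$, and your extra check that $\theta_n^\ast$ stays bounded away from zero is already supplied by Lemma~\ref{lem:bdd_curv}, since $\lambda_n^\ast=1/(2\theta_n^\ast)$.
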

\begin{proof}
By the decorrelation equivalence established in the paragraph preceding the corollary, it suffices to prove the claim for the vector $\myVec{U}$ defined in~\eqref{eq:decor}, where $\myVec{U}\sim\mathcal{N}(0,\mathsf{\Lambda}_n)$ has independent components with variances $\sigma_{n,1}^2,\ldots,\sigma_{n,n}^2$.

Let $\myVec{V}^\ast$ be the \ac{rdf}-achieving reproduction random vector for the decorrelated source $\myVec{U}$. It is well-known~\cite[Thm. 10.3.3]{cover2006_book} that $\myVec{V^\ast}$ has also independent coordinates and satisfies $V_i^\ast \sim \mathcal{N}(0, \nu_{n,i})$, so that the distortion allocated to the $i$th component is
\begin{equation}
    d_i = \min(\theta^\ast_n, \sigma_{n,i}^2).
\end{equation}
Taking the expectation with respect to the product distribution of
$U_i$ and $V_i^\ast$, we have
$U_i-V_i^\ast\sim\mathcal{N}(0,\sigma_{n,i}^2+\nu_{n,i})$. Therefore,
\begin{align}
\frac{1}{n}\sum_{i=1}^n \mathbb{E}\bigl[\mathsf{d}_i^6(U_i,V_i^\ast)\bigr]
&= \frac{1}{n}\sum_{i=1}^n \mathbb{E}\bigl[(U_i - V_i^\ast)^{12}\bigr] \\ 
&= \frac{1}{n}\sum_{i=1}^n 11!!\,(\sigma_{n,i}^2 + \nu_{n,i})^6
\label{eq:18moment} \\
&\le 11!!\,K',
\end{align}
where~\eqref{eq:18moment} follows from the Gaussian even moment formula. This implies that assumption~\ref{ass:uni_moments} holds for $\myVec{U}$ with $K_0=11!!\,K'$.

It remains to identify the $n$th-order \ac{rdf} and the source dispersion. The $d$-tilted information for each component can be computed explicitly from the distribution of $\myVec{V}^{\ast}$, as characterized in
\cite[Lem.~7]{tian19_Stationary}
\begin{equation}
\label{eq:tilted_gauss}
    \jmath(u_i, d_i) = \frac{\min(\theta^\ast_n, \sigma_{n,i}^2)}{2 \theta^\ast_n} \left( \frac{u_i^2}{\sigma_{n,i}^2} - 1 \right) + \frac{1}{2} \max \left( 0, \log \frac{\sigma_{n,i}^2}{\theta^\ast_n} \right).
\end{equation}
Substituting this into the single-letterized form in~\eqref{eq:tilted_single_letter} with $x_i$ replaced by $u_i$ and taking the expectation and variance yields~\eqref{eq:corr_rd} and \eqref{eq:corr_disp}, respectively. Applying Theorem~\ref{thm:gaussian_approximation} to $\myVec{U}$ yields the expansion in \eqref{eq:corr_rnd}. 
\end{proof}

Corollary~\ref{corr:gaussian} yields a Gaussian approximation expressed directly in terms of the eigenvalues of the $n$-letter covariance matrix. From an operational source coding viewpoint, the result is analogous in spirit to Gray~\cite[Thm.~2]{gray70_Autoregressive}: the corollary should not be interpreted as saying that one can code a long sequence by dividing it into blocks of length $n$ and using the same code on every block, since subsequent blocks need not have the same distribution. Rather it is strictly "one-shot", meaning that while the corollary is about the compression of a length-$n$ source vector, the statement is an asymptotic statement, and the remainder term relies on the regularity assumptions~\ref{ass:bdd_disp}-\ref{ass:uni_moments}
imposed on the entire process. Hence, the corollary is a statement about the entire Gaussian process $\{X_i\}_{i=1}^\infty$, not just about the $n$-letter source vector. 
For a given $n$-letter source, if it can be extended to a process $\{X_i\}_{i=1}^\infty$ in such a way that the assumptions are satisfied, then the corollary applies to the extended source process.

\begin{figure*}[!t]
    \centering

    \subfloat[Spectrum $S(\omega)$ in~\eqref{eq:g}.]{%
        \includegraphics[width=0.31\textwidth]{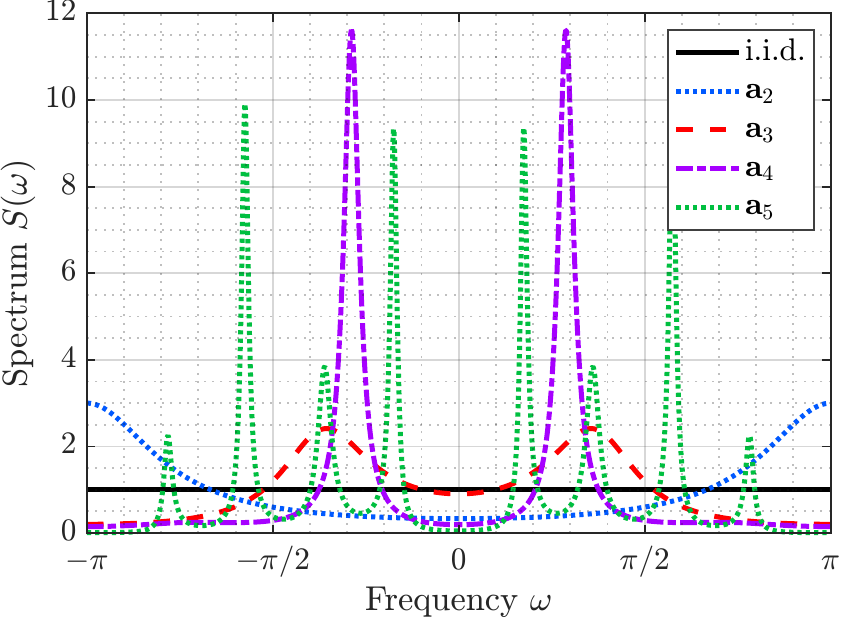}
        \label{fig:spectrum}
    }
    \hfill
    \subfloat[Rate-distortion $\mathbb{R}(d)$ in~\eqref{eq:GAR_RD}.]{%
        \includegraphics[width=0.31\textwidth]{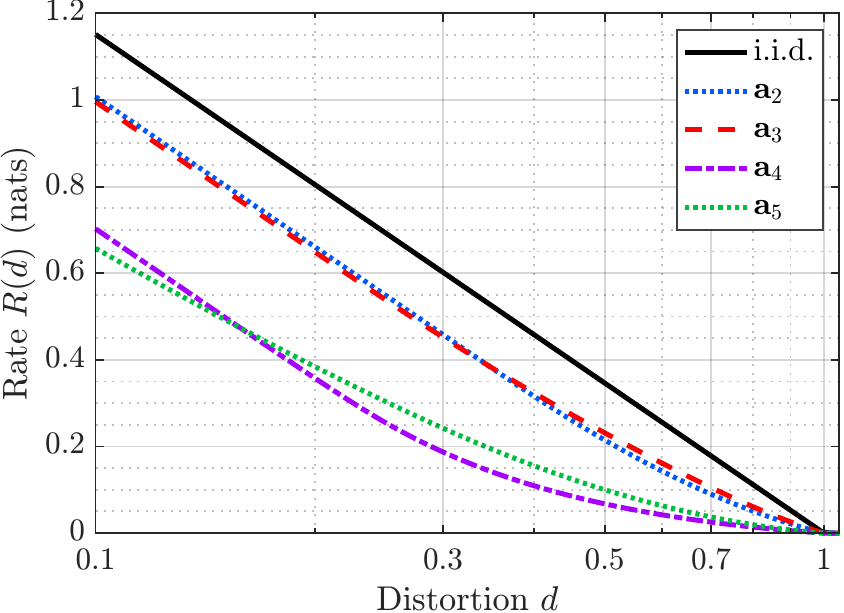}
        \label{fig:rate_distortion}
    }
    \hfill
    \subfloat[Dispersion-distortion $\mathbb{V}(d)$ in~\eqref{eq:GAR_VD}.]{%
        \includegraphics[width=0.31\textwidth]{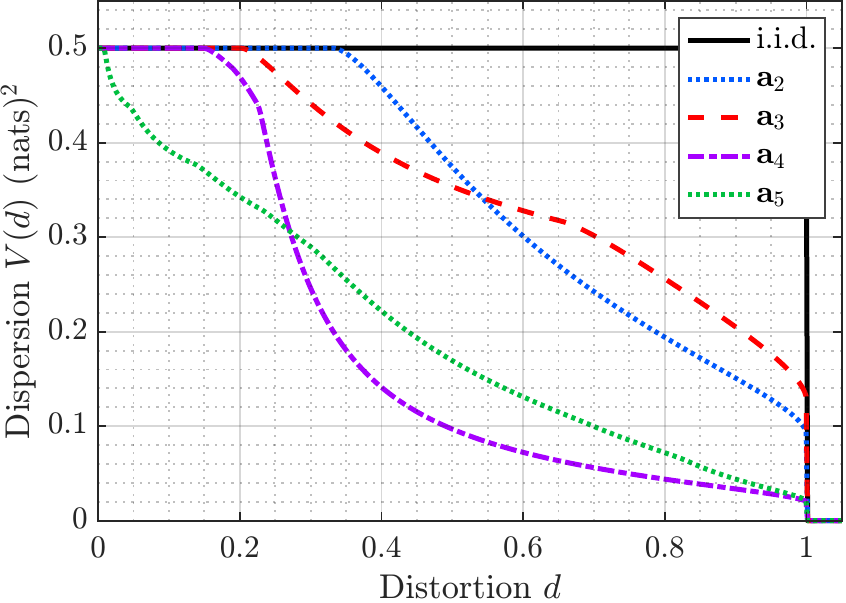}
        \label{fig:dispersion}
    }

    \caption{Comparison of the spectrum and corresponding rate-distortion and dispersion characteristics for five stationary Gaussian autoregressive sources with a common maximum distortion $d_{\max} = 1$: i.i.d. Gaussian source, AR(1) with $\myVec{a}_2 = 0.5$, AR(2) with $\myVec{a}_3 = [-0.5, 0.4]$, AR(4) with $\myVec{a}_4 = [-0.4, 0.3, 0.3, 0.25]$, and AR(8) with $\myVec{a}_5 = [-0.5, 0.9, -0.5, 0.9, -0.4, 0.8, -0.4, 0.7]$. The corresponding innovation variances are $\sigma_1^2 = 1$, $\sigma_2^2 = 0.75$, $\sigma_3^2 = 0.73$, $\sigma_4^2 = 0.4$, and $\sigma_5^2 = 0.3$.
    }
    \label{fig:three_plots}
\end{figure*}
\subsection{Stationary Gaussian Autoregressive Source}
Whenever the $n$th-order \ac{rdf} and source dispersion converge to their limiting counterparts at a suitable rate, Theorem~\ref{thm:gaussian_approximation} yields the corresponding limiting second-order characterization. We establish this limiting characterization for stationary finite-memory Gaussian autoregressive sources. In particular, we obtain reverse water-filling expressions for both the first- and second-order terms in terms of the source power spectrum. Consequently, the following corollary identifies the operational dispersion in~\eqref{eq:disp_oper} with the limiting informational dispersion in~\eqref{eq:disp_info} for stationary Gaussian autoregressive sources:
\begin{equation}
    V(d) = \mathbb{V}(d).
\end{equation}
\begin{corollary}
\label{corr:gaussAR}
Consider a scalar Gaussian autoregressive source
$\{X_i\}_{i=1}^\infty$ with finite memory of length $m$, defined by
\begin{equation}
\label{eq:GaussARSource}
X_{i} =
\begin{cases}
    -\sum_{k=1}^{m} a_k X_{i-k} + Z_i, & \qquad i\ge 1, \\
    0, & \qquad i\leq 0,
\end{cases}
\end{equation}
where $\{a_k\}_{k=1}^m$ are real numbers such that zeros of 
\begin{equation}
\label{eq:polynomial}
    A(z) = \sum_{k=0}^{m} a_k z^{-k}
\end{equation}
all lie strictly inside the unit circle and $a_0 = 1$, so that the source is asymptotically stationary, and $Z_i \sim \mathcal{N}(0,\sigma^2)$ are independent that form the innovation process. Let the distortion measure be mean-squared error and define power spectrum
\begin{equation}
\label{eq:g}
S(\omega) \triangleq \frac{\sigma^2}{g(\omega)}, \qquad g(\omega) \triangleq \left|
\sum_{k=0}^{m} a_k e^{-jk\omega}
\right|^2.
\end{equation}
The maximum distortion is given by
\begin{equation}
\label{eq:dmax}
d_{\max} = \frac{1}{2\pi} \int_{-\pi}^\pi S(\omega) d\omega.
\end{equation}
Then, for every $\epsilon\in(0,1)$ and $d\in(0,d_{\max})$, the minimum achievable rate satisfies
\begin{equation}
\label{eq:GAR_rnd}
R(n,d,\epsilon)=\mathbb{R}(d)+\sqrt{\frac{\mathbb{V}(d)}{n}}Q^{-1}(\epsilon)
+O\left(\frac{\log n}{n}\right),
\end{equation}
where
\begin{equation}
\label{eq:GAR_RD}
\mathbb{R}(d) = \frac{1}{2\pi} \int_{-\pi}^\pi
\max\left[0, \frac{1}{2} \log \frac{S(\omega)}{\theta^\ast} \right] d\omega,
\end{equation}
\begin{equation}
\label{eq:GAR_VD}
\mathbb{V}(d)
= \frac{1}{4\pi}\int_{-\pi}^{\pi}
\min\left[1,\left(\frac{S(\omega)}{\theta^\ast}\right)^{2}\right] d\omega,
\end{equation}
and the water level $\theta^\ast>0$ chosen to satisfy
\begin{equation}
\label{eq:GAR_distortion_lim}
d = \frac{1}{2\pi} \int_{-\pi}^\pi \min\left[\theta^\ast, S(\omega) \right] d\omega.
\end{equation}
\end{corollary}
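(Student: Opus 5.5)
The plan is to derive Corollary~\ref{corr:gaussAR} from Corollary~\ref{corr:gaussian} together with quantitative convergence estimates
\begin{equation}
|\mathbb{R}_n(d)-\mathbb{R}(d)| = O\left(\frac{1}{n}\right), \qquad |\mathbb{V}_n(d)-\mathbb{V}(d)| = O\left(\frac{1}{n}\right).
\end{equation}
Given these, a first-order Taylor expansion of the square root gives $\sqrt{\mathbb{V}_n(d)/n}=\sqrt{\mathbb{V}(d)/n}+O(n^{-3/2})$, since $\mathbb{V}(d)>0$. Both errors are then absorbed into the $O(\log n/n)$ remainder, and \eqref{eq:corr_rnd} becomes \eqref{eq:GAR_rnd}.

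\textbf{Step 1: verify the hypotheses of Corollary~\ref{corr:gaussian}.} The key structural fact is that $\mathsf{\Sigma}_n^{-1}=\sigma^{-2}\mathsf{A}_n^\top\mathsf{A}_n$, where $\mathsf{A}_n$ is the lower-triangular banded Toeplitz matrix with diagonals $a_0,\dots,a_m$ (and $\det \mathsf{A}_n=1$). Hence $\mathsf{\Sigma}_n^{-1}$ is a banded Toeplitz matrix with symbol $g(\omega)/\sigma^2$, up to a perturbation of rank at most $m$ in the lower-right corner.
\begin{itemize}
\item Because the zeros of $A(z)$ lie strictly inside the unit circle, $g$ is continuous and strictly positive on $[-\pi,\pi]$.
\item Standard Toeplitz and Gershgorin-type bounds then give eigenvalue bounds of $\mathsf{\Sigma}_n$, uniform in $n$: $\sigma^2_{n,i}\in[c,C]$ with $0<c\le C<\infty$.
\item These bounds immediately give the sixth-moment condition \eqref{eq:sixth_moment}.
\item Since the water level $\theta_n^\ast$ stays bounded away from zero, $\mathbb{V}_n(d)\ge \frac12\min(1,c^2/\theta_n^{\ast2})$ stays bounded away from zero, which gives assumption~\ref{ass:bdd_disp}.
\item For assumption~\ref{ass:dist_lev}: with $\bar d_n=\frac1n\operatorname{tr}\mathsf{\Sigma}_n$, Szeg\H{o}'s theorem gives $d_{\max}$ in \eqref{eq:dmax}. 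Also $d_{\min}=0$, because every $\sigma_{n,i}^2\ge c$ and so $\mathbb{R}_n(d)\le \frac12\log(C/d)$ is bounded for all $d>0$.
\end{itemize}

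\textbf{Step 2: quantitative Szeg\H{o} theorem.} The main obstacle is to show, for the Lipschitz functions $F$ that appear (namely $\min(\theta,\cdot)$, $\max(0,\frac12\log(\cdot/\theta))$ and $\frac12\min(1,(\cdot/\theta)^2)$ on $[c,C]$), that
\begin{equation}
\frac1n\sum_{i=1}^n F(\sigma_{n,i}^2)=\frac1{2\pi}\int_{-\pi}^{\pi}F(S(\omega))\,d\omega+O\left(\frac1n\right),
\end{equation}
uniformly for $\theta$ near $\theta^\ast$. Plain Szeg\H{o} only gives $o(1)$, and these $F$ are not smooth, so rates for analytic $F$ do not apply directly. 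I would first pass to the inverse matrix. The banded matrix $\sigma^2\mathsf{\Sigma}_n^{-1}$ differs from the circulant matrix $\mathsf{C}_n$ with symbol $g$ by a matrix of rank at most $2m$. The eigenvalues of $\mathsf{C}_n$ are $g(2\pi k/n)$, so the eigenvalues of $\mathsf{\Sigma}_n$ interlace, up to $2m$ shifts, with the samples $S(2\pi k/n)$. I would then use the following counting argument.
\begin{itemize}
\item Sort both sequences and pair them by rank. Interlacing shows that, after an index shift of at most $2m$, each eigenvalue lies between neighbouring sorted samples.
\item For Lipschitz $F$, the sum of paired differences therefore telescopes to at most $2m\,\mathrm{Lip}(F)(C-c)$.
\item This yields $\frac1n\sum_i F(\sigma_{n,i}^2)=\frac1n\sum_k F(S(2\pi k/n))+O(1/n)$.
\item The Riemann sum of the piecewise-smooth periodic function $F\circ S$ is within $O(1/n)$ of the integral.
\end{itemize}

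\textbf{Step 3: water level and conclusion.} Apply Step 2 with $F=\min(\theta,\cdot)$. The map $\theta\mapsto \frac1{2\pi}\int\min(\theta,S)$ has derivative equal to the measure of $\{S>\theta\}$, which is positive for $d<d_{\max}$. Comparing \eqref{eq:corr_distortion} with \eqref{eq:GAR_distortion_lim} then gives $|\theta_n^\ast-\theta^\ast|=O(1/n)$. Next, $\mathbb{R}_n(d)$ and $\mathbb{V}_n(d)$ in \eqref{eq:corr_rd}--\eqref{eq:corr_disp} are Lipschitz in $\theta$ uniformly over the bounded spectrum. Combining this with Step 2 evaluated at $\theta^\ast$ yields the two $O(1/n)$ convergence rates, and the corollary follows as described at the start.
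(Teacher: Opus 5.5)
Your proposal is correct and has the same overall structure as the paper. Both verify the hypotheses of Corollary~\ref{corr:gaussian}, prove an $O(1/n)$ quantitative Szeg\H{o} estimate for Lipschitz $F$, convert it into $|\theta_n^\ast-\theta^\ast|=O(1/n)$ through the slope of $h(\theta)=\frac{1}{2\pi}\int\min(\theta,S)$, and finish by Lipschitz continuity in $\theta$.

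The genuine difference is in the eigenvalue step.
\begin{itemize}
\item \textbf{Paper's route.} It applies Cauchy interlacing with the principal submatrix $\mathsf{T}_{n-m}(g)$ and then imports the Ekstr\"om--Bogoya eigenvalue-sampling theorem. This yields the per-eigenvalue bound $|\mu_{n,i}-g_{n,i}|\le C_g/n$ for interior indices (Lemma~\ref{lem:GAR_eigBound}).
\item \textbf{Your route.} You compare $\mathsf{A}_n^\top\mathsf{A}_n$ with the circulant $\mathsf{C}_n(g)$, whose eigenvalues $g(2\pi k/n)$ are explicit. The difference is supported on the first $m$ and last $m$ rows, so it has rank at most $2m$. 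Weyl-type interlacing for finite-rank perturbations, together with your telescoping sum, then controls $\sum_i|\mu_{n,(i)}-g_{(i)}|$ by $O(m(\max g-\min g))$.
\item \textbf{Trade-off.} Your bound controls only the average, not each eigenvalue, but the average is all the corollary needs, and it removes the reliance on a deep external Toeplitz result. The paper's route buys a stronger pointwise statement.
\end{itemize}
You also check assumption~\ref{ass:dist_lev}, which the paper does not do explicitly. Your bound $|\sqrt{\mathbb{V}_n}-\sqrt{\mathbb{V}}|\le|\mathbb{V}_n-\mathbb{V}|/\sqrt{\mathbb{V}}=O(1/n)$ is sharper than the paper's $O(1/\sqrt{n})$.

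Two justifications in Step 1 need repair.

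\textbf{The uniform bound $\sigma_{n,i}^2\le C$.} Equivalently you need $\min_i\mu_{n,i}\ge\alpha>0$, and this does not follow from ``$g>0$ plus standard Toeplitz bounds.'' Those bounds control $\mathsf{T}_n(g)$. But $\mathsf{A}_n^\top\mathsf{A}_n=\mathsf{T}_n(g)-\mathsf{P}$ with $\mathsf{P}\succeq 0$ of rank at most $m$, and such a perturbation can push up to $m$ eigenvalues toward zero. This is exactly what happens for nonstationary autoregressive sources: there $g>0$ on the circle, yet some $\mu_{n,\ell}$ decay exponentially. The stability hypothesis is what rescues you. $\mathsf{A}_n^{-1}$ is the leading $n\times n$ section of the causal Toeplitz matrix of $1/A(z)$. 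Its impulse response $(h_k)$ is absolutely summable precisely because all zeros lie strictly inside the unit circle. Hence $\|\mathsf{A}_n^{-1}\|\le\sum_k|h_k|$ uniformly in $n$. (The paper merely asserts this bound.)

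\textbf{Assumption~\ref{ass:bdd_disp}.} What you need is $\theta_n^\ast$ bounded \emph{above}, not bounded away from zero. This is immediate, since $\theta_n^\ast<\max_i\sigma_{n,i}^2\le C$ whenever $d<\bar d_n$.
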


To prove Corollary~\ref{corr:gaussAR} we need refined asymptotic analysis of the eigenvalues of the covariance matrix. The autoregressive recursion in~\eqref{eq:GaussARSource} can be written in matrix form as $\myVec{Z} = \mathsf{A}_n\myVec{X}$, where $\mathsf{A}_n \in \mathbb{R}^{n\times n}$ is the lower-triangular Toeplitz matrix
\begin{equation}
\label{eq:def_A_n}
\mathsf{A}_n=
\begin{bmatrix}
a_0    & 0      & 0      & \cdots & \cdots & 0 \\
a_1    & a_0    & 0      & \ddots &        & \vdots \\
\vdots & a_1    & a_0    & \ddots & \ddots & \vdots \\
a_m    & \ddots & \ddots & \ddots & 0      & 0 \\
0      & \ddots & \ddots & \ddots & \ddots & 0 \\
0      & 0      & a_m    & \dots  & a_1    & a_0 \\
\end{bmatrix},
\end{equation}
and the covariance matrix of $\myVec{X}$ is given by
\begin{equation}
\label{eq:covariance}
     \mathsf{\Sigma}_n \triangleq \mathsf{A}_n^{-1} \mathbb{E}\left[\myVec{Z}\myVec{Z}^T\right] (\mathsf{A}_n^T)^{-1} = \sigma^2 (\mathsf{A}_n^T \mathsf{A}_n)^{-1}.
\end{equation}
By the decorrelation equivalence established in subsection~\ref{subsec:decor}, an orthogonal transformation that diagonalizes $\mathsf{\Sigma}_n$
converts the source block $\myVec{X}$ into independent Gaussian coordinates whose variances are the eigenvalues of $\mathsf{\Sigma}_n$. Letting $\{\mu_{n,i}\}_{i=1}^n$ denote the eigenvalues of $\mathsf{A}_n^\top \mathsf{A}_n$, the eigenvalues of $\mathsf{\Sigma}_n$ are
\begin{equation}
\label{eq:eig_equivalence}
    \sigma_{n,i}^2  = \frac{\sigma^2}{\mu_{n,i}}, \qquad i=1,\ldots,n . 
\end{equation}

While the eigenvalues $\{\sigma_{n,i}^2\}_{i = 1}^n$ do not generally admit a closed-form expression, the asymptotic distribution of $\{\mu_{n,i}\}_{i=1}^n$ is easier to analyze because $\mathsf{A}_n^\top \mathsf{A}_n$ is an almost Toeplitz matrix. We use the following refined asymptotic result on eigenvalues to prove Corollary~\ref{corr:gaussAR}.

\begin{lemma}
\label{lemm:GARnon_asymptotic_eigenvalues}
Fix any $\{a_k\}_{k=1}^m$. Let $F(t)$ be any bounded and $L_F$-Lipschitz function defined on the interval $t \in [\alpha, \beta]$, where
\begin{equation}
\label{eq:alpha_beta}
    \alpha \triangleq \inf_{n\in \mathbb{N}, \enspace i\in[n]} \mu_{n,i},
\qquad
\beta \triangleq \sup_{n\in \mathbb{N}, \enspace i\in[n]} \mu_{n,i}.
\end{equation}
Then, for all sufficiently large $n$, the eigenvalues $\{\mu_{n,i}\}_{i=1}^n$ of $\mathsf{A}_n^T\mathsf{A}_n$ with $\mathsf{A}_n$ in~\eqref{eq:def_A_n} satisfy
\begin{equation}
\label{eq:eig_concentration_rate}
\left|
\frac{1}{n}\sum_{i=1}^n F\left(\mu_{n,i}\right)
-\frac{1}{2\pi}\int_{-\pi}^{\pi} F\big(g(\omega)\big)\,d\omega
\right|
\le \frac{C_F}{n},
\end{equation}
where $g(\omega)$ is defined in~\eqref{eq:g}, and $C_F>0$ is a constant that
depends on $L_F$, $\|F\|_\infty$, and $\{a_k\}_{k=0}^m$.
\end{lemma}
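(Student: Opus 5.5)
The plan is to compare $\mathsf{A}_n^T\mathsf{A}_n$ with a circulant matrix whose eigenvalues are exactly samples of $g$, and then to control two errors: a finite-rank perturbation error and a Riemann-sum error.

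\emph{Circulant comparison.} Let $\mathsf{C}_n$ be the $n\times n$ circulant matrix with symbol $g(\omega)=\sum_{|l|\le m} r_l e^{-jl\omega}$, where $r_l=\sum_{k} a_k a_{k+|l|}$. For $n>2m$ its eigenvalues are exactly $g(2\pi k/n)$, $k=0,\dots,n-1$. The entry $(i,j)$ of $\mathsf{A}_n^T\mathsf{A}_n$ is $\sum_{k=\max(i,j)}^{n} a_{k-i}a_{k-j}$ (with $a_l=0$ outside $\{0,\dots,m\}$). This equals $r_{i-j}$ unless $\max(i,j)>n-m$, i.e.\ it deviates from the banded Toeplitz matrix $\mathsf{T}_n(g)$ only in the last $m$ rows and columns. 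In turn, $\mathsf{T}_n(g)$ differs from $\mathsf{C}_n$ only in the top-right and bottom-left $m\times m$ corners. Hence $\mathsf{E}_n\triangleq \mathsf{A}_n^T\mathsf{A}_n-\mathsf{C}_n$ is symmetric and supported on $O(m)$ rows and columns, so $\operatorname{rank}(\mathsf{E}_n)\le r$ with $r\le 4m$, a bound independent of $n$.

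\emph{Interlacing for finite-rank perturbations.} Sort both spectra increasingly: $\mu_{(1)}\le\dots\le\mu_{(n)}$ and $c_{(1)}\le\dots\le c_{(n)}$. By the Weyl/Cauchy interlacing inequalities for a rank-$r$ symmetric perturbation, $c_{(i-r)}\le \mu_{(i)}\le c_{(i+r)}$ whenever the indices are in range. Before using this, extend $F$ from $[\alpha,\beta]$ to all of $\mathbb{R}$ by constant continuation at the endpoints. The extension has the same Lipschitz constant $L_F$ and the same sup norm, and it is needed because the $c_{(i)}$ (values of $g$) need not lie in $[\alpha,\beta]$. For the $n-2r$ interior indices, $|F(\mu_{(i)})-F(c_{(i)})|\le L_F\,(c_{(i+r)}-c_{(i-r)})$. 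Summing, the right side telescopes to at most $2r L_F(\max g-\min g)$. The remaining $2r$ boundary indices each contribute at most $2\|F\|_\infty$. Therefore
\begin{equation*}
\Big|\sum_{i=1}^n F(\mu_{n,i})-\sum_{k=0}^{n-1}F\big(g(2\pi k/n)\big)\Big|\le 2rL_F\|g\|_\infty+4r\|F\|_\infty .
\end{equation*}
Here $\|g\|_\infty\le(\sum_k|a_k|)^2$, so this bound is a constant depending only on $L_F$, $\|F\|_\infty$ and the $a_k$.

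\emph{Riemann sum.} $F\circ g$ is $L_F\|g'\|_\infty$-Lipschitz and $2\pi$-periodic, and $\|g'\|_\infty\le 2m(\sum|a_k|)^2$. The left-endpoint rule on a uniform grid of mesh $2\pi/n$ therefore gives
\begin{equation*}
\Big|\frac1n\sum_{k}F\big(g(2\pi k/n)\big)-\frac1{2\pi}\int_{-\pi}^{\pi}F(g(\omega))\,d\omega\Big|\le \frac{\pi L_F\|g'\|_\infty}{n}.
\end{equation*}
Dividing the interlacing bound by $n$ and adding this error yields \eqref{eq:eig_concentration_rate} with an explicit $C_F$, valid for all $n>2m$.

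The step needing the most care is the second one. I have to verify the rank bound on $\mathsf{E}_n$ correctly, with the boundary rows of $\mathsf{A}_n^T\mathsf{A}_n$ and the circulant wrap-around corners handled consistently. I also have to run the interlacing/telescoping argument for a rank-$r$ perturbation, not just rank one. If the indexing there gets messy, I would do the rank-one case via Cauchy interlacing and iterate it $r$ times, accumulating an $O(1)$ error at each step.
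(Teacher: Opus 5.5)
Your proof is correct, and it takes a genuinely different route from the paper's.

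\textbf{The paper's route.} The paper compares $\mathsf{A}_n^T\mathsf{A}_n$ with the Toeplitz matrix $\mathsf{T}_n(g)$. It applies Cauchy interlacing twice through their common principal submatrix $\mathsf{T}_{n-m}(g)$, which gives $\xi_{n,i-m}\le\mu_{n,i}\le\xi_{n,i+m}$. It then imports an external eigenvalue-sampling theorem (Ekstr\"om et al., building on Bogoya et al.): the sorted eigenvalues of $\mathsf{T}_n(g)$ lie within $O(1/n)$ of the sorted samples $g(i\pi/(n+1))$. Combined with the fact that sorted samples of a Lipschitz function have consecutive gaps $O(1/n)$, this yields the pointwise bound $|\mu_{n,i}-g_{n,i}|\le C_g/n$ for $m<i\le n-m$. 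The $2m$ edge indices and a Riemann sum on $[0,\pi]$ finish the argument.

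\textbf{Your route.} You compare instead with the circulant $\mathsf{C}_n$, whose spectrum is exactly $\{g(2\pi k/n)\}$, so no asymptotics of Toeplitz eigenvalues are needed. The bounded-rank difference, Weyl-type interlacing, and your telescoping estimate $\sum_i (c_{(i+r)}-c_{(i-r)})\le 2r(\max g-\min g)$ bound the summed discrepancy by an $O(1)$ constant. No pointwise eigenvalue estimate is required.

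\textbf{What each buys.} Your argument is self-contained and gives explicit constants. It also avoids depending on a sampling theorem whose standard versions carry hypotheses on the symbol. The paper's route produces a pointwise statement (its eigenvalue-bound lemma), which it presents as a tool of independent interest, although only the averaged form is used downstream. Your approach recovers that pointwise statement too. The same rearrangement argument the paper uses shows that the sorted circulant samples have consecutive gaps at most $2\pi L_g/n$, since adjacent grid points on the circle are $2\pi/n$ apart. Hence $|\mu_{(i)}-c_{(i)}|\le c_{(i+r)}-c_{(i-r)}\le 4\pi r L_g/n$ for interior $i$.

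\textbf{Two small remarks.}
\begin{enumerate}
\item The entries of $\mathsf{A}_n^T\mathsf{A}_n$ actually deviate from $r_{|i-j|}$ only when $\min(i,j)>n-m$, i.e., in the lower-right $m\times m$ block. Your description, ``the last $m$ rows and columns,'' is a superset of this, which is harmless for the rank bound; the sharper support gives $r\le 3m$.
\item The constant extension of $F$ is harmless but not really needed. The statement already presupposes that $F\circ g$ is defined, so the samples of $g$ lie in the domain of $F$.
\end{enumerate}
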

\begin{proof}[Proof Idea]
    Using Lemma~\ref{lem:GAR_eigBound} in Appendix~\ref{sec:proof_GAR_eig}, we first approximate the eigenvalues $\{\mu_{n,i}\}_{i=1}^n$ by uniform samples from the limiting spectrum $g(\omega)$. The remainder of the argument follows from elementary Riemann-sum estimates, see Appendix~\ref{sec:proof_GAR_eig} for details.
\end{proof}

\begin{proof}[Proof of Corollary~\ref{corr:gaussAR}]
The proof consists of checking that the assumptions~\ref{ass:bdd_disp}-\ref{ass:uni_moments}
hold for all $d\in(0,d_{\max})$, so that~\eqref{eq:corr_rnd} in Corollary~\ref{corr:gaussian} applies,  and showing that there exist constants $c_r, c_v > 0$ such that
\begin{equation}
\label{eq:concentration_rvd}
\bigl|\mathbb{R}_n(d)-\mathbb{R}(d)\bigr|\le \frac{c_r}{n},
\qquad
\left|\sqrt{\mathbb{V}_n(d)}-\sqrt{\mathbb{V}(d)}\right|\le \frac{c_v}{\sqrt{n}},
\end{equation}
where $\mathbb{R}(d)$ and $\mathbb{V}(d)$ are given by~\eqref{eq:GAR_RD}, \eqref{eq:GAR_VD}. We establish~\eqref{eq:concentration_rvd} using Lemma~\ref{lemm:GARnon_asymptotic_eigenvalues} and by bounding the rate at which $\theta_n^\ast$ converges to $\theta^\ast$. Substituting~\eqref{eq:concentration_rvd} into~\eqref{eq:thm_rnd} yields~\eqref{eq:GAR_rnd}. See Appendix~\ref{app:proof_gaussAR} for full details. 
\end{proof}

In Fig.~\ref{fig:three_plots}, we compare the spectrum, \ac{rdf}, and dispersion of stationary Gaussian autoregressive sources of different orders $m$. We tune the innovation variances so that the sources have the same maximum distortion, and hence the same total spectral energy. Nevertheless,
as shown in Fig.~\ref{fig:spectrum}, the source spectrum $S(\omega)$ distributes this energy differently across frequencies. We refer to the frequency components satisfying $S(\omega) \leq \theta^\ast$ as \textit{inactive}, since these components do not contribute to the \ac{rdf} in~\eqref{eq:GAR_RD}. However, they still contribute to the dispersion in~\eqref{eq:GAR_VD}, although less than the active components. Thus, while the rate is determined only by the active components, the dispersion is affected by the entire spectrum. This difference between the rate-distortion and dispersion functions is illustrated by comparing Fig.~\ref{fig:rate_distortion} and Fig.~\ref{fig:dispersion}. Sources that have similar rate-distortion behavior in Fig.~\ref{fig:rate_distortion} have noticeably different dispersion behavior in Fig.~\ref{fig:dispersion}. This illustrates that the dispersion captures spectral features that are not fully reflected by the \ac{rdf} alone.

Fig.~\ref{fig:rate_distortion} also demonstrates that, under the same total spectral energy, the \ac{iid} Gaussian source is the hardest to compress. As the spectrum becomes more peaked, as seen in Fig.~\ref{fig:spectrum}, the rate decreases because the energy is concentrated over fewer active components. 

We further observe in Fig.~\ref{fig:dispersion} a critical distortion level below which the sources with memory have the same dispersion as the \ac{iid} Gaussian source. This threshold is analogous to the critical threshold observed by Gray~\cite{gray70_Autoregressive} for the \ac{rdf} and is consistent with the dispersion threshold identified by Tian and Kostina~\cite{tian19_Stationary, tian21_NonStationary} for the Gauss-Markov source, which corresponds to the source defined in~\eqref{eq:GaussARSource} with $m=1$. This phenomenon occurs because in the regime of low distortion where the reverse water-filling level lies below the minimum of the spectrum, all components remain active and therefore contribute equally to the dispersion. 

Corollary~\ref{corr:gaussAR} recovers the same second-order term for the stationary Gauss-Markov source as~\cite[Thm.~1]{tian19_Stationary}, improving the remainder from $o\left(\frac{1}{\sqrt{n}}\right)$ to $O\left(\frac{\log n}{n}\right)$. 

\subsection{Discussion}

We discuss the importance of some of the assumptions and present counterexamples that violate them. Consider a Gaussian source with a single, possibly growing, large eigenvalue and all other eigenvalues equal to zero. In this case, the dispersion is $\mathbb{V}_n(d) = \frac{1}{2n}$, and assumption~\ref{ass:bdd_disp} is not satisfied. 

Corollary~\ref{corr:gaussAR} also does not yield the dispersion for non-stationary Gaussian autoregressive sources. In fact, even the analysis of the \ac{rdf} for such sources required several works to be fully understood \cite{berger70_InformationRatesWiener,gray70_Autoregressive, Hashimoto80_RDAutoregressive,Gray08_NoteOnAutoregressive}. The main difficulty introduced by non-stationarity lies in the behavior of the eigenvalues $\{\mu_{n,i}\}_{i=1}^n$ of $\mathsf{A}_n^T \mathsf{A}_n$ near zero. In the stationary case, all eigenvalues are uniformly bounded away from zero. More precisely, let
\begin{equation}
    \alpha' = \min_{\omega \in [-\pi, \pi]} g(\omega )
\qquad
\beta' = \max_{\omega \in [-\pi, \pi]} g(\omega ).
\end{equation}
Then, for stationary sources, the constants $\alpha$ and $\beta$ defined
in~\eqref{eq:alpha_beta} satisfy
\begin{equation}
\label{eq:mu_i_bounds}
    0<\alpha' = \alpha \leq \mu_{n,i} \leq \beta = \beta'<\infty,
    \qquad i\in[n],\ n\in\mathbb{N}.
\end{equation}

In contrast, for non-stationary sources, this behavior is governed by the zeros of the polynomial in~\eqref{eq:polynomial} that lie outside the unit circle.
Let $|\rho_1|\geq \cdots \geq |\rho_r|>1 > |\rho_{r+1}|\geq \cdots \geq |\rho_m|$ denote the zeros of~\eqref{eq:polynomial}. Hashimoto and Arimoto~\cite[Lemma]{Hashimoto80_RDAutoregressive} showed that the $\ell$th smallest eigenvalue satisfies 
\begin{equation}
    \mu_{n,\ell} = \Theta\left(|\rho_\ell|^{-2n}\right), \qquad \ell=1,\ldots,r .
\end{equation}
while
 the remaining eigenvalues are bounded away from zero. These exponentially small eigenvalues create a difficulty in passing to limit in~\eqref{eq:rd_info}. Gray~\cite[Eq.~19]{gray70_Autoregressive} establishes the limit using an equivalence of limiting distribution of eigenvalues of Toeplitz matrices, deriving the \ac{rdf} of possibly non-stationary Gaussian autoregressive processes. However, his result does not quantify the rate of convergence. Hence, the convergence-rate estimate in Lemma~\ref{lemm:GARnon_asymptotic_eigenvalues} can be viewed as a quantitative refinement of Gray's asymptotic result, providing the explicit $O(1/n)$ rate needed for the second-order analysis above.


We note that Lemma~\ref{lemm:GARnon_asymptotic_eigenvalues} itself does not require the autoregressive source to be stationary. The lemma relies primarily on the memory order $m$ being fixed and finite. It is proved by approximating the eigenvalues $\{\mu_{n,i}\}_{i=m}^{n-m}$ by those of a Toeplitz matrix and bounding the approximation error, while showing that the remaining $2m$ possibly unstable eigenvalues are negligible for the purposes of the analysis. However, Corollary~\ref{corr:gaussAR} still does not extend directly to non-stationary sources because assumption~\ref{ass:uni_moments} fails in that setting. The exponentially small eigenvalues of $\mathsf{A}_n^T \mathsf{A}_n$ create exponentially large eigenvalues of the covariance matrix $\mathsf{\Sigma}_n$ via~\eqref{eq:eig_equivalence}. Consequently, the required uniform sixth moment bound condition in~\ref{ass:uni_moments} is not satisfied, and Theorem~\ref{thm:gaussian_approximation} cannot be applied.

\section{Converse}
\label{sec:converse}
The proof is based on a general converse by Kostina and Verdú~\cite{kostina12_FixedLength}.
\begin{lemma}[Kostina and Verdú~\cite{kostina12_FixedLength}, Thm.~7]
\label{lem:converse}
Fix $d > d_{\min}$. Any $(n, M, d, \epsilon)$ code must satisfy
\begin{equation}
\label{eq:converse}
\epsilon \ge \sup_{\gamma \ge 0} 
\mathbb{P} \left[ \jmath(\myVec{X}, d) \ge \log M + \gamma \right] - \exp(-\gamma).
\end{equation}
\end{lemma}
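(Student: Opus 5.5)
The plan is to prove Lemma~\ref{lem:converse} by the standard argument of Kostina and Verdú, using the variational characterization of the $d$-tilted information. Fix an $(n,M,d,\epsilon)$ code $(\myfun{f},\myfun{g})$ and $\gamma\ge 0$, and write $\myVec{c}_1,\dots,\myVec{c}_M$ for the codewords $\myfun{g}(1),\dots,\myfun{g}(M)$. We split the event $\{\jmath(\myVec{X},d)\ge \log M+\gamma\}$ according to whether the code meets the distortion constraint:
\begin{equation*}
\Prob{\jmath(\myVec{X},d)\ge \log M+\gamma}\le \epsilon + \Prob{\jmath(\myVec{X},d)\ge \log M+\gamma,\ \mathsf{d}(\myVec{X},\myfun{g}(\myfun{f}(\myVec{X})))\le d}.
\end{equation*}
So it suffices to bound the second term by $\exp(-\gamma)$.

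The key ingredient is the following property of the $d$-tilted information, due to Csisz\'ar \cite{csiszar74_ExtremumInfoTheory}. Under restrictions~\ref{res:unique}--\ref{res:min_d}, every $P_{\myVec{Y}}$ satisfies
\begin{equation*}
\mathbb{E}\left[\exp\left(\jmath(\myVec{X},d)+\lambda_n^\ast n d-\lambda_n^\ast n\,\mathsf{d}(\myVec{X},\myVec{y})\right)\right]\le 1
\end{equation*}
for $P_{\myVec{Y}}$-almost every $\myVec{y}$. By the definition \eqref{eq:d_tilted}, equality holds for $P_{\myVec{Y^\ast}}$-a.e.\ $\myVec{y}$. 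We apply it pointwise to each codeword $\myVec{c}_j$. The inequality extends to all $\myVec{y}$ because the expectation in $\myVec{y}$ is lower semicontinuous; alternatively, one can mix the codewords with a small weight on $P_{\myVec{Y}}$ and pass to a limit. This extension is the step I expect to need the most care.

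With this in hand, on the intersection event the chosen codeword satisfies $\mathsf{d}(\myVec{X},\myVec{c}_{\myfun{f}(\myVec{X})})\le d$. Since $\lambda_n^\ast>0$, this gives $1\le \exp(\lambda_n^\ast n d-\lambda_n^\ast n\,\mathsf{d}(\myVec{X},\myVec{c}_{\myfun{f}(\myVec{X})}))$. On the event we also have $1\le \exp(\jmath(\myVec{X},d)-\log M-\gamma)$. Multiplying these, bounding the indicator of the event, summing over $j$, and applying the key inequality yields
\begin{equation*}
\Prob{\cdot}\le \frac{e^{-\gamma}}{M}\sum_{j=1}^M \mathbb{E}\left[\exp\left(\jmath(\myVec{X},d)+\lambda_n^\ast n d-\lambda_n^\ast n\,\mathsf{d}(\myVec{X},\myVec{c}_j)\right)\right]\le e^{-\gamma}.
\end{equation*}
Rearranging the first display then gives \eqref{eq:converse}, and taking the supremum over $\gamma\ge0$ completes the proof.
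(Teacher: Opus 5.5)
Your proof is correct and is essentially the argument behind Kostina--Verd\'u's Theorem~7, which the paper imports by citation. That argument splits off the excess-distortion event, bounds $1\{\jmath(\myVec{X},d)\ge\log M+\gamma\}$ and $1\{\mathsf{d}(\myVec{X},\myVec{y})\le d\}$ by the corresponding exponentials, and applies Csisz\'ar's inequality at the codewords; they average over the equiprobable mixture of codewords rather than summing over $j$, which is the same computation.

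The only thing to fix is your ``extension to all $\myVec{y}$'' paragraph. Csisz\'ar's inequality (Kostina--Verd\'u's Eq.~(18)) already holds for every $\myVec{y}$, and your own formulation gives it at each codeword by taking $P_{\myVec{Y}}=\delta_{\myVec{c}_j}$, so no extension is needed. You should drop the lower-semicontinuity justification, since it would require a topology and continuity of $\mathsf{d}_i$ that abstract alphabets do not provide.
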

We also need the following regularity results.
\begin{lemma}
\label{lem:bdd_curv}
Fix $d > d_{\min}$. Then there exists a finite constant $C_1$ such that
\begin{equation}
\label{eq:C1}
    \limsup_{n \to \infty} \lambda_n^\ast \le C_1 .
\end{equation}
\end{lemma}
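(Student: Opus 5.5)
The plan is to use only convexity and monotonicity of $\mathbb{R}_n(\cdot)$, together with the definition of $d_{\min}$ in \eqref{eq:dmin}. No structural property of the source beyond what enters $\mathbb{R}_n$ is needed.

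First, I recall the standard facts about $\mathbb{R}_n(d)$ in \eqref{eq:rd}.
\begin{itemize}
\item It is nonnegative and nonincreasing in $d$, since enlarging $d$ enlarges the feasible set.
\item It is convex in $d$. Given kernels $P^{(1)}_{\myVec{Y}|\myVec{X}}$ and $P^{(2)}_{\myVec{Y}|\myVec{X}}$ that are feasible at $d_1$ and $d_2$, the mixture $\tau P^{(1)}+(1-\tau)P^{(2)}$ is feasible at $\tau d_1+(1-\tau)d_2$, because the expected distortion is linear in the kernel. Moreover, $I(\myVec{X};\myVec{Y})$ is convex in the kernel for fixed $P_{\myVec{X}}$.
\end{itemize}
Under restrictions~\ref{res:unique}--\ref{res:min_d}, $\mathbb{R}_n$ is differentiable at $d$, and $\lambda_n^\ast=-\mathbb{R}_n'(d)\ge 0$. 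Even without differentiability, the argument below bounds every subgradient, so it is not delicate.

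Second, I fix an intermediate level $d' \triangleq \frac{d+d_{\min}}{2}\in(d_{\min},d)$ and show that $\limsup_{n\to\infty}\mathbb{R}_n(d')<\infty$. By the definition of $d_{\min}$ as an infimum, there is some $d''\in[d_{\min},d')$ with $\limsup_n \mathbb{R}_n(d'')<\infty$. Monotonicity then gives $\mathbb{R}_n(d')\le \mathbb{R}_n(d'')$ for every $n$, and hence $\limsup_n \mathbb{R}_n(d') \le \limsup_n\mathbb{R}_n(d'') \triangleq K <\infty$. In particular, $\mathbb{R}_n(d')$ is finite for all sufficiently large $n$.

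Third, I apply the slope inequality for convex functions. Since the derivative at $d$ dominates the chord slope from the left,
\begin{equation}
-\lambda_n^\ast=\mathbb{R}_n'(d)\ \ge\ \frac{\mathbb{R}_n(d)-\mathbb{R}_n(d')}{d-d'} .
\end{equation}
Rearranging and using $\mathbb{R}_n(d)\ge 0$ gives
\begin{equation}
\lambda_n^\ast \le \frac{\mathbb{R}_n(d')-\mathbb{R}_n(d)}{d-d'} \le \frac{2\,\mathbb{R}_n(d')}{d-d_{\min}}.
\end{equation}
Taking $\limsup$ yields \eqref{eq:C1} with $C_1 = \frac{2K}{d-d_{\min}}$.

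The only step needing care is the second one: extracting finiteness of $\limsup_n \mathbb{R}_n$ at a level strictly between $d_{\min}$ and $d$. This is where $d>d_{\min}$ (restriction~\ref{res:min_d}) is essential. The only subtlety is that $d_{\min}$ itself might fail to be in the set defining it. Choosing $d''<d'$ from the infimum and invoking monotonicity handles this.
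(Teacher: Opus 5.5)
Your proof is correct and follows the same route as the paper. Both arguments apply the convexity (tangent-line) inequality at $d$ against the point $d_0=\frac{d+d_{\min}}{2}$, drop $\mathbb{R}_n(d)\ge 0$, and conclude from $\limsup_n \mathbb{R}_n(d_0)<\infty$; your justification of that last fact, via the infimum in \eqref{eq:dmin} together with monotonicity of $\mathbb{R}_n$, makes explicit a step the paper only asserts.
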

\begin{proof}
For each $n$, the $n$th-order \ac{rdf} $\mathbb{R}_n(d)$ is convex. Then for all $n$, and for all $d_0 > d_{\min}$ we have 
\begin{equation}
     \mathbb{R}_n(d) - \mathbb{R}_n'(d) (d - d_0) \leq \mathbb{R}_n(d_0) 
\end{equation}
Let $d_0 = \frac{d + d_{\min}}{2}$. Then by definition of $\lambda_n^\ast$ in~\eqref{eq:lambda}, we have
\begin{equation}
\lambda_n^\ast \le \frac{\mathbb{R}_n(d_0) - \mathbb{R}_n(d)}{d - d_0}
\le \frac{\mathbb{R}_n(d_0)}{d - d_0}.
\end{equation}
Since $d_0 > d_{\min}$, the sequence $\{\mathbb{R}_n(d_0)\}_{n\ge 1}$ is bounded for all large $n$. Taking the limit superior on both sides proves the claim.
\end{proof}

\begin{lemma}
\label{lem:upperbdd_disp}
Consider a source $\{X_i\}_{i=1}^\infty$ with independent but not identically distributed components under a separable distortion measure. Suppose that assumptions~\ref{ass:dist_lev}--\ref{ass:uni_moments} hold. Then there exists a finite constant $C_2$ such that
\begin{equation}
    \limsup_{n \to \infty} \mathbb{V}_n(d) \le C_2 .
\end{equation}
\end{lemma}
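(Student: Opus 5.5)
We will bound the variance of the $d$-tilted information coordinate by coordinate, using the single-letterization \eqref{eq:tilted_single_letter}. By independence of the components,
\begin{equation}
\mathbb{V}_n(d)=\frac{1}{n}\sum_{i=1}^n \var{\jmath(X_i,d_i)}\le \frac{1}{n}\sum_{i=1}^n \E{\bigl(\jmath(X_i,d_i)+\lambda_n^\ast d_i\bigr)^2},
\end{equation}
where we subtract the deterministic constant $-\lambda_n^\ast d_i$ since variance is shift invariant. From \eqref{eq:d_tilted} and the product structure, $\jmath(x_i,d_i)+\lambda_n^\ast d_i = -\log \E{\exp(-\lambda_n^\ast \mathsf{d}_i(x_i,Y_i^\ast))}$, with expectation over $P_{Y_i^\ast}$. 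This quantity is nonnegative because $\mathsf{d}_i\ge 0$. The task is therefore to upper bound $\frac1n\sum_i \E{(J_i(X_i))^2}$ uniformly in $n$, where $J_i(x)\triangleq -\log \E{e^{-\lambda_n^\ast \mathsf{d}_i(x,Y_i^\ast)}}$.

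The key step is Jensen's inequality, applied to the convex function $t\mapsto e^{-t}$:
\begin{equation}
\E{e^{-\lambda_n^\ast \mathsf{d}_i(x,Y_i^\ast)}}\ge \exp\bigl(-\lambda_n^\ast \E{\mathsf{d}_i(x,Y_i^\ast)}\bigr).
\end{equation}
This gives $0\le J_i(x)\le \lambda_n^\ast \E{\mathsf{d}_i(x,Y_i^\ast)}$. Squaring and applying Jensen again (Cauchy--Schwarz in $Y_i^\ast$), then averaging over $X_i$, yields
\begin{equation}
\E{J_i(X_i)^2}\le (\lambda_n^\ast)^2\, \E{\mathsf{d}_i^2(X_i,Y_i^\ast)},
\end{equation}
where the last expectation is with respect to $P_{X_i}\times P_{Y_i^\ast}$, exactly the product measure appearing in assumption~\ref{ass:uni_moments}.

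Next we control the second moments by the sixth moments. Lyapunov's inequality on each coordinate, followed by Jensen/power-mean over the empirical average in $i$, gives
\begin{equation}
\frac1n\sum_{i=1}^n \E{\mathsf{d}_i^2}\le \frac1n\sum_{i=1}^n \bigl(\E{\mathsf{d}_i^6}\bigr)^{1/3}\le \Bigl(\frac1n\sum_{i=1}^n \E{\mathsf{d}_i^6}\Bigr)^{1/3}\le K_0^{1/3}
\end{equation}
for all sufficiently large $n$, by \eqref{eq:K0}. Combining with Lemma~\ref{lem:bdd_curv}, which applies since assumption~\ref{ass:dist_lev} gives $d>d_{\min}$, we have $\lambda_n^\ast\le C_1+1$ for large $n$. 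Hence $\limsup_n \mathbb{V}_n(d)\le (C_1+1)^2K_0^{1/3}$, and we may take $C_2=C_1^2K_0^{1/3}$ after taking the limsup properly.

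The main point needing care is the legitimacy of the single-letter decomposition: we need that $P_{\myVec{Y^\ast}}$ is a product measure and that $\lambda_n^\ast$ is the common slope for all coordinates. Both are asserted around \eqref{eq:tilted_single_letter}. We also need $\lambda_n^\ast$ to be well defined and finite, which follows from restrictions~\ref{res:unique}--\ref{res:min_d}. We will also check that the Jensen upper bound is valid even when $\mathsf{d}_i$ takes the value $+\infty$ on a set: the finiteness of the sixth moment under the product measure rules out positive mass at infinity. Beyond this, the argument is routine and does not use assumption~\ref{ass:bdd_disp}.
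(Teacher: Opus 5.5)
Your proof is correct and follows the paper's argument essentially step for step. Both bound the variance by the second moment of the shifted quantity $\jmath(X_i,d_i)+\lambda_n^\ast d_i$, and both use concavity/Jensen to get $0 \le \jmath(x_i,d_i)+\lambda_n^\ast d_i \le \lambda_n^\ast \mathbb{E}[\mathsf{d}_i(x_i,Y_i^\ast)]$; Jensen again then passes from second to sixth moments under assumption~\ref{ass:uni_moments}, and Lemma~\ref{lem:bdd_curv} controls $\lambda_n^\ast$, giving the same constant $C_2 = C_1^2 K_0^{1/3}$.
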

\begin{proof}
By concavity of the logarithm, we have~\cite[Eq. 101]{kostina12_FixedLength}
\begin{equation}
\label{proof:lem:upperbdd_disp_Concavity}
    0 \le \jmath(x_i,d_i) + \lambda_n^\ast d_i \le \lambda_n^\ast \mathbb{E}\left[\mathsf{d}_i(x_i, Y_i^\ast)\right].
\end{equation}
By the definition of the source dispersion and by the single-letterization of the $d$-tilted information in~\eqref{eq:tilted_single_letter}, we obtain
\begin{align}
    \mathbb{V}_n(d)
    &= \frac{1}{n}\sum_{i=1}^{n}\mathrm{Var}\left[\jmath(X_i,d_i)\right] \\
    &\le \frac{1}{n}\sum_{i=1}^{n} \mathbb{E}\left[\left|\jmath(X_i,d_i)+\lambda_n^\ast d_i\right|^2\right] \label{proof:lem:upperbdd_disp_Vn_3} \\
    &\le  \frac{(\lambda_n^\ast)^2}{n}\sum_{i=1}^{n} \mathbb{E}\left[\mathsf{d}_i^2(X_i,Y_i^\ast)\right]
    \label{proof:lem:upperbdd_disp_Vn_final}
\end{align}
where~\eqref{proof:lem:upperbdd_disp_Vn_3} follows from standard variance inequalities, and
\eqref{proof:lem:upperbdd_disp_Vn_final} follows from~\eqref{proof:lem:upperbdd_disp_Concavity} and Jensen's inequality. Applying Jensen's inequality, taking the limit superior on both sides of~\eqref{proof:lem:upperbdd_disp_Vn_final}, and using assumption~\ref{ass:uni_moments} together with Lemma~\ref{lem:bdd_curv}, we conclude that $\mathbb{V}_n(d)$ is uniformly bounded by the finite constant $C_2 = C_1^2 K_0^{1/3}$ for sufficiently large $n$, where $C_1$ is the constant in~\ref{eq:C1} and $K_0$ is in~\ref{eq:K0}.
\end{proof}
\vspace{-0.4cm}
\begin{proof}[Proof of the converse part of Theorem~\ref{thm:gaussian_approximation}]
Let $W_i = \jmath\left(X_i, d_i\right)$ in equation~\eqref{eq:tilted_single_letter}. The average third absolute moment of centered variables is
\begin{align}
    T_n
    &= \frac{1}{n}\sum_{i=1}^{n}
    \mathbb{E}\left[\left|\jmath(X_i,d_i)-\mathbb{E}\left[\jmath(X_i,d_i)\right]\right|^3\right] \\
    &\le \frac{8}{n}\sum_{i=1}^{n}\mathbb{E}\left[\left|\jmath(X_i,d_i)\right|^3\right] 
    \label{eq:Tn_65}\\
    \label{eq:Tn_shift_di}
    &\le \frac{32}{n}\sum_{i=1}^{n}\left(\mathbb{E}\left[\left|\jmath(X_i,d_i)+\lambda_n^\ast d_i\right|^3\right]
        + (\lambda_n^\ast)^3 d_i^3 \right) \\
    \label{proof:converse_eq:Tn_final}
    &\le \frac{64(\lambda_n^\ast)^3}{n}\sum_{i=1}^{n}\mathbb{E}\left[\mathsf{d}_i^3(X_i,Y_i^\ast)\right]
\end{align}
where~\eqref{eq:Tn_65} and~\eqref{eq:Tn_shift_di} follow from $|a-b|^3 \leq 4|a|^3 + 4|b|^3$, and~\eqref{proof:converse_eq:Tn_final} is from~\eqref{proof:lem:upperbdd_disp_Concavity} and Jensen's inequality. By assumption \ref{ass:uni_moments} and Lemma~\ref{lem:bdd_curv}, \eqref{proof:converse_eq:Tn_final} is uniformly bounded for sufficiently large $n$. The average variance is 
\begin{equation}
    V_n^2 = \frac{1}{n}\sum_{i=1}^{n}\mathrm{Var} 
    \left[ \jmath\left(X_i, d_i\right) \right] 
    = \mathbb{V}_n(d),
\end{equation}
which is bounded away from zero by the assumption~\ref{ass:bdd_disp}. Therefore, the Berry--Esseen constant $B_n$ in Theorem~\ref{thm:berry_Esseen}, stated in the Appendix~\ref{app:berry_esseen} below, is uniformly bounded for sufficiently large $n$.

Set $\gamma=\frac{1}{2}\log n$ in Lemma~\ref{lem:converse}. For fixed $\epsilon\in(0,1)$, define
\begin{equation}
\epsilon_n \triangleq \epsilon+\frac{B_n+1}{\sqrt{n}},
\end{equation}
so that $\epsilon_n\in(0,1)$ for all sufficiently large  $n$. Choose
\begin{equation}
\log M \triangleq n \mathbb{R}_n(d)
+ \sqrt{n \mathbb{V}_n(d)}\, Q^{-1}(\epsilon_n)
-\frac{\log n}{2}.
\end{equation}
Then, for any $(n,M,d,\epsilon')$ code, Lemma~\ref{lem:converse} yields
\begin{align}
\epsilon'
&\ge \mathbb{P} \left[\jmath(\myVec{X},d)\ge \log M + \frac{\log n}{2}\right]-\frac{1}{\sqrt{n}} \\
&= \mathbb{P} \left[\jmath(\myVec{X},d)\ge n\mathbb{R}_n(d)
+ \sqrt{n \mathbb{V}_n(d)}\, Q^{-1}(\epsilon_n)\right]-\frac{1}{\sqrt{n}} \\
&\ge \epsilon_n-\frac{B_n}{\sqrt{n}}-\frac{1}{\sqrt{n}} \label{prf:conv_eq:apply_BE} \\
&= \epsilon,
\end{align}
where~\eqref{prf:conv_eq:apply_BE} follows from Berry--Esseen Theorem~\ref{thm:berry_Esseen}. Therefore, any admissible code must satisfy
\begin{equation}
R(n,d,\epsilon)\ge \frac{\log M}{n}.
\end{equation}
Finally, the Taylor expansion of $Q^{-1}(\epsilon_n)$ around $\epsilon$, together with Lemma~\ref{lem:upperbdd_disp}, yields
\begin{align}
    R(n, d, \epsilon) \geq \mathbb{R}_n(d) + \sqrt{\frac{\mathbb{V}_n(d)}{n}}Q^{-1}(\epsilon) - \frac{\log n}{2n} + O\left (\frac{1}{n}\right)
\end{align}
\end{proof}
\textit{Remark:} 
Assumption~\ref{ass:uni_moments} can be weakened from a uniform bound on sixth moments to a uniform bound on the average third absolute moment of the $d$-tilted information.

\section{Achievability}
\label{sec:achievability}
Our achievability proof builds on the nonasymptotic bounds of Kostina and Verdú~\cite{kostina12_FixedLength}. To extend their second-order analysis to our setting with non-identical source components, we introduce a point-mass product proxy measure, which is the key new device enabling our argument. 

For any $\myVec{x} \in \boldsymbol{\mySet{X}}$, we define the proxy random vector $\myVec{\hat X}(\myVec{x})$ to be distributed according to the point-mass product measure
\begin{equation}
\label{eq:proxy}
P_{\myVec{\hat X}(\myVec{x})}(\myVec{a})
\triangleq \prod_{i=1}^n \delta_{x_i}(a_i).
\end{equation}
This proxy measure ensures that the expected value of the distortion and information densities under the proxy distribution matches the empirical average induced by the sequence $\myVec{x}$, allowing us to invoke the Berry--Esseen Theorem and to construct a typical set.

We define the distortion $d$-ball $\mathcal{B}(\myVec{x},d)$ centered at $\myVec{x}\in\boldsymbol{\mySet{X}}$ as
\begin{equation}
\mathcal{B}(\myVec{x},d) \triangleq \left\{ \myVec{y}' \in \boldsymbol{\mySet{Y}} \colon \mathsf{d}(\myVec{x}, \myVec{y}') \le d \right\}.
\end{equation}
The random-coding bound, stated next, relates the excess-distortion probability to the probability that reproduction $\myVec{Y}$ falls in the distortion-$d$ ball for $\myVec{X}$.
\begin{lemma}[Kostina and Verdú~\cite{kostina12_FixedLength}, Cor. 11]
\label{lem:rcb}
There exists an $(n,M,d,\epsilon)$ code with
\begin{equation}
\label{eq:achievability}
\epsilon \le \inf_{P_{\myVec{Y}}} 
\mathbb{E}_{\myVec{X}}\left[
e^{-M P_{\myVec{Y}}(\mathcal{B}(\myVec{X},d))}
\right],
\end{equation}
where the infimum is over all probability distributions $P_{\myVec{Y}}$ on $\boldsymbol{\mySet{Y}}$ with $\myVec{Y}$ independent of $\myVec{X}$.
\end{lemma}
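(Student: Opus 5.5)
The plan is a standard random-coding argument: fix an arbitrary distribution $P_{\myVec{Y}}$ on $\boldsymbol{\mySet{Y}}$, bound the average excess-distortion probability of a random codebook drawn from it, extract a deterministic code, and finally optimize over $P_{\myVec{Y}}$.

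\textbf{Random codebook and encoder.} Generate $M$ codewords $\myVec{c}_1,\dots,\myVec{c}_M$ i.i.d.\ according to $P_{\myVec{Y}}$, independently of $\myVec{X}$, and let the decoder be $\myfun{g}(m)=\myVec{c}_m$. The encoder, given $\myVec{x}$, outputs the smallest index $m$ with $\mathsf{d}(\myVec{x},\myVec{c}_m)\le d$, and outputs $1$ if no such index exists. This rule is measurable because each event $\{\mathsf{d}(\myVec{x},\myVec{c}_m)\le d\}$ is measurable. With this encoder, an excess-distortion event occurs only if no codeword lies in $\mathcal{B}(\myVec{x},d)$.

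\textbf{Conditional bound.} Conditioned on $\myVec{X}=\myVec{x}$, the $M$ codewords fall in $\mathcal{B}(\myVec{x},d)$ independently, each with probability $P_{\myVec{Y}}(\mathcal{B}(\myVec{x},d))$. Hence the conditional probability of excess distortion, averaged over the codebook, equals
\begin{equation}
\bigl(1-P_{\myVec{Y}}(\mathcal{B}(\myVec{x},d))\bigr)^M \le e^{-M P_{\myVec{Y}}(\mathcal{B}(\myVec{x},d))},
\end{equation}
using $1-t\le e^{-t}$.

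\textbf{Averaging and derandomization.} Averaging over $\myVec{X}$ and applying Fubini/Tonelli shows that the excess-distortion probability, averaged over both the source and the random codebook, is at most $\mathbb{E}_{\myVec{X}}\bigl[e^{-MP_{\myVec{Y}}(\mathcal{B}(\myVec{X},d))}\bigr]$. Since an average over codebooks is at least the infimum over codebooks, some deterministic codebook achieves excess-distortion probability no larger than this average. This yields an $(n,M,d,\epsilon)$ code with $\epsilon$ bounded by the right-hand side of~\eqref{eq:achievability} for this particular $P_{\myVec{Y}}$.

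\textbf{Optimization over $P_{\myVec{Y}}$ (main obstacle).} Because $P_{\myVec{Y}}$ was arbitrary, the previous step gives a code with error at most the bound for every choice of $P_{\myVec{Y}}$. The only delicate point is that the infimum over $P_{\myVec{Y}}$ need not be attained. If it is not attained, I would state the conclusion as: for every $\delta>0$ there is a code with $\epsilon\le \inf_{P_{\myVec{Y}}}(\cdot)+\delta$. This suffices for the subsequent second-order analysis, which evaluates the bound at a single specific choice $P_{\myVec{Y}}=P_{\myVec{Y^\ast}}$. Alternatively, one could argue that the minimum excess-distortion probability over deterministic codebooks of size $M$ is attained and is at most every such bound.
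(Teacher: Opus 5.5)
Your proof is correct and is the same random-coding argument that underlies the cited Corollary~11 of Kostina and Verd\'u: i.i.d.\ codewords from $P_{\myVec{Y}}$, a first-fit encoder, the exact bound $\mathbb{E}[(1-P_{\myVec{Y}}(\mathcal{B}(\myVec{X},d)))^M]$, then $1-t\le e^{-t}$ and derandomization. The paper itself gives no separate proof.

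One correction concerns your ``main obstacle.'' Your fallback claim that an optimal size-$M$ codebook is always attained is false for abstract alphabets. Take $n=M=1$, $X$ uniform on $[0,1]$, $\boldsymbol{\mySet{Y}}=[0,1)$, $\mathsf{d}(x,y)=\mathbbm{1}\{x\ge y\}$ and $d=\tfrac12$. The excess probability of codeword $y$ is $1-y$, whose infimum $0$ is achieved by no codeword.

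You need neither that claim nor the $\delta$-weakening, because the slack you discarded proves the literal statement. For every $M\ge 1$ there is $c_M>0$ with $e^{-Mp}-(1-p)^M\ge c_M p^2$ on $[0,1]$. Take a minimizing sequence whose bound values tend to $b^\ast<1$. Eventually $\mathbb{E}[P_{\myVec{Y}}(\mathcal{B}(\myVec{X},d))]\ge (1-b^\ast)/(2M)$. So the random-coding average sits at least $c_M(1-b^\ast)^2/(4M^2)$ below the bound, and hence eventually below $b^\ast$ itself. This gives a deterministic code with $\epsilon\le b^\ast$; the case $b^\ast=1$ is trivial.
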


The next lemma is one of the main contributions of this paper. It generalizes the non-asymptotic refinement of lossy \ac{aep}~\cite{kontoyiannis00_PointwiseRedundancy, kostina12_FixedLength, yang99_RedundancyRD}. The lossy \ac{aep} relates the probability of distortion-$d$ balls to the $d$-tilted information and plays a central role in 
finite blocklength achievability bounds in rate-distortion theory~\cite{kostina12_FixedLength}. 
\begin{lemma}[Lossy AEP]
\label{lem:lossyAEP}
For any $d \in (d_{\min},d_{\max})$ and $\epsilon \in (0,1)$, there exist constants $n_0, C_0, c, K>0$ such that for all $n > n_0$,
\begin{equation}
\Prob{\log\frac{1}{P_{\myVec{Y^\ast}}(\mathcal{B}(\myVec{X},d))}
\le
\jmath(\myVec{X},d) + C_0 \log n + c}
\ge 1 - \frac{K}{\sqrt{n}},
\end{equation}
where $\myVec{Y^\ast}$ is the \ac{rdf}-achieving reproduction vector for $\mathbb{R}_n(d)$.
\end{lemma}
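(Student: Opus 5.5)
The plan is to adapt the change-of-measure argument behind the lossy AEP of~\cite{kostina12_FixedLength}. The new ingredient is that every ``empirical'' quantity is replaced by an average under the point-mass product proxy $P_{\myVec{\hat X}(\myVec{x})}$ in~\eqref{eq:proxy}. Under the proxy, such averages are sums of independent, non-identical terms, so they are controlled by the Berry--Esseen Theorem~\ref{thm:berry_Esseen} and Chebyshev's inequality rather than by the law of large numbers.

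\textbf{Step 1: change of measure for a fixed sequence.} Fix $\myVec{x}$ and $\lambda>0$. Let $P_{\myVec{\hat Y}}$ be the tilted distribution defined by
\begin{equation*}
\frac{dP_{\myVec{\hat Y}}}{dP_{\myVec{Y^\ast}}}(\myVec{y}) = \exp\bigl(J_{\myVec{Y^\ast}}(\myVec{x},\lambda) - \lambda n\, \mathsf{d}(\myVec{x},\myVec{y})\bigr).
\end{equation*}
Since $P_{\myVec{Y^\ast}}$ is a product measure and the distortion is separable, $P_{\myVec{\hat Y}}$ is also a product of single-letter tilted measures $P_{\hat Y_i}$. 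For any $\delta>0$,
\begin{equation*}
P_{\myVec{Y^\ast}}(\mathcal{B}(\myVec{x},d)) \ge \exp\bigl(-J_{\myVec{Y^\ast}}(\myVec{x},\lambda) + \lambda n (d-\delta)\bigr)\, \mathbb{P}\bigl[d-\delta < \mathsf{d}(\myVec{x},\myVec{\hat Y}) \le d\bigr].
\end{equation*}
I will choose $\lambda=\lambda_{\myVec{x}}$ so that $\mathbb{E}[\mathsf{d}(\myVec{x},\myVec{\hat Y})]=d$, which is the equation $J'_{\myVec{Y^\ast}}(\myVec{x},\lambda)=nd$. I will take $\delta$ of order $1/\sqrt{n}$, scaled by the tilted standard deviation. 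The distortion $\mathsf{d}(\myVec{x},\myVec{\hat Y})$ is an average of independent terms. Its variance is $\frac{1}{n^2}|J''_{\myVec{Y^\ast}}(\myVec{x},\lambda_{\myVec{x}})|$, and its third moments are sums of single-letter tilted moments. So, provided these are bounded above and the variance is bounded below, Berry--Esseen makes the window probability at least $c'/\sqrt{n}$. This gives
\begin{equation*}
\log\frac{1}{P_{\myVec{Y^\ast}}(\mathcal{B}(\myVec{x},d))} \le \max_{\lambda>0}\bigl[J_{\myVec{Y^\ast}}(\myVec{x},\lambda)-\lambda n d\bigr] + \tfrac12\log n + c''.
\end{equation*}

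\textbf{Step 2: compare with $\jmath(\myVec{x},d)$.} Let $h(\lambda)\triangleq J_{\myVec{Y^\ast}}(\myVec{x},\lambda)-\lambda n d$. This function is concave, and $h(\lambda^\ast_n)=\jmath(\myVec{x},d)$. If $|h''|\ge n\underline{v}$ on an interval around $\lambda_n^\ast$ that contains $\lambda_{\myVec{x}}$, then a quadratic bound gives
\begin{equation*}
h(\lambda_{\myVec{x}})-h(\lambda_n^\ast)\le \frac{h'(\lambda_n^\ast)^2}{2n\underline{v}}.
\end{equation*}
Here $h'(\lambda_n^\ast)=\sum_i \bigl(J'_i(x_i,\lambda_n^\ast)-d_i\bigr)$, where $J'_i$ denotes the single-letter tilted mean distortion. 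Under the proxy measure this is a centered sum of independent terms, because $\mathbb{E}[J'_i(X_i,\lambda_n^\ast)]=d_i$ by the optimality conditions. On the event $|h'(\lambda_n^\ast)|\le \sqrt{C n\log n}$ the excess is $O(\log n)$. This yields the constant $C_0$. The same bound also places $\lambda_{\myVec{x}}$ within $O(\sqrt{\log n/n})$ of $\lambda_n^\ast$, which lies inside the interval where the curvature bound holds.

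\textbf{Step 3: the typical set.} I define $\mathcal{T}_n$ as the set of $\myVec{x}$ satisfying three conditions, each an average of a function of $x_i$ under the proxy:
\begin{enumerate}[label=(\alph*)]
\item $|\sum_i (J'_i(x_i,\lambda_n^\ast)-d_i)|\le\sqrt{Cn\log n}$;
\item $\frac1n\sum_i |J''_i(x_i,\lambda)|\ge \underline{v}$ uniformly for $|\lambda-\lambda_n^\ast|\le\eta$;
\item $\frac1n\sum_i$ of the single-letter tilted third absolute moments is at most $\bar{t}$.
\end{enumerate}
Condition (a) fails with probability $O(1/\sqrt{n})$ by Berry--Esseen, or even Chebyshev; its variance is bounded through the concavity bound~\eqref{proof:lem:upperbdd_disp_Concavity} and assumption~\ref{ass:uni_moments}. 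Condition (c) fails with probability $O(1/n)$ by Chebyshev. The tilted moments are dominated by $\mathbb{E}[\mathsf{d}_i^k(x_i,Y_i^\ast)]\,e^{J_i}$ terms, and the sixth-moment assumption~\ref{ass:uni_moments} controls their variances; this is exactly why sixth moments suffice. Lemma~\ref{lem:bdd_curv} keeps $\lambda_n^\ast$ bounded, so every constant is uniform in $n$.

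\textbf{Main obstacle.} The hardest step is (b), a uniform lower bound on the tilted variance. Its mean $\frac1n\mathbb{E}|J''_{\myVec{Y^\ast}}(\myVec{X},\lambda)|$ must be bounded away from zero uniformly in large $n$ and in $\lambda$ near $\lambda_n^\ast$. I would argue this first at $\lambda_n^\ast$ by contradiction. If the conditional variance of $\mathsf{d}(\myVec{X},\myVec{\hat Y})$ vanished, the tilted reproduction would be essentially deterministic given $\myVec{x}$. Together with $d<d_{\max}$ from assumption~\ref{ass:dist_lev}, $d>d_{\min}$, and $\lambda_n^\ast$ bounded away from zero, this would contradict $\lambda_n^\ast=-\mathbb{R}'_n(d)$ being finite and nonzero. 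I would then extend the bound to a neighborhood of $\lambda_n^\ast$ by Lipschitz continuity of $J''$ in $\lambda$, using the bounded third tilted moments. Concentration of the proxy average around its mean then follows by Chebyshev. The lemma follows with $K$ collecting the three failure probabilities and $c$ collecting the Berry--Esseen and $\delta$ constants.
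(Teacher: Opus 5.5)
Your architecture matches the paper's. Under the point-mass proxy, the minimizer $P_{\myVec{\hat Z^\ast}|\myVec{\hat X}=\myVec{x}}$ of~\eqref{eq:crem} is exactly your tilted $P_{\myVec{\hat Y}}$. Your quadratic bound is the paper's $(\Sigma'(\myVec{x}))^2/(2\Sigma''(\myVec{x}))$. Localizing $\lambda_{\myVec{x}}$ from the gradient and curvature bounds is more direct than the paper's detour through $\mathbb{R}_n''(\myVec{X},\myVec{Y^\ast},\cdot)$.

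\textbf{Condition (a).} This is the concrete gap. You need $\Prob{|\Sigma'(\myVec{X})|>\sqrt{Cn\log n}}=O(1/\sqrt{n})$, where $\Sigma'(\myVec{x})=\sum_i(J'_{Y_i^\ast}(x_i,\lambda_n^\ast)-d_i)$.
\begin{itemize}
\item Chebyshev only gives $V_n'/(C\log n)$, where $V_n'=\frac1n\sum_i\var{J'_{Y_i^\ast}(X_i,\lambda_n^\ast)}$. No fixed polynomial moment does better, because the threshold is only $\sqrt{\log n}$ standard deviations.
\item Berry--Esseen would need $V_n'$ bounded away from zero. You never establish this, and it does not follow from the assumptions. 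Assumption~\ref{ass:bdd_disp} concerns $\var{\jmath(X_i,d_i)}$, not $\var{J'}$. With non-identical components $V_n'$ is not guaranteed to stay away from zero, so the ratio $T_n/(V_n')^{3/2}$ can blow up.
\end{itemize}
The paper closes this step by truncating at $b_n=n^{1/4}$. A union bound with Markov's inequality on sixth moments costs $\sum_i\E{|W_i|^6}/b_n^6=O(1/\sqrt{n})$. Bernstein's inequality with threshold $\sqrt{(V_n'+\epsilon')n\log n}$ then bounds the truncated tail by $n^{-1/2}$. This, not condition (c), is where sixth moments are essential. An alternative fix: since $\Prob{S>t}\le 2\Prob{S+G>t}$ for independent symmetric $G$, smoothing each $W_i$ by an independent $\mathcal{N}(0,1)$ restores a variance floor for Berry--Esseen.

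\textbf{Step (b).} This is not yet a proof. The bound must be quantitative and uniform in $n$. The relevant contradiction is with $d<d_{\max}$, not with $\lambda_n^\ast$ being finite and nonzero. A small tilted variance at $\lambda_n^\ast$ alone does not force $\mathsf{\bar d}_{Y_i^\ast,1}(x_i,0)\approx\mathsf{\bar d}_{Y_i^\ast,1}(x_i,\lambda_n^\ast)$: the tilted variance is not monotone in $\lambda$, and the likelihood ratio between tilts is unbounded.

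The missing device is truncation. On $\{\mathsf{d}_i<M_1\}$ the tilted density with respect to $P_{Y_i^\ast}$ is at least $e^{-C_1M_1}$, by Lemma~\ref{lem:bdd_curv}. Hence $|J''_{Y_i^\ast}(x_i,\lambda_n^\ast)|\ge e^{-C_1M_1}\bigl(\E{(\mathsf{d}_i(x_i,Y_i^\ast)-\mathsf{\bar d}_{Y_i^\ast,1}(x_i,\lambda_n^\ast))\mathbbm{1}\{\mathsf{d}_i<M_1\}}\bigr)^2$. Average this bound using two facts: $\frac1n\sum_i\E{\mathsf{\bar d}_{Y_i^\ast,1}(X_i,0)-\mathsf{\bar d}_{Y_i^\ast,1}(X_i,\lambda_n^\ast)}\ge d_{\max}-\epsilon'-d$, and the truncation tail is at most $K_0^{1/3}/M_1$. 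This yields a uniform $\kappa_0>0$.

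Your Lipschitz extension to a neighbourhood of $\lambda_n^\ast$, via bounded third tilted moments, is sound. It in fact supplies uniformity that the paper's continuity argument leaves implicit.

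\textbf{Condition (c).} Bound the tilted moments by $\mathsf{\bar d}_{Y_i^\ast,k}(x_i,\lambda)\le\mathsf{\bar d}_{Y_i^\ast,k}(x_i,0)$, using monotonicity in $\lambda$. Do not bound them by $\E{\mathsf{d}_i^k}e^{J_i}$. The factor $e^{2J_{Y_i^\ast}(X_i,\lambda_n^\ast)}$ is not integrable, for example for active Gaussian components, so assumption~\ref{ass:uni_moments} does not control that variance.
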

For Gauss-Markov sources, a weaker version of Lemma~\ref{lem:lossyAEP} was shown in~\cite{tian19_Stationary, tian21_NonStationary}, while for \ac{iid} sources it is equivalent to~\cite[Lem.~2]{kostina12_FixedLength}. To prove Lemma~\ref{lem:lossyAEP}, we need Lemmas~\ref{lem:lower_bound_distortion_balls}--\ref{lem:tilted_concentration}, stated next. 
\begin{lemma}[Kostina and Verdú~\cite{kostina12_FixedLength}, Lem.~1]
\label{lem:lower_bound_distortion_balls}
Fix $d \in (d_{\min}, d_{\max})$, $n \in \mathbb{N}$, and a distribution $P_{\myVec{Y}}$. Then for any $\myVec{x} \in \boldsymbol{\mySet{X}}$, it holds that
\begin{align}
\label{eq:d_ball_prob}
P_{\myVec{Y}}(\mathcal{B}(\myVec{x}, d))
\geq &\sup_{P_{\myVec{\hat X}},  \gamma > 0} \bigg(
\exp \left(- \hat{\lambda}_n^\ast \gamma
 - J_\myVec{Y}(\myVec{x}, \hat{\lambda}_n^\ast) + \hat{\lambda}_n^\ast nd
\right) \notag \\
& \mathbb{P}\left[
d - \frac{\gamma}{n}
< \mathsf{d}(\myVec{x}, \myVec{\hat Z^\ast})
\leq d
\middle|
\myVec{\hat X} = \myVec{x}
\right]\bigg),
\end{align}
where the supremum is over all probability distributions $P_{\myVec{\hat X}}$ on $\boldsymbol{\mySet{X}}$, $\hat{\lambda}_n^\ast \triangleq - \mathbb{R}_n'(\myVec{\hat X}, \myVec{Y}, d)$, and the random vector $\myVec{\hat Z^\ast}$ is distributed according to the minimizer $P_{\myVec{\hat Z^\ast}|\myVec{\hat X}}$ of~\eqref{eq:crem} achieving $\mathbb{R}_n(\myVec{\hat X}, \myVec{Y}, d)$.
\end{lemma}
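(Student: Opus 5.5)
The plan is a direct change-of-measure argument. The key input is the Gibbs (exponential-tilting) form of the minimizer of~\eqref{eq:crem}. Fix an arbitrary $P_{\myVec{\hat X}}$ on $\boldsymbol{\mySet{X}}$, an arbitrary $\gamma>0$, and $\myVec{x}\in\boldsymbol{\mySet{X}}$. It suffices to show that the bracketed expression in~\eqref{eq:d_ball_prob} is a lower bound on $P_{\myVec{Y}}(\mathcal{B}(\myVec{x},d))$ for this choice; taking the supremum then finishes the proof.

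\textbf{Step 1 (form of the minimizer).} I will invoke Csisz\'ar's characterization~\cite{csiszar74_ExtremumInfoTheory} of the optimizer in~\eqref{eq:crem}, exactly as is used in~\cite{kostina12_FixedLength}. With $\hat\lambda_n^\ast=-\mathbb{R}_n'(\myVec{\hat X},\myVec{Y},d)\ge 0$, the minimizing kernel satisfies
\begin{equation*}
\frac{dP_{\myVec{\hat Z^\ast}|\myVec{\hat X}=\myVec{x}}}{dP_{\myVec{Y}}}(\myVec{y})
=\frac{\exp\left(-\hat\lambda_n^\ast n\,\mathsf{d}(\myVec{x},\myVec{y})\right)}{\mathbb{E}\left[\exp\left(-\hat\lambda_n^\ast n\,\mathsf{d}(\myVec{x},\myVec{Y})\right)\right]}
=\exp\left(J_{\myVec{Y}}(\myVec{x},\hat\lambda_n^\ast)-\hat\lambda_n^\ast n\,\mathsf{d}(\myVec{x},\myVec{y})\right).
\end{equation*}
In particular, on the set $\{\mathsf{d}(\myVec{x},\myVec{y})<\infty\}$ the measures $P_{\myVec{Y}}$ and $P_{\myVec{\hat Z^\ast}|\myVec{\hat X}=\myVec{x}}$ are mutually absolutely continuous.

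If $P_{\myVec{\hat X}}$ places no mass at $\myVec{x}$, the conditional law given $\myVec{\hat X}=\myVec{x}$ is only defined almost everywhere. In that case I simply take the displayed formula as the definition of the kernel at $\myVec{x}$; it is a version of the conditional distribution. For the point-mass proxy~\eqref{eq:proxy} used later, this issue does not arise at all.

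\textbf{Step 2 (change of measure on the ball).} The next step is to write the ball probability under the tilted law:
\begin{equation*}
P_{\myVec{Y}}(\mathcal{B}(\myVec{x},d))
=\mathbb{E}\left[\mathbbm{1}\{\mathsf{d}(\myVec{x},\myVec{\hat Z^\ast})\le d\}\,
\exp\left(\hat\lambda_n^\ast n\,\mathsf{d}(\myVec{x},\myVec{\hat Z^\ast})-J_{\myVec{Y}}(\myVec{x},\hat\lambda_n^\ast)\right)\middle|\myVec{\hat X}=\myVec{x}\right].
\end{equation*}
I then restrict the indicator to the thinner shell $\{d-\gamma/n<\mathsf{d}(\myVec{x},\myVec{\hat Z^\ast})\le d\}$. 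Since $\hat\lambda_n^\ast\ge 0$, on this shell the exponent is at least $\hat\lambda_n^\ast(nd-\gamma)-J_{\myVec{Y}}(\myVec{x},\hat\lambda_n^\ast)$. Pulling this constant out of the expectation gives precisely
\begin{equation*}
\exp\left(-\hat\lambda_n^\ast\gamma-J_{\myVec{Y}}(\myVec{x},\hat\lambda_n^\ast)+\hat\lambda_n^\ast nd\right)\,
\mathbb{P}\left[d-\tfrac{\gamma}{n}<\mathsf{d}(\myVec{x},\myVec{\hat Z^\ast})\le d\,\middle|\,\myVec{\hat X}=\myVec{x}\right].
\end{equation*}

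\textbf{Step 3.} Take the supremum over $P_{\myVec{\hat X}}$ and $\gamma>0$ of this lower bound.

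\textbf{Main obstacle.} The only nontrivial point is Step 1: justifying the tilted form and that $\hat\lambda_n^\ast$ is the correct (finite, nonnegative) multiplier in the abstract-alphabet setting. This requires $d$ to lie strictly inside the range where $\mathbb{R}_n(\myVec{\hat X},\myVec{Y},d)$ is finite and differentiable. Here I would rely on the hypothesis $d\in(d_{\min},d_{\max})$ together with Csisz\'ar's result, as in~\cite[Lem.~1]{kostina12_FixedLength}, rather than reprove it. If differentiability failed for some $P_{\myVec{\hat X}}$, the bound would be vacuous for that choice and would not affect the supremum. Once this characterization is granted, Steps 2 and 3 are a one-line Radon--Nikodym computation.
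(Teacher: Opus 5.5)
Your proposal is correct and is essentially the change-of-measure proof of \cite[Lem.~1]{kostina12_FixedLength}, which the paper cites rather than reproves. That proof writes $P_{\myVec{Y}}(\mathcal{B}(\myVec{x},d))$ as an expectation under the exponentially tilted kernel, restricts to the shell $\{d-\gamma/n<\mathsf{d}(\myVec{x},\myVec{\hat Z^\ast})\le d\}$, and lower bounds the Radon--Nikodym factor there using $\hat\lambda_n^\ast\ge 0$.

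Note that Steps 2--3 never use optimality, only the tilted form. So the bound holds for any $\lambda\ge 0$ with its tilted kernel, and your ``main obstacle'' only concerns identifying that kernel with the minimizer of~\eqref{eq:crem}.
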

\begin{lemma}[Shell-probability lower bound]
\label{lem:shell_probability_lower_bound}
Fix $d \in (d_{\min}, d_{\max})$ and $\epsilon \in (0,1)$. Then, there exist constants $\delta_0, n_0 > 0$ such that for all $\delta \le \delta_0$, $n \ge n_0$, there exists a typical set $\mathcal{T}_n$ and constants $\gamma, C_3, K_1 > 0$ such that 
\begin{equation}
\label{eq:typical}
    \mathbb{P}\left[\myVec{X} \not \in \mathcal{T}_n
\right]
\leq \frac{K_1}{\sqrt{n}},
\end{equation}
and for all $\myVec{x} \in \mathcal{T}_n$,
\begin{equation}
\label{eq:shell}
\mathbb{P}\left[
d - \frac{\gamma}{n}
< \mathsf{d}(\myVec{x}, \myVec{\hat Z^\ast})\leq d
\middle|
\myVec{\hat X} = \myVec{x}
\right]
\geq \frac{C_3}{\sqrt{n}},
\end{equation}
\begin{equation}
\label{eq:lambda_concentration}
|\hat{\lambda}_n^\ast(\myVec{x})- \lambda_n^\ast | < \delta
\end{equation}
where $\hat{\lambda}_n^\ast(\myVec{x}) = - \mathbb{R}_n'\bigl(\myVec{\hat X}(\myVec{x}), \myVec{Y^\ast}, d\bigr)$ and the random vector $\myVec{\hat Z^\ast}$ is distributed according to the minimizer $P_{\myVec{\hat Z^\ast}|\myVec{\hat X(x)}}$ of~\eqref{eq:crem} achieving $\mathbb{R}_n\bigl(\myVec{\hat X}(\myVec{x}), \myVec{Y^\ast}, d\bigr)$.
\end{lemma}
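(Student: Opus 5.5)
The point-mass proxy $\myVec{\hat X}(\myVec{x})$ turns the optimization $\mathbb{R}_n(\myVec{\hat X}(\myVec{x}),\myVec{Y^\ast},d)$ into a deterministic tilting problem for the fixed sequence $\myVec{x}$. Because $P_{\myVec{Y^\ast}}$ is a product measure and the distortion is separable, the minimizer $P_{\myVec{\hat Z^\ast}|\myVec{\hat X}=\myVec{x}}$ factorizes into independent tilted coordinates,
\begin{equation*}
dP_{\hat Z_i^\ast|\hat X_i=x_i}(y)=\frac{e^{-\lambda \mathsf{d}_i(x_i,y)}\,dP_{Y_i^\ast}(y)}{\mathbb{E}\bigl[e^{-\lambda \mathsf{d}_i(x_i,Y_i^\ast)}\bigr]},
\qquad \lambda=\hat\lambda_n^\ast(\myVec{x}).
\end{equation*}
The parameter $\hat\lambda_n^\ast(\myVec{x})$ is the root of
\begin{equation*}
\phi_{\myVec{x}}(\lambda)\triangleq -\frac{1}{n}J'_{\myVec{Y^\ast}}(\myVec{x},\lambda)=d,
\qquad
J'_{\myVec{Y^\ast}}(\myVec{x},\lambda)=-\sum_{i=1}^n \mathbb{E}[\mathsf{d}_i(x_i,\hat Z_i^\ast)].
\end{equation*}
The plan is to show that (a) $\phi_{\myVec{x}}$ is uniformly close to its population counterpart $\bar\phi(\lambda)=\mathbb{E}[\phi_{\myVec{X}}(\lambda)]$ near $\lambda_n^\ast$, where $\bar\phi(\lambda_n^\ast)=d$ by the KKT characterization; (b) $\phi_{\myVec{x}}$ has slope bounded away from zero there; and then (c) apply Berry--Esseen to the independent, non-identical summands $\mathsf{d}_i(x_i,\hat Z_i^\ast)$ to get the shell probability.

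\textbf{Step 1 (typical set).} Define $\mathcal{T}_n$ as the set of $\myVec{x}$ satisfying three conditions, for a small slack $\eta$ to be fixed later.
\begin{itemize}
\item[(T1)] $\bigl|\frac{1}{n}J'_{\myVec{Y^\ast}}(\myVec{x},\lambda_n^\ast)-\frac{1}{n}\mathbb{E}[J'_{\myVec{Y^\ast}}(\myVec{X},\lambda_n^\ast)]\bigr|\le \sqrt{\log n/n}$.
\item[(T2)] $\frac{1}{n}J''_{\myVec{Y^\ast}}(\myVec{x},\lambda)\ge \frac12\,\mathbb{E}[\frac{1}{n}J''_{\myVec{Y^\ast}}(\myVec{X},\lambda)]$ for $\lambda$ in a neighborhood of $\lambda_n^\ast$.
\item[(T3)] $\frac{1}{n}\sum_i \mathbb{E}[|\mathsf{d}_i(x_i,\hat Z_i^\ast)|^3]\le K$ (a third-moment bound for Berry--Esseen).
\end{itemize}
Each term in (T1)--(T3) is a sum of independent functions of the $X_i$. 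Their variances are controlled by assumption~\ref{ass:uni_moments}: sixth moments of $\mathsf{d}_i$ bound the second moments of the cubic quantities in (T3) and of $J''$. Lemma~\ref{lem:bdd_curv} keeps $\lambda_n^\ast$ bounded. Chebyshev's inequality for (T2)--(T3) and Berry--Esseen for (T1) then give $\mathbb{P}[\myVec{X}\notin\mathcal{T}_n]\le K_1/\sqrt{n}$.

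The uniform-in-$\lambda$ requirement in (T2) needs more care. I would discretize $\lambda$ on a grid and use monotonicity/Lipschitz bounds, which follow because $J'''$ is controlled by third moments of the tilted distortion. A positive lower bound on $\mathbb{E}[J'']/n$ should follow from $d<d_{\max}$: if the conditional variance of the distortion under tilting were asymptotically zero, the tilted mean could not decrease below $\bar d_n$. Making this rigorous is the main obstacle; it may require a compactness/uniformity argument using $\liminf \bar d_n>d$.

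\textbf{Step 2 (concentration of $\hat\lambda$).} On $\mathcal{T}_n$, $\phi_{\myVec{x}}(\lambda_n^\ast)=d+O(\sqrt{\log n/n})$. The slope $-J''/n$ is bounded below by a constant $c_0>0$ on a $\delta_0$-neighborhood of $\lambda_n^\ast$. By the mean value theorem, $|\hat\lambda_n^\ast(\myVec{x})-\lambda_n^\ast|\le O(\sqrt{\log n/n})/c_0<\delta$ for $n\ge n_0$, which gives~\eqref{eq:lambda_concentration}. The intermediate value theorem guarantees the root exists inside the neighborhood.

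\textbf{Step 3 (shell probability).} Conditionally on $\myVec{\hat X}=\myVec{x}$, $n\,\mathsf{d}(\myVec{x},\myVec{\hat Z^\ast})$ is a sum of independent variables with mean exactly $nd$, by the choice of $\hat\lambda$. Its variance is $J''_{\myVec{Y^\ast}}(\myVec{x},\hat\lambda)\ge c_0 n$, and its normalized third moment is bounded by (T3). Berry--Esseen gives
\begin{equation*}
\mathbb{P}\Bigl[d-\tfrac{\gamma}{n}<\mathsf{d}\le d\Bigr]\ge \Phi(0)-\Phi\Bigl(-\tfrac{\gamma}{\sqrt{c_1 n}}\Bigr)-\frac{2B}{\sqrt{n}},
\end{equation*}
where $c_1$ upper-bounds the normalized variance $J''/n$ (from the moment bounds). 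For a fixed $\gamma$ this lower bound is of order $(\gamma/\sqrt{2\pi c_1}-2B)/\sqrt n$. Choosing $\gamma$ large enough that $\gamma/\sqrt{2\pi c_1}>2B+1$ yields~\eqref{eq:shell} with $C_3=1$ (or any positive margin).
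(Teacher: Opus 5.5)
Your architecture is the same as the paper's. The point-mass proxy turns $P_{\myVec{\hat Z^\ast}|\myVec{\hat X}=\myVec{x}}$ into a product of tilted kernels. A typical set then localizes $\hat\lambda_n^\ast(\myVec{x})$ and controls the tilted variance and third moment, and Berry--Esseen is applied to the independent, non-identical $\mathsf{d}_i(x_i,\hat Z_i^\ast)$. The paper instead sandwiches $\underline\lambda_n<\hat\lambda_n^\ast(\myVec{x})<\bar\lambda_n$ via typicality of the tilted mean at $-\mathbb{R}_n'(\myVec{X},\myVec{Y^\ast},d\pm 3\Delta/2)$; your mean-value localization is a harmless variant of that.

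\textbf{The gap.} The genuine gap is the step you defer: a uniform bound $\liminf_{n}\mathbb{E}\bigl[\frac{1}{n}|J''_{\myVec{Y^\ast}}(\myVec{X},\lambda_n^\ast)|\bigr]\ge\kappa_0>0$. Three things rest on it: your slope constant $c_0$ in Step~2, the content of (T2), and the lower bound on the normalized variance that keeps the Berry--Esseen constant bounded in Step~3.

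The heuristic you give does not deliver this bound. The derivative of the tilted mean is minus the tilted variance, so $d<d_{\max}$ only shows the tilted variance is large \emph{on average over} $\lambda\in[0,\lambda_n^\ast]$. Tilted variance is not monotone in $\lambda$, so nothing follows at $\lambda_n^\ast$ itself. A compactness argument has nothing to act on in this abstract, non-identical setting.

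The missing idea is the paper's truncation, carried out directly at $\lambda_n^\ast$:
\begin{itemize}
\item Since $\myVec{X}$ and $\myVec{Y^\ast}$ are independent and $\frac{1}{n}\sum_i\mathbb{E}[\mathsf{\bar d}_{Y_i^\ast,1}(X_i,\lambda_n^\ast)]=d$, you get $\frac{1}{n}\sum_i\mathbb{E}[\mathsf{d}_i(X_i,Y_i^\ast)-\mathsf{\bar d}_{Y_i^\ast,1}(X_i,\lambda_n^\ast)]\ge \bar d_n-d\ge d_{\max}-\epsilon'-d$ for large $n$.
\item Restricting to $\{\mathsf{d}_i<M_1\}$ costs at most $K_0^{1/3}/M_1$, by Cauchy--Schwarz and Markov.
\item On that event the tilting factor $e^{-\lambda_n^\ast\mathsf{d}_i}$ is at least $e^{-C_1M_1}$ by Lemma~\ref{lem:bdd_curv}, and the normalizer is at most $1$. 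Hence $|J''_{Y_i^\ast}(x_i,\lambda_n^\ast)|\ge e^{-C_1M_1}\bigl(\mathbb{E}[(\mathsf{d}_i(x_i,Y_i^\ast)-\mathsf{\bar d}_{Y_i^\ast,1}(x_i,\lambda_n^\ast))\mathbbm{1}\{\mathsf{d}_i<M_1\}]\bigr)^2$.
\item Jensen over $i$ and $X_i$ then gives $\kappa_0=e^{-C_1M_1}\bigl(d_{\max}-\epsilon'-d-K_0^{1/3}/M_1\bigr)^2>0$ for $M_1$ large.
\end{itemize}
Step~2 also needs this bound at $\lambda$ near $\lambda_n^\ast$. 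It follows from $|J'''_{Y_i^\ast}(x_i,\lambda)|\le 8\,\mathsf{\bar d}_{Y_i^\ast,3}(x_i,0)$, which also makes your grid in (T2) unnecessary. Alternatively, put $\inf_{|\alpha|<\delta}$ inside the sum, as the paper does with $\underline{V_n}$.

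\textbf{Smaller repairs.}
\begin{itemize}
\item In (T1), Berry--Esseen for $\sum_i J'_{Y_i^\ast}(X_i,\lambda_n^\ast)$ requires the per-summand variance $V_n'$ to be bounded away from zero. That does not follow from assumption~\ref{ass:bdd_disp}, which concerns $\jmath$, not $J'$; the paper avoids exactly this in Lemma~\ref{lem:tilted_concentration}. Even granting it, the threshold $\sqrt{\log n/n}$ without a factor $\ge V_n'$ leaves a tail of order $n^{-1/(2V_n')}$, not $O(n^{-1/2})$. You only need $|\hat\lambda_n^\ast(\myVec{x})-\lambda_n^\ast|<\delta$ for fixed $\delta$, so use a fixed deviation of order $c_0\delta$ and Chebyshev, which gives $O(1/n)$, as the paper does.
\item In (T3), the summands depend on $\hat\lambda_n^\ast(\myVec{x})$, hence on the whole vector, so it is not a sum of independent terms. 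Bound $\mathsf{\bar d}_{Y_i^\ast,3}(x_i,\hat\lambda_n^\ast(\myVec{x}))\le\mathsf{\bar d}_{Y_i^\ast,3}(x_i,0)$ by property~\ref{property:E} and impose typicality on the latter.
\item Your $J$ has the opposite sign to the paper's, where $J'_{Y_i^\ast}=\mathsf{\bar d}_{Y_i^\ast,1}\ge 0$ and $J''\le 0$. So (T2) and ``slope bounded below by $c_0$'' should be read in absolute value.
\end{itemize}
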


\begin{figure}[!t]
    \centering

    \subfloat[Rate versus blocklength with parameters:  $\myVec{a}_3 = \lbrack -0.5 ,0.4\rbrack$, $\sigma^2 = 0.73$, $d = 0.5$, $\epsilon = 10^{-2}.$ ]{%
        \includegraphics[width=0.95\columnwidth]{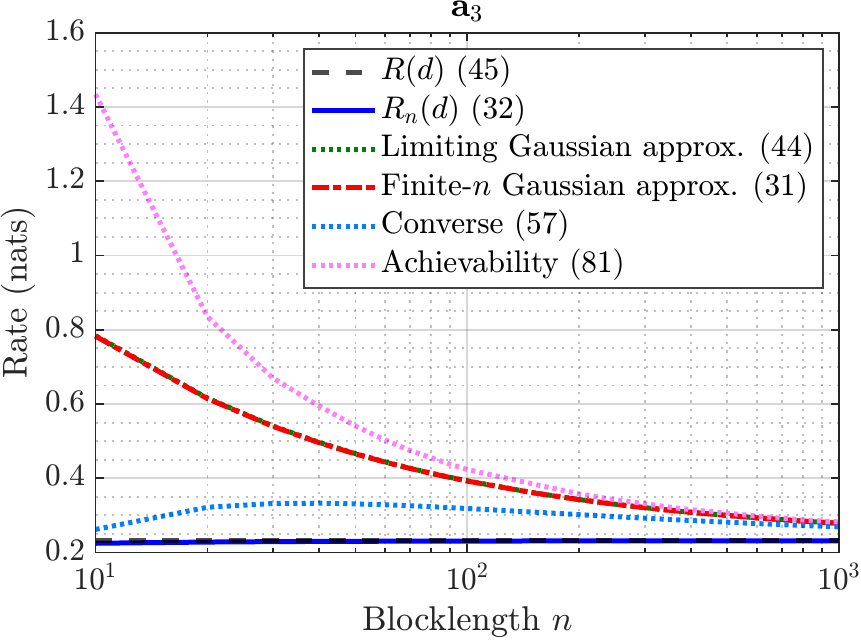}
        \label{fig:gar_rate_blocklength}
    }

    \vspace{0.6em}

    \subfloat[The blocklength required to sustain $R = 1.1\mathbb{R}(d)$, provided that excess-distortion probability is bounded by $\epsilon$. Color identifies the GAR source, while line style identifies $\epsilon$.]{%
        \includegraphics[width=0.95\columnwidth]{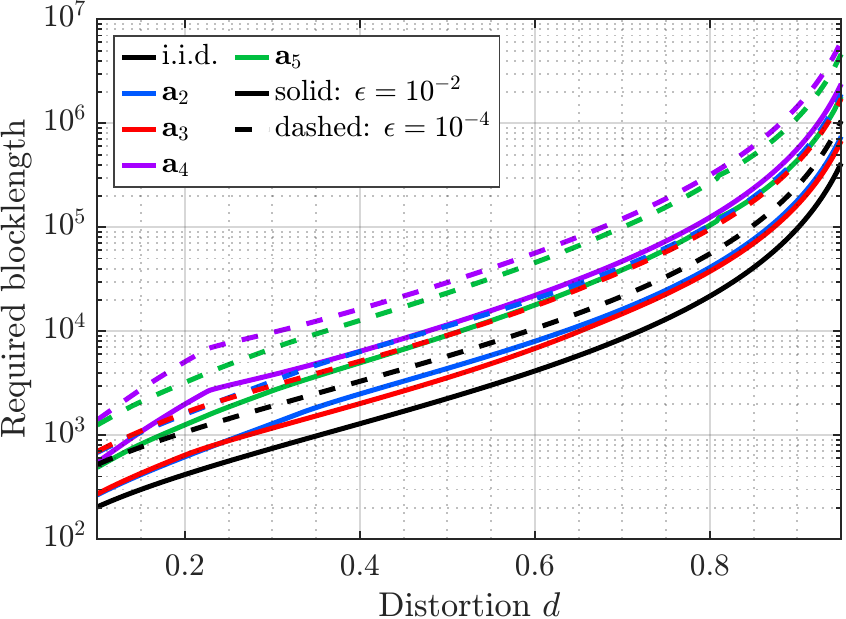}
        \label{fig:gar_required_blocklength}
    }

    \caption{Finite-blocklength behavior of Gaussian autoregressive sources with a common maximum distortion $d_{\max} = 1$: i.i.d. Gaussian source, AR(1) with $\myVec{a}_2 = 0.5$, AR(2) with $\myVec{a}_3 = [-0.5, 0.4]$, AR(4) with $\myVec{a}_4 = [-0.4, 0.3, 0.3, 0.25]$, and AR(8) with $\myVec{a}_5 = [-0.5, 0.9, -0.5, 0.9, -0.4, 0.8, -0.4, 0.7]$. The corresponding innovation variances are $\sigma_1^2 = 1$, $\sigma_2^2 = 0.75$, $\sigma_3^2 = 0.73$, $\sigma_4^2 = 0.4$, and $\sigma_5^2 = 0.3$.}
    \label{fig:gar_blocklength_summary}
\end{figure}

\begin{proof}[Proof]
The proof follows steps similar to those in~\cite[Lem.~4]{kostina12_FixedLength}. The main difference is the use of the proxy measure in~\eqref{eq:proxy} instead of an empirical measure. Under the proxy measure, the expected distortion equals the empirical distortion of the sequence $\myVec{x}$. This decomposition allows us to apply the Berry--Esseen Theorem even though the components are independent but not identically distributed. See Appendix~\ref{app:proof_shell_probability} for full details. 
\end{proof}
\begin{lemma}\label{lem:tilted_concentration}
Fix $d \in (d_{\min}, d_{\max})$ and $\epsilon \in (0,1)$. Then there exist constants $C_4, K_2 > 0$ such that for all large $n$,
\begin{align}
\mathbb{P}\bigg[
J_\myVec{Y^\ast}(\myVec{X}, \hat{\lambda}_n^\ast(\myVec{X})) &- \hat{\lambda}_n^\ast(\myVec{X}) nd \leq  J_\myVec{Y^\ast}(\myVec{X}, \lambda_n^\ast)  \notag \\
&- \lambda_n^\ast nd + C_4\log n
\bigg]
> 1 - \frac{K_2}{\sqrt{n}}.
\end{align}
\end{lemma}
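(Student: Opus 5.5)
We follow the strategy of \cite[Lem.~5]{kostina12_FixedLength}: expand the function $\lambda\mapsto J_{\myVec{Y^\ast}}(\myVec{x},\lambda)-\lambda nd$ around $\lambda_n^\ast$ by Taylor's theorem. For fixed $\myVec{x}$ this function is concave in $\lambda$, since $J_{\myVec{Y^\ast}}$ is a negative log-moment-generating function. Its derivative is $J'_{\myVec{Y^\ast}}(\myVec{x},\lambda)-nd$, where $J'_{\myVec{Y^\ast}}(\myVec{x},\lambda)=n\,\mathbb{E}[\mathsf{d}(\myVec{x},\myVec{Y}_\lambda)]$ under the tilted measure $dP_{\myVec{Y}_\lambda}\propto e^{-\lambda n \mathsf{d}(\myVec{x},\cdot)}dP_{\myVec{Y^\ast}}$. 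Its second derivative is $-J''_{\myVec{Y^\ast}}(\myVec{x},\lambda)=-n^2\var{\mathsf{d}(\myVec{x},\myVec{Y}_\lambda)}\le 0$. Because $\myVec{Y^\ast}$ is a product measure and the distortion is separable, both quantities are sums over coordinates: $J'_{\myVec{Y^\ast}}(\myVec{x},\lambda)=\sum_i J'_i(x_i,\lambda)$, and similarly for $J''$.

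Concavity gives the pointwise bound
\begin{equation*}
J_{\myVec{Y^\ast}}(\myVec{x},\hat\lambda)-\hat\lambda nd \le J_{\myVec{Y^\ast}}(\myVec{x},\lambda_n^\ast)-\lambda_n^\ast nd + (\hat\lambda-\lambda_n^\ast)\bigl(J'_{\myVec{Y^\ast}}(\myVec{x},\lambda_n^\ast)-nd\bigr).
\end{equation*}
We apply it with $\hat\lambda=\hat\lambda_n^\ast(\myVec{x})$. A sharper version keeps the quadratic term, using the fact that $\hat\lambda_n^\ast(\myVec{x})$ solves $J'_{\myVec{Y^\ast}}(\myVec{x},\hat\lambda)=nd$. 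Expanding around $\hat\lambda$ instead gives
\begin{equation*}
\mathrm{LHS}-\mathrm{RHS}\le \frac{(\hat\lambda-\lambda_n^\ast)^2}{2}\,\sup_{|\lambda-\lambda_n^\ast|\le\delta}J''_{\myVec{Y^\ast}}(\myVec{x},\lambda).
\end{equation*}
Here we use that, for the proxy $\myVec{\hat X}(\myVec{x})$, the optimality condition of \eqref{eq:crem} is exactly $\frac1n J'_{\myVec{Y^\ast}}(\myVec{x},\hat\lambda_n^\ast)=d$.

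The task therefore reduces to two probabilistic estimates, each holding outside an event of probability $O(1/\sqrt n)$.

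\textbf{Step 1: size of $\hat\lambda-\lambda_n^\ast$.} We show $|\hat\lambda_n^\ast(\myVec{X})-\lambda_n^\ast|\le c\sqrt{\log n/n}$. Note that $J'_{\myVec{Y^\ast}}(\myVec{X},\lambda_n^\ast)-nd=\sum_i\bigl(J_i'(X_i,\lambda_n^\ast)-d_i\bigr)$ is a sum of independent, zero-mean terms; the mean is zero by the KKT conditions, since $\mathbb{E}[\mathsf{d}_i(X_i,Y_i^\ast)]=d_i$. The sixth-moment assumption \ref{ass:uni_moments}, via Jensen as in Lemma~\ref{lem:upperbdd_disp}, bounds the second and third moments of these terms. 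Chebyshev's inequality, or Berry--Esseen (Theorem~\ref{thm:berry_Esseen}) with a Gaussian tail, then gives $|J'(\myVec{X},\lambda_n^\ast)-nd|\le \sqrt{n\log n}$ with probability $1-O(1/\sqrt n)$. Next, restrict to the event of Lemma~\ref{lem:shell_probability_lower_bound}, on which $|\hat\lambda-\lambda_n^\ast|<\delta$, and intersect it with the event that $\frac1n J''_{\myVec{Y^\ast}}(\myVec{X},\lambda)\ge b>0$ uniformly for $|\lambda-\lambda_n^\ast|\le\delta$. On this intersection, the mean value theorem applied to $J'$ yields $|\hat\lambda-\lambda_n^\ast|\le |J'(\myVec{X},\lambda_n^\ast)-nd|/(nb)$.

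\textbf{Step 2: control of the curvature $\frac1n J''$.} This is the main obstacle: we need $\frac1n J''_{\myVec{Y^\ast}}(\myVec{X},\lambda)\in[b,B]$ uniformly over the $\delta$-neighbourhood of $\lambda_n^\ast$ with high probability. Its mean $\frac1n\sum_i\mathbb{E}[J_i''(X_i,\lambda)]$ must be bounded above and below uniformly in $n$.
\begin{itemize}
\item The upper bound follows from $J_i''\le \mathbb{E}[\mathsf{d}_i^2]$ under the tilted measure together with the moment assumption, where the tilting factor is controlled by $e^{\lambda \mathsf{d}}\le$ moments.
\item The lower bound uses $d<d_{\max}$. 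It should mirror the argument behind \eqref{eq:lambda_concentration} in Appendix~\ref{app:proof_shell_probability}; we would reuse it, or derive it from $\lambda_n^\ast$ being bounded away from zero.
\end{itemize}
Concentration around the mean is then obtained by Chebyshev's inequality: $\var{J_i''}\le \mathbb{E}[\mathsf{d}_i^4]$-type bounds, finite under \ref{ass:uni_moments}, give deviation $O(1)$ with probability $1-O(1/n)$. Uniformity in $\lambda$ follows from a Lipschitz bound on $J'''$, or from monotone or grid arguments.

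\textbf{Conclusion.} Combining the two steps, the excess is at most
\begin{equation*}
\frac{B}{2}\,n\,(\hat\lambda-\lambda_n^\ast)^2\le \frac{B}{2b^2}\cdot\frac{(J'-nd)^2}{n}\le C_4\log n.
\end{equation*}
A union bound over the exceptional events gives total probability at least $1-K_2/\sqrt n$.
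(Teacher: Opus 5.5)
Your architecture is essentially the paper's. On $\mathcal{T}_n$ you have $|\hat\lambda_n^\ast(\myVec{x})-\lambda_n^\ast|<\delta$. You then use that $\lambda\mapsto J_{\myVec{Y^\ast}}(\myVec{x},\lambda)-\lambda nd$ is concave with maximizer $\hat\lambda_n^\ast(\myVec{x})$, and add a curvature lower bound. This bounds the excess by a constant times $(\Sigma'(\myVec{x}))^2/n$, where $\Sigma'(\myVec{x})=J'_{\myVec{Y^\ast}}(\myVec{x},\lambda_n^\ast)-nd$.

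Your detour through the mean value theorem also needs an upper curvature bound $B$. The paper avoids this by maximizing $\beta\Sigma'-\frac{\beta^2}{2}\Sigma''$ over $\beta$ directly. If you keep $B$, the clean justification is that the tilted moments $\mathsf{\bar d}_{Y_i^\ast,j}(x_i,\lambda)$ are nonincreasing in $\lambda$. Hence $|J''_{Y_i^\ast}(x_i,\lambda)|\le\mathsf{\bar d}_{Y_i^\ast,2}(x_i,0)$, and no control of a tilting factor is needed.

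The genuine gap is the tail bound in Step~1. Neither tool you name yields $\mathbb{P}\bigl[|\Sigma'(\myVec{X})|>\sqrt{c\,n\log n}\bigr]=O(1/\sqrt n)$ under assumptions \ref{ass:bdd_disp}--\ref{ass:uni_moments}.

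Chebyshev only gives $V_n'/(c\log n)$, where $V_n'=\frac1n\sum_{i=1}^n\var{J'_{Y_i^\ast}(X_i,\lambda_n^\ast)}$. That is $O(1/\log n)$, and Markov with sixth moments only improves it to $O((\log n)^{-3})$.

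Berry--Esseen with a Gaussian tail does give the right rate, but its constant is $T_n/(V_n')^{3/2}$, so you need $V_n'$ bounded away from zero. Nothing guarantees that. Assumption \ref{ass:bdd_disp} lower-bounds the variance of $\jmath(X_i,d_i)=J_{Y_i^\ast}(X_i,\lambda_n^\ast)-\lambda_n^\ast d_i$. It says nothing about the variance of the $\lambda$-derivative, the conditional mean distortion $\mathsf{\bar d}_{Y_i^\ast,1}(X_i,\lambda_n^\ast)$. The paper flags exactly this as where it departs from \cite[Lem.~5]{kostina12_FixedLength}. It pads the variance to $A_n=V_n'+\epsilon'$, truncates at $b_n=n^{1/4}$, and applies Bernstein's inequality on $\{\max_i|W_i|<b_n\}$. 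The truncation costs $\sum_i\mathbb{P}[|W_i|\ge n^{1/4}]\le n^{-3/2}\sum_i\mathbb{E}[|W_i|^6]=O(1/\sqrt n)$ by the sixth-moment assumption.

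An equally valid repair within your Berry--Esseen framework is Gaussian smoothing. Take $G\sim\mathcal{N}(0,n)$ independent of $S=\Sigma'(\myVec{X})$; then $\mathbb{P}[S>t]\le 2\,\mathbb{P}[S+G>t]$. Moreover, $S+G$ is a sum of independent terms with per-term variance at least $1$. So Berry--Esseen applies with a bounded constant at threshold $\sqrt{(V_n'+1)\,n\log n}$. With either fix in place of your Step~1 estimate, the rest of your argument goes through.
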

\begin{proof}
Since the expectation of $J_{\myVec{Y^\ast}}(\myVec{\hat X},\hat{\lambda}_n^\ast(\myVec{X}))$ under the proxy distribution becomes additive across coordinates with independent terms, we follow steps similar to those in the proof of~\cite[Lem.~5]{kostina12_FixedLength}. See Appendix~\ref{app:proof_tilted_concentration} for full details.
\end{proof}

\begin{proof}[Proof of Lemma~\ref{lem:lossyAEP}]
We follow steps similar to those in the proof of Lemma~2 in~\cite{kostina12_FixedLength}. 
Consider $n$ large enough such that Lemmas~\ref{lem:shell_probability_lower_bound}-\ref{lem:tilted_concentration} hold. Let $C_1$ be the constant in Lemma~\ref{lem:bdd_curv}, and $\delta, \gamma, C_3, C_4$ be the constants in Lemmas~\ref{lem:shell_probability_lower_bound}-\ref{lem:tilted_concentration}. Let $c = (C_1 + \delta)\gamma - \log C_3$, and $C_0 = \frac{1}{2} + C_4$. We set $P_{\myVec{Y}} = P_{\myVec{Y^\ast}}$ in Lemma~\ref{lem:lower_bound_distortion_balls} and evaluate the supremum in~\eqref{eq:d_ball_prob} at the point-mass product measure defined in~\eqref{eq:proxy}, and at the $\gamma$ value specified in Lemma~\ref{lem:shell_probability_lower_bound}. The shell probability term in~\eqref{eq:d_ball_prob} is then lower bounded by Lemma~\ref{lem:shell_probability_lower_bound}, while the remaining exponential term is controlled by Lemma~\ref{lem:tilted_concentration}.

Using elementary probability, we have
\begin{align}
\mathbb{P} \bigg[
    &\log\frac{1}{P_{\myVec{Y^\ast}}(\mySet{B}(\myVec{X},d))} >
    \jmath(\myVec{X}, d)  + C_0 \log n + c
\bigg] \\
\label{eq:lem_lossyAEP_pf_rcb} &\le
\mathbb{P} \bigg[
    J_\myVec{Y^\ast}(\myVec{X}, \hat{\lambda}_n^\ast(\myVec{X})) + \hat{\lambda}_n^\ast(\myVec{X}) \gamma - \hat{\lambda}_n^\ast(\myVec{X}) n d \notag \\
    & \qquad -\log \mathbb{P}\left[nd - \gamma < n \mathsf{d}(\myVec{x},\myVec{\hat Z}^\ast) \leq nd \, \middle| \, \myVec{\hat X} = \myVec{x} \right] \notag \\
    & \quad \qquad > \jmath(\myVec{X}, d)  + C_0 \log n + c 
\bigg]  \\
\label{eq:lem_lossyAEP_pf_shell} &\le
\mathbb{P} \bigg[
    \myVec{X} \in \mathcal{T}_n,\, 
    J_\myVec{Y^\ast}(\myVec{X}, \hat{\lambda}_n^\ast(\myVec{X})) - \hat{\lambda}_n^\ast(\myVec{X})nd \notag \\
    & \quad \qquad > \jmath(\myVec{X}, d) - \hat{\lambda}_n^\ast(\myVec{X}) \gamma + \log \frac{C_3}{\sqrt{n}} + C_0 \log n + c \bigg] \notag \\ 
& \quad + \Prob{\myVec{X} \not \in \mathcal{T}_n} \\
\label{eq:lem_lossyAEP_pf_typical} &\le
\mathbb{P} \bigg[
\myVec{X} \in \mathcal{T}_n, \, 
J_\myVec{Y^\ast}(\myVec{X}, \hat{\lambda}_n^\ast(\myVec{X})) - \hat{\lambda}_n^\ast(\myVec{X})nd \notag \\
& \qquad >
\jmath(\myVec{X}, d)  + \left(C_0 - \frac{1}{2}\right) \log n + c  \notag \\ 
& \quad \qquad + \log C_3 - (\lambda_n^\ast + \delta)\gamma \bigg] + \frac{K_1}{\sqrt{n}} \\
&\le \mathbb{P} \bigg[J_\myVec{Y^\ast}(\myVec{X}, \hat{\lambda}_n^\ast(\myVec{X})) - \hat{\lambda}_n^\ast(\myVec{X})nd > \jmath(\myVec{X}, d) \notag \\
& \quad \qquad   + C_4 \log n  \bigg] + \frac{K_1}{\sqrt{n}} \\
&\le \frac{K_1 + K_2}{\sqrt{n}} \label{eq:lem_lossyAEP_pf_tilted_concentration},
\end{align}
where \eqref{eq:lem_lossyAEP_pf_rcb} follows from Lemma~\ref{lem:lower_bound_distortion_balls}, \eqref{eq:lem_lossyAEP_pf_shell} follows from  \eqref{eq:shell} in Lemma~\ref{lem:shell_probability_lower_bound}, \eqref{eq:lem_lossyAEP_pf_typical} follows from \eqref{eq:typical} and \eqref{eq:lambda_concentration} in the same lemma, and \eqref{eq:lem_lossyAEP_pf_tilted_concentration} follows from Lemma~\ref{lem:tilted_concentration}.
\end{proof}

\begin{proof}[Proof of achievability part of Theorem 1]
Once we have Lemma~\ref{lem:lossyAEP}, the achievability proof of Theorem~\ref{thm:gaussian_approximation} is similar to that of~\cite[Thm.~12]{kostina12_FixedLength}, except the random variables we apply the Berry--Esseen Theorem~\ref{thm:berry_Esseen} to are no longer identical. 

Fix $\epsilon \in (0,1)$ and define $\epsilon_n$
\begin{equation}
\label{eq:def_epsilon}
\epsilon_n \triangleq \epsilon-\frac{B_n+1+K}{\sqrt{n}},
\end{equation}
where $B_n$ is the Berry--Esseen constant that appears in the converse proof, and $K$ is the constant in Lemma~\ref{lem:lossyAEP}. As discussed in the converse proof, assumptions~\ref{ass:bdd_disp}--\ref{ass:uni_moments} imply that $B_n$ is uniformly bounded for sufficiently large $n$. Hence, we have $\epsilon_n \in (0,1)$ for all sufficiently large $n$.

Choose $M$ according to
\begin{align}
\label{eq:logM}
\log M \triangleq\;& n \mathbb{R}_n(d)
+ \sqrt{n \mathbb{V}_n(d)} Q^{-1}(\epsilon_n)
+ \log\frac{\log n}{2} \notag\\
&\quad + C_0 \log n + c,
\end{align}
and define
\begin{equation}
\label{eq:def_Gn}
G_n \triangleq \log M - \jmath(\myVec{X}, d) - C_0 \log n - c.
\end{equation}
as well as the event
\begin{equation}
\label{eq:def_G}
\mathcal{G}\triangleq\left\{G_n<\log\frac{\log n}{2}\right\}.
\end{equation}
By the Berry--Esseen theorem applied to $\jmath(\myVec{X},d)$, we obtain
\begin{align}
\label{eq:prop_G}
\mathbb{P}[\mathcal{G}]
&\le \mathbb{P}\left[
\jmath(\myVec{X}, d)
\ge  n \mathbb{R}_n(d)
+ \sqrt{n \mathbb{V}_n(d)} Q^{-1}(\epsilon_n)
\right] \\
&\le \epsilon_n+\frac{B_n}{\sqrt{n}}.
\end{align}

Next, define
\begin{align}
\label{eq:def_L}
\mathcal{L}
&= \left\{
\log\frac{1}{P_{\myVec{Y^\ast}}(\mySet{B}(\myVec{X},d))}
\le \jmath(\myVec{X}, d) + C_0 \log n + c
\right\}.
\end{align}
By Lemma~\ref{lem:lossyAEP},
\begin{equation}
\label{eq:prob_L}
\mathbb{P}[\mathcal{L}] \ge 1-\frac{K}{\sqrt{n}}.
\end{equation}
Applying Lemma~\ref{lem:rcb} with the optimal reproduction distribution $P_{\myVec{Y^\ast}}$, there exists an $(n,M,d,\epsilon')$ code such that
\begin{align}
\epsilon'
& \le \mathbb{E}_{\myVec{X}}\left[e^{-M P_{\myVec{Y^\ast}}(\mathcal{B}(\myVec{X},d))}\right] \\
&= \mathbb{E}_{\myVec{X}}\left[e^{-M P_{\myVec{Y^\ast}}(\mathcal{B}({\myVec{X}},d))}\mathbbm{1}\{\mathcal{L}\}\right] \notag \\
& \quad \quad + \mathbb{E}_{\myVec{X}}\left[e^{-M P_{\myVec{Y^\ast}}(\mathcal{B}({\myVec{X}},d))}\mathbbm{1}\{\mathcal{L}^{\mathrm{c}}\}\right] \\
& \le \mathbb{E}_{\myVec{X}}\left[e^{-e^{G_n}}\right]
+ \frac{K}{\sqrt{n}} \label{eq:proof_G_n} \\
&= \mathbb{E}_{\myVec{X}}\left[e^{-e^{G_n}}\mathbbm{1}\{\mathcal{G}\}\right]
+ \mathbb{E}_{\myVec{X}}\left[e^{-e^{G_n}}\mathbbm{1}\{\mathcal{G}^{\mathrm{c}}\}\right]
+ \frac{K}{\sqrt{n}} \\
&\le \mathbb{P}(\mathcal{G})+\frac{1}{\sqrt{n}}\mathbb{P}(\mathcal{G}^{\mathrm{c}})
+ \frac{K}{\sqrt{n}} \label{eq:proof_GandGc} \\
&\le \epsilon_n+\frac{B_n+1}{\sqrt{n}}
+ \frac{K}{\sqrt{n}} \label{eq:proof_epsilon}\\
&=\epsilon.
\end{align}
Here, \eqref{eq:proof_G_n} follows from \eqref{eq:prob_L}, together with the definitions in \eqref{eq:def_L} and \eqref{eq:def_Gn}; \eqref{eq:proof_GandGc} follows by using the trivial bound $e^{-e^{G_n}}\le 1$ on $\mathcal{G}$ and the bound $e^{-e^{G_n}}\le 1/\sqrt{n}$ on $\mathcal{G}^{\mathrm{c}}$ via \eqref{eq:def_G}; and \eqref{eq:proof_epsilon} follows from the definition in \eqref{eq:def_epsilon}. Therefore, there exists an $(n,M,d,\epsilon')$ code with $\epsilon'\le\epsilon$, which implies
\begin{equation}
R(n,d,\epsilon)\le \frac{\log M}{n}.
\end{equation}
By Taylor expansion and Lemma~\ref{lem:upperbdd_disp}, we obtain
\begin{align}
    R(n, d, \epsilon) \leq \mathbb{R}_n(d) + \sqrt{\frac{\mathbb{V}_n(d)}{n}}&Q^{-1}(\epsilon) + C_0\frac{\log n}{n} \notag \\
     &+ \frac{\log\log n}{n} + O\left(\frac{1}{n}\right)
\end{align}
\end{proof}

Fig.~\ref{fig:gar_blocklength_summary} presents finite-blocklength numerical results for stationary Gaussian autoregressive sources. Subfigure~\ref{fig:gar_rate_blocklength} compares the limiting \ac{rdf} $\mathbb{R}(d)$, the $n$th-order \ac{rdf} $\mathbb{R}_n(d)$, the Gaussian approximation with limiting characteristics in~\eqref{eq:GAR_rnd} and Gaussian approximation with limiting $n$th order characteristics in~\eqref{eq:corr_rnd}, as well as Kostina and Verdú's converse~\eqref{eq:converse} and achievability~\eqref{eq:achievability} bounds. For the Gaussian approximations, we set $O\left(\frac{\log n}{n}\right) = \frac{\log n}{2n}$. 
For achievability, we relax the bound by evaluating the infimum at the optimal output marginal $P_{\myVec{Y^\ast}}$, and compute the exponential through Monte-Carlo averaging.
The figures illustrate the finite-blocklength penalty in lossy compression of Gaussian autoregressive sources. In Fig.~\ref{fig:gar_rate_blocklength}, the required rate decreases toward the limiting \ac{rdf} $\mathbb{R}(d)$ as the blocklength $n$ increases. The $n$th-order Gaussian approximation closely matches the limiting Gaussian approximation, while the achievability and converse bounds sandwich the Gaussian approximation and become tighter for larger $n$. In Fig.~\ref{fig:gar_required_blocklength}, the blocklength required to operate at $R=1.1\mathbb{R}(d)$ grows rapidly as the distortion level increases and as the target excess-distortion probability $\epsilon$ becomes smaller, indicating that operating close to the asymptotic limit can require very large blocklengths.

\section{Conclusion} 
\label{sec:conclusion}
In this paper, we extend the second-order rate-distortion analysis of Kostina and Verdú~\cite{kostina12_FixedLength} from \ac{iid} sources to sources with independent but non-identically distributed components (Theorem~\ref{thm:gaussian_approximation}). Independence of the source components and separability of the distortion enable single-letterization of the $d$-tilted information, which allows us to invoke the Berry--Esseen theorem. For Gaussian sources with memory satisfying regularity conditions, Corollary~\ref{corr:gaussian} shows that the second-order term can be computed explicitly from the finite-dimensional eigenspectrum, recovering the familiar reverse water-filling expressions. In Corollary~\ref{corr:gaussAR}, we derive a second-order approximation for stationary Gaussian autoregressive sources with finite memory in terms of the limiting \ac{rdf} and the source dispersion. To prove Corollary~\ref{corr:gaussAR}, we establish in Lemma~\ref{lemm:GARnon_asymptotic_eigenvalues} an $O(1/n)$ convergence rate for $\mathbb{R}_n(d)$ and $\mathbb{V}_n(d)$ toward their limiting values $\mathbb{R}(d)$ and $\mathbb{V}(d)$ for Gaussian autoregressive sources with finite memory. This lemma refines Gray's result~\cite[Eq.~19]{gray70_Autoregressive} on the eigenvalue distribution of the covariance matrix of the random vector. The key tool in proving Lemma~\ref{lemm:GARnon_asymptotic_eigenvalues} is Lemma~\ref{lem:GAR_eigBound} in Appendix~\ref{sec:proof_GAR_eig}, which provides a sharp approximation of the eigenvalues by uniform samples of the limiting spectrum. This extends the second-order analysis of the Gauss-Markov source in~\cite{tian19_Stationary} to Gaussian autoregressive sources, both generalizing those results and refining the remainder term to a smaller-order term. Our main technical novelty appears in the achievability proof, where we introduce a point-mass product proxy measure to facilitate the typicality arguments for sources with non-identical components.

Future work includes extending the results to nonstationary infinite-order Gaussian sources, and developing a systematic treatment of the zero dispersion regime.

\appendices
\section{Berry--Esseen Theorem}
\label{app:berry_esseen}
\begin{theorem}
\label{thm:berry_Esseen}
Let $W_1, \dots, W_n$ be a collection of independent zero-mean random 
variables. Define the variance $V^2_i \triangleq \mathbb{E}[|W_i|^2]$ and third absolute moment $T_i \triangleq \mathbb{E}[|W_i|^3]$. Let the average variance $V_n^2$ and average third absolute moment $T_n$
\begin{equation}
\label{eq:moment_essen}
V_n^2 \triangleq \frac{1}{n} \sum_{i=1}^n V_i^2, 
\quad 
T_n \triangleq \frac{1}{n} \sum_{i=1}^n T_i.
\end{equation}
Then for $n \in \mathbb{N}$ and $C_n = \frac{T_n}{V_n^{3}}$, we have
\begin{equation}
\sup_{t \in \mathbb{R}} 
\left| 
\mathbb{P} \left[ \frac{1}{V_n \sqrt{n}} \sum_{i=1}^n W_i < t \right] - \Phi(t) 
\right| 
\le \frac{C_n}{\sqrt{n}},
\end{equation}
where $\Phi$ is the cumulative distribution function of the standard normal distribution $\mathcal{N}(0,1)$.
\end{theorem}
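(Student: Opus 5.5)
This is the Berry--Esseen theorem for independent, non-identically distributed summands. I would prove it by citing the classical Lyapunov-ratio bound and rescaling, rather than deriving it from scratch.

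Write $S_n \triangleq \sum_{i=1}^n W_i$. Its variance is $B_n^2 \triangleq \sum_{i=1}^n V_i^2 = nV_n^2$, and $L_n \triangleq \sum_{i=1}^n T_i = nT_n$.

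The classical Berry--Esseen inequality for independent, non-identical, zero-mean summands (Esseen 1945, with later constant improvements due to Shevtsova) states that
\begin{equation*}
\sup_{t}\left|\mathbb{P}\left[\frac{S_n}{B_n}<t\right]-\Phi(t)\right| \le C_0\,\frac{L_n}{B_n^{3}}
\end{equation*}
for a universal constant $C_0 \le 0.5600$.

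Substituting $L_n = nT_n$ and $B_n^3 = n^{3/2}V_n^3$ gives $L_n/B_n^3 = T_n/(V_n^3\sqrt n) = C_n/\sqrt n$. Since $C_0<1$, the stated bound follows, with $S_n/B_n = \frac{1}{V_n\sqrt n}\sum_i W_i$. If $V_n=0$ or $T_n=\infty$, the right-hand side is infinite or undefined, so the statement is either vacuous or should be read as assuming $V_n>0$ and $T_n<\infty$.

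If a self-contained argument is wanted, I would use the Esseen smoothing inequality:
\begin{equation*}
\sup_t|F(t)-\Phi(t)| \le \frac1\pi\int_{-T}^{T}\frac{|\varphi_n(s)-e^{-s^2/2}|}{|s|}\,ds + \frac{24}{\pi\sqrt{2\pi}\,T},
\end{equation*}
taken with $T \asymp B_n^3/L_n$.

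Here $\varphi_n(s)=\prod_i \varphi_i(s/B_n)$ is the characteristic function of $S_n/B_n$. Each factor satisfies $\varphi_i(u)=1-\frac{V_i^2u^2}{2}+\theta\frac{T_i|u|^3}{6}$ with $|\theta|\le1$. Taking logarithms then gives
\begin{equation*}
|\varphi_n(s)-e^{-s^2/2}| \le c\,\frac{L_n}{B_n^3}\,|s|^3 e^{-s^2/3}
\end{equation*}
for $|s|\le T$, and integrating this yields the result.

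The main obstacle in that route is controlling $\log\varphi_i$ uniformly when the individual $V_i$ differ. This is handled by the Lyapunov inequality $V_i^3\le T_i$, which gives $\max_i V_i^2/B_n^2 \le (L_n/B_n^3)^{2/3}$. In particular, every factor stays near $1$ on $|s|\le T$ whenever $L_n/B_n^3$ is small; when it is not small, the bound is trivial because the left-hand side is at most $1$.

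Since this is a standard theorem, in the paper I would simply cite it with this rescaling remark.
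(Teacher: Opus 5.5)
Your proposal is correct and takes the same route as the paper, which states this classical Berry--Esseen bound for independent, non-identically distributed summands as a standard result and gives no proof. Citing the Lyapunov-ratio form with $C_0\le 0.5600<1$ and rescaling via $L_n/B_n^3 = C_n/\sqrt{n}$ is exactly what justifies the constant-one statement; the strict inequality inside the probability is immaterial because $\Phi$ is continuous.
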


\begin{corollary}
\label{cor:berry_Esseen}
Let $W_1, \dots, W_n$ be a collection of independent zero-mean random 
variables. Suppose that
\begin{align}
0 < V_{\min} \le V_n \le V_{\max}, \\
T_n \le T_{\max}.
\end{align}
where $V_n, T_n$ are defined in~\eqref{eq:moment_essen}. Denote $B_{\max} = \frac{T_{\max}}{V_{\min}^{3}}$. Then, for arbitrary $b > 0$ and for all
\begin{align}
    \tau &> \tau_0 \triangleq 2B_{\max}\sqrt{2\pi V_{\max}} + b \\
    n &> \frac{\tau^2}{2V_{min} (\log \tau - \log \tau_0)}
\end{align}
it holds that 
\begin{equation}
    \mathbb{P}\left[0 <\sum_{i=1}^{n} W_i < \tau \right ] \geq \frac{b}{\sqrt{n}}
\end{equation}


\end{corollary}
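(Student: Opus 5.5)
The proof will combine Theorem~\ref{thm:berry_Esseen} with an elementary lower bound on a Gaussian probability mass near the origin. Write $S_n=\sum_{i=1}^n W_i$, and note that $S_n$ has standard deviation $V_n\sqrt n$.

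\textbf{Step 1 (Berry--Esseen).} Theorem~\ref{thm:berry_Esseen} is applied at the two points $t=0$ and $t=\tau/(V_n\sqrt n)$. The error is at most $C_n/\sqrt n$ at each point, and $C_n=T_n/V_n^3\le T_{\max}/V_{\min}^3=B_{\max}$. Hence
\begin{equation*}
\mathbb{P}\left[0<S_n<\tau\right]\ \ge\ \Phi\!\left(\frac{\tau}{V_n\sqrt n}\right)-\Phi(0)-\frac{2B_{\max}}{\sqrt n}.
\end{equation*}
The boundary event $\{S_n=0\}$ must be handled with care, since the theorem is stated with strict inequality $<t$. I would run the same bound with $\le$, obtained as a limit of strict inequalities, so that no extra term appears.

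\textbf{Step 2 (Gaussian mass).} On $[0,t]$ the normal density is decreasing. Therefore $\Phi(t)-\Phi(0)\ge \frac{t}{\sqrt{2\pi}}e^{-t^2/2}$ with $t=\tau/(V_n\sqrt n)$. It then suffices to show
\begin{equation*}
\frac{\tau}{V_n\sqrt{2\pi}}\exp\!\left(-\frac{\tau^2}{2nV_n^2}\right)\ \ge\ 2B_{\max}+b .
\end{equation*}
Once this holds, multiplying through by $1/\sqrt n$ and subtracting the Step~1 error term leaves exactly $b/\sqrt n$.

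\textbf{Step 3 (using the hypotheses on $n$ and $\tau$).} The condition on $n$ is equivalent to $\exp\!\left(-\tau^2/(2nV_{\min})\right)>\tau_0/\tau$, that is, $\tau\exp(-\tau^2/(2nV_{\min}))>\tau_0$. The exponent must now be compared with the one in Step~2, and the prefactor $1/V_n$ replaced by its worst case using $V_n\le V_{\max}$.

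\textbf{Main obstacle.} This bookkeeping in Step~3 is where I expect the difficulty. The statement's constants, $\tau_0=2B_{\max}\sqrt{2\pi V_{\max}}+b$ and the exponent $2V_{\min}$, carry the bounds on $V_n$ to the first power. The natural Berry--Esseen normalization carries $V_n^2$ in the exponent and $V_n$ in the prefactor. So the monotonicity comparisons $V_{\min}\le V_n\le V_{\max}$ have to be arranged so that:
\begin{enumerate}[label=(\alph*)]
\item the exponent in Step~2 is bounded below by the one controlled in Step~3;
\item the factor $\sqrt{2\pi}\,V_n(2B_{\max}+b)$ is dominated by $\tau_0$.
\end{enumerate}
I would first try to reduce both comparisons to the regime in which these powers line up. If one comparison fails as literally stated, I would isolate exactly which inequality on $V_{\min}, V_{\max}$ closes the chain, rather than loosen the target $b/\sqrt n$.
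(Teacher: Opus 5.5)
\textbf{There is a genuine gap: your Step~3 cannot be completed, because the statement as written is false.} The paper states this corollary without proof, so there is no argument of theirs to compare against.

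\textbf{What is correct.} Steps 1 and 2 are right. This includes your handling of the boundary: $\mathbb{P}[S_n\le 0]=\lim_{t\downarrow 0}\mathbb{P}[S_n<t]\le \Phi(0)+C_n/\sqrt n$, so no extra term appears.

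\textbf{Why Step 3 cannot be rescued by rearranging the bounds.} Both comparisons must hold for every admissible $V_n\in[V_{\min},V_{\max}]$. Comparison (a) requires $V_n^2\ge V_{\min}$; at $V_n=V_{\min}$ this forces $V_{\min}\ge 1$. Comparison (b) at $V_n=V_{\max}$ requires $\sqrt{2\pi}\,V_{\max}(2B_{\max}+b)\le 2B_{\max}\sqrt{2\pi V_{\max}}+b$. The left side minus the right side equals $2B_{\max}\sqrt{2\pi}\,(V_{\max}-\sqrt{V_{\max}})+b(\sqrt{2\pi}\,V_{\max}-1)$, which is strictly positive once $V_{\max}\ge V_{\min}\ge 1$. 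So whenever (a) holds, (b) fails, and the regime you hoped to reduce to is empty.

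\textbf{Why the claim itself fails.} Here $V_n$ is the standard deviation from~\eqref{eq:moment_essen}. Under $W_i\mapsto cW_i$, the constants $B_{\max}$ and $b$ are invariant and $\tau$ must scale like $c$, but $\tau_0$ scales only like $\sqrt c$. A concrete counterexample:
\begin{itemize}
\item Take $W_i$ i.i.d.\ $\mathcal N(0,c^2)$ with $c=100$ and $V_{\min}=V_{\max}=100$.
\item Take $T_{\max}=\mathbb{E}|W_1|^3=2\sqrt{2/\pi}\,c^3$, so $B_{\max}\approx 1.6$.
\item Take $b=1$ and $\tau=200$. Then $\tau_0\approx 81<\tau$, and the hypothesis on $n$ reads $n>200/\log(200/\tau_0)\approx 221.3$.
\item For every such $n$, $\mathbb{P}[0<\sum_i W_i<200]=\Phi(2/\sqrt n)-\tfrac12\le 2/\sqrt{2\pi n}<1/\sqrt n$.
\end{itemize}
With $W_i=100R_i$ for Rademacher $R_i$, the same $b$ and $\tau$, and any even $n$ meeting the hypothesis, the probability is exactly $0$, since $\sum_i W_i\in 200\mathbb{Z}$.

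\textbf{What your argument does prove.} Your Steps 1 and 2 prove the dimensionally consistent version. Set $\tau_0\triangleq\sqrt{2\pi}\,V_{\max}(2B_{\max}+b)$ and require $n>\tau^2/\bigl(2V_{\min}^2(\log\tau-\log\tau_0)\bigr)$. Then:
\begin{itemize}
\item the condition on $n$ gives $\tau e^{-\tau^2/(2nV_{\min}^2)}>\tau_0$;
\item $V_n\ge V_{\min}$ gives $e^{-\tau^2/(2nV_n^2)}\ge e^{-\tau^2/(2nV_{\min}^2)}$;
\item $V_n\le V_{\max}$ gives $\tau_0\ge \sqrt{2\pi}\,V_n(2B_{\max}+b)$.
\end{itemize}
Together these are exactly your Step~2 target. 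This corrected version is all the paper needs for~\eqref{eq:shell}, since there only the existence of some $\gamma$ and $C_3$ is used.
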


\section{Proof of Corollary~\ref{corr:gaussAR}}
\label{app:proof_gaussAR}
\begin{proof}[Proof of Corollary~\ref{corr:gaussAR}]
By the decorrelation equivalence established in subsection~\ref{subsec:decor}, it suffices to work with the decorrelated process $\myVec{U}$. Let 
\begin{equation}
    \theta_{\max} =  \max_{\omega \in [-\pi, \pi]} S(\omega ),
\end{equation}
denote the water level corresponding to the maximum distortion level $d_{\max}$ in~\eqref{eq:dmax}, where $S(\omega)$ is defined in~\eqref{eq:g}. 
For $d \in (0, d_{\max})$, the limiting water-level satisfy $\theta^\ast \in (0,\theta_{\max})$. Since $\lim_n\theta_n^\ast = \theta^\ast$, there exists a closed interval $[a,b]\subset(0,\theta_{\max})$ such that for sufficiently large $n$
\begin{equation}
    \label{eq:closed_set}
    \theta_n^\ast,\theta^\ast \in [a,b].
\end{equation} 
We next give a concentration result that controls the $n$th-letter water-level $\theta_n^\ast$ and its convergence to limiting water-level $\theta^\ast$.

\begin{lemma}
\label{lem:theta_bound}
    Fix $d\in(0,d_{\max})$. There exists a constant $C''>0$ such that for all sufficiently large $n$ the $n$-letter water-level $\theta_n^\ast$ and the limiting water-level $\theta^\ast$ satisfy  
    \begin{equation}
    \label{eq:theta_bound}
        |\theta_n^\ast - \theta^\ast|\le \frac{C''}{n}.
    \end{equation}
\end{lemma}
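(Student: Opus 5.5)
The strategy is to treat the water level as the root of a monotone function and transfer the $O(1/n)$ eigenvalue approximation of Lemma~\ref{lemm:GARnon_asymptotic_eigenvalues} to that root. Define
\begin{equation*}
D_n(\theta) \triangleq \frac{1}{n}\sum_{i=1}^n \min\left(\theta, \sigma_{n,i}^2\right),
\qquad
D(\theta) \triangleq \frac{1}{2\pi}\int_{-\pi}^{\pi} \min\left(\theta, S(\omega)\right) d\omega .
\end{equation*}
By \eqref{eq:corr_distortion} and \eqref{eq:GAR_distortion_lim}, $D_n(\theta_n^\ast) = d = D(\theta^\ast)$. Using \eqref{eq:eig_equivalence}, write $\min(\theta,\sigma_{n,i}^2) = F_\theta(\mu_{n,i})$ with $F_\theta(t) \triangleq \min(\theta, \sigma^2/t)$ on $[\alpha,\beta]$. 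Since $S(\omega) = \sigma^2/g(\omega)$, we also have $D(\theta) = \frac{1}{2\pi}\int F_\theta(g(\omega))\,d\omega$. By \eqref{eq:mu_i_bounds}, $\alpha>0$, so $F_\theta$ is bounded by $\sigma^2/\alpha$ and is Lipschitz with constant $\sigma^2/\alpha^2$. Both bounds are uniform in $\theta$. Hence Lemma~\ref{lemm:GARnon_asymptotic_eigenvalues} gives $\sup_{\theta\in[a,b]}|D_n(\theta)-D(\theta)| \le C/n$ for all large $n$. The uniformity matters because $C_F$ depends only on $L_F$, $\|F\|_\infty$ and the coefficients, and these are controlled uniformly over $\theta$.

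Next, I would use a lower bound on the slope of $D$ near $\theta^\ast$. The function $D$ is nondecreasing and $D'(\theta) = \frac{1}{2\pi}|\{\omega : S(\omega) > \theta\}|$ almost everywhere. For $\theta \le b < \theta_{\max}$, the set $\{S > b\}$ has positive measure because $S$ is continuous. Hence $D'(\theta) \ge \kappa \triangleq \frac{1}{2\pi}|\{S>b\}| > 0$ on $[a,b]$, and $D(\theta)-D(\theta') \ge \kappa(\theta-\theta')$ for $a\le\theta'\le\theta\le b$.

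The two bounds combine as follows. By \eqref{eq:closed_set}, both $\theta_n^\ast$ and $\theta^\ast$ lie in $[a,b]$ for large $n$, so
\begin{equation*}
\kappa|\theta_n^\ast-\theta^\ast| \le |D(\theta_n^\ast)-D(\theta^\ast)| = |D(\theta_n^\ast)-D_n(\theta_n^\ast)| \le \frac{C}{n},
\end{equation*}
which gives \eqref{eq:theta_bound} with $C''=C/\kappa$.

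The main obstacle is the input \eqref{eq:closed_set}, that is, $\theta_n^\ast\to\theta^\ast$, which the paper asserts without proof. If it needs justification, I would obtain it from the same two ingredients. $D_n$ converges uniformly to $D$ on a compact interval, again by Lemma~\ref{lemm:GARnon_asymptotic_eigenvalues}. Also, $D$ is strictly increasing on $(0,\theta_{\max})$, with slope at least $\kappa$ on any $[a,b]$. Take $a<\theta^\ast<b$. For large $n$, $D_n(a) < d < D_n(b)$, and $D_n$ is monotone, so $\theta_n^\ast\in[a,b]$. This avoids any circularity.
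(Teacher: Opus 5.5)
Your proof is correct and follows the paper's argument essentially step for step: a uniform-in-$\theta$ application of Lemma~\ref{lemm:GARnon_asymptotic_eigenvalues} to $F_\theta(t)=\min(\theta,\sigma^2/t)$, combined with a lower bound on the slope of $D$ that comes from the positive-measure set where $S$ exceeds $b$. Your closing argument, that $D_n(a)<d<D_n(b)$ together with monotonicity of $D_n$ forces $\theta_n^\ast\in[a,b]$, is a worthwhile addition, since the paper simply asserts $\theta_n^\ast\to\theta^\ast$ when it states \eqref{eq:closed_set}.
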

\begin{proof}
    See Appendix~\ref{sec:proof_theta_bound}.
\end{proof}

We now verify the assumptions of Corollary~\ref{corr:gaussian}. We first start by showing the dispersion term $\mathbb{V}_n(d)$ is uniformly bounded away from zero. Using $\min(1,x) \ge \frac{x}{1+x}$ and $\theta_n^\ast \le \theta_{\max}$ for all sufficiently large $n$, we obtain
\begin{align}
\mathbb{V}_n(d)
&= \frac{1}{2n}\sum_{i=1}^n \min\left(1,\left(\frac{\sigma_i^2}{\theta_n^\ast}\right)^2\right) \\
&\ge \frac{1}{2n}\sum_{i=1}^n \frac{\sigma_{n,i}^4}{(\theta_{n}^\ast)^2+\sigma_{n,i}^4} \label{eq:min} \\
&\ge \frac{1}{2n}\sum_{i=1}^n \frac{\sigma^4/\beta^2} 
{\theta_{\max}^2+\sigma^4/\alpha^2} \label{eq:bounds_on_lambda} \\
&> 0
\end{align}
where~\eqref{eq:bounds_on_lambda} follows from definitions~\eqref{eq:eig_equivalence}-\eqref{eq:alpha_beta}, the uniform eigenvalue bounds~\eqref{eq:mu_i_bounds}, and \eqref{eq:closed_set}. Hence assumption~\ref{ass:bdd_disp} is satisfied.

Next, the uniform sixth moment bound in~\eqref{eq:sixth_moment} follows directly from the same eigenvalue bounds in~\eqref{eq:mu_i_bounds}:
\begin{align}
    \frac{1}{n}\sum_{i=1}^n (\sigma_{n,i}^2 + \nu_{n,i} )^6 
    &\le  \frac{1}{n}\sum_{i=1}^n 2^6\sigma_{n,i}^{12} \le \frac{2^6\sigma^{12}}{\alpha^{6}}.
\end{align}

Consequently, the assumptions of Corollary~\ref{corr:gaussian} hold, and~\eqref{eq:corr_rnd} applies.

It remains to show that the \ac{rdf} and the source dispersion converge to their limiting functions with error of order $O(1/n)$. Define
\begin{equation}
     G_r(\theta, \sigma_{n,i}^2) \triangleq    \max \left(0,\frac{1}{2}\log\frac{\sigma_{n,i}^2}{\theta}\right).
\end{equation}
On the interval $\theta \in [a,b]$ in~\eqref{eq:closed_set}, the function $G_r(\cdot, \sigma_{n,i}^2)$ is Lipschitz with constant
\begin{equation}
     L_r = \frac{1}{2a}
\end{equation}
uniformly in $n$ and $i$. Then, for sufficiently large $n$, we have
\begin{align}
    \left|G_r(\theta_n^\ast, \sigma_{n,i}^2) - G_r(\theta^\ast, \sigma_{n,i}^2)\right| &\leq L_r|\theta^\ast - \theta^\ast _n| \label{eq:Lipschitz} \\
    &\leq \frac{L_rC''}{n} \label{eq:Lipschitz_of_rate}
\end{align}
where~\eqref{eq:Lipschitz_of_rate} follows from Lemma~\ref{lem:theta_bound}. Hence,
\begin{align}
\label{eq:concentration_rd}
    \bigl|\mathbb{R}_n(d)-\mathbb{R}(d)\bigr| &\le \left|\frac{1}{n} \sum_{i=1}^n G_r(\theta_n^\ast, \sigma_{n,i}^2)-\frac{1}{n} \sum_{i=1}^n G_r(\theta^\ast, \sigma_{n,i}^2)\right| \notag \\
    & \qquad + \left|\frac{1}{n} \sum_{i=1}^n G_r(\theta^\ast, \sigma_{n,i}^2)-\mathbb{R}(d)\right| \notag \\
    & \leq \frac{L_r C''}{n} + \frac{C_L}{n},
\end{align}
where the first term follows from~\eqref{eq:Lipschitz_of_rate}, while the second term follows from Lemma~\ref{lemm:GARnon_asymptotic_eigenvalues}, applied to $F(\mu)=G_r(\theta^\ast,\sigma^2/\mu)$, after noting that $\mu_{n,i}=\sigma^2/\sigma_{n,i}^2$ in~\eqref{eq:eig_equivalence} and that $F$ is bounded and Lipschitz on $[\alpha,\beta]$.

The same argument applies to the dispersion. Define
\begin{equation}
    G_v(\theta, \sigma_{n,i}^2) \triangleq \min \left( 1, \left( \frac{\sigma_{n,i}^2}{\theta} \right)^2 \right)
\end{equation}
On the interval $\theta \in [a,b]$, the function
$G_v(\cdot,\sigma_{n,i}^2)$ is Lipschitz with
constant
\begin{equation}
    L_v
    = \frac{2\sigma^4}{\alpha^2 a^3},
\end{equation}
uniformly in $n$ and $i$, where we used the eigenvalue bound in~\eqref{eq:mu_i_bounds}. Therefore,
using Lemma~\ref{lem:theta_bound} and
Lemma~\ref{lemm:GARnon_asymptotic_eigenvalues} once more, now with
$F(\mu)=G_v(\theta^\ast,\sigma^2/\mu)$, which is bounded and Lipschitz
on $[\alpha,\beta]$, there exists a constant
$c_v>0$ such that
\begin{align}
    \left|\sqrt{\mathbb{V}_n(d)} - \sqrt{\mathbb{V}(d)}\right|^2 &\leq   \left|\sqrt{\mathbb{V}_n(d)} - \sqrt{\mathbb{V}(d)}\right| \notag \\ &\qquad \times \left|\sqrt{\mathbb{V}_n(d)} + \sqrt{\mathbb{V}(d)}\right| \\
    & = \left|\mathbb{V}_n(d) - \mathbb{V}(d)\right| \\
    & \leq \frac{c_v}{n} \label{eq:concentration_vd}
\end{align}
Substituting~\eqref{eq:concentration_rd} and \eqref{eq:concentration_vd} into~\eqref{eq:corr_rnd} yields~\eqref{eq:GAR_rnd}.
\end{proof}

\section{Proof of Lemma~\ref{lem:theta_bound}}
\label{sec:proof_theta_bound}
\begin{proof}[Proof of Lemma~\ref{lem:theta_bound}]
Define
\begin{align}
    h_n(\theta) &\triangleq \frac{1}{n} \sum_{i=1}^n \min\left(\theta, \sigma_{n,i}^2 \right),\\
    h(\theta) &\triangleq \frac{1}{2\pi}\int_{-\pi}^\pi \min\left(\theta, \frac{\sigma^2}{g(\omega)}\right)\, d\omega .
\end{align}
Let $F_\theta(\mu)\triangleq \min(\theta, \sigma^2/\mu)$. On the interval $\mu \in [\alpha, \beta]$ in~\eqref{eq:mu_i_bounds}, the function $F_\theta$ is Lipschitz and bounded, with constants
\begin{equation}
    L_{F_\theta} = \frac{\sigma^2}{\alpha^2}, \qquad \|F_\theta \|_\infty \leq \frac{\sigma^2}{\alpha}   
\end{equation}
uniformly in $\theta$. Applying Lemma~\ref{lemm:GARnon_asymptotic_eigenvalues} to $F_\theta$, and using that $\mu_{n,i}=\sigma^2/\sigma_{n,i}^2$ in~\eqref{eq:eig_equivalence}, we obtain
\begin{equation}
    |h_n(\theta)-h(\theta)| \le \frac{C_F}{n},
\end{equation}
uniformly in $\theta$ for sufficiently large $n$, where $C_{F}$ is the constant in \ref{eq:eig_concentration_rate}, independent of $\theta$. By~\eqref{eq:corr_distortion} and~\eqref{eq:GAR_distortion_lim}, we have $d=h_n(\theta_n^\ast)=h(\theta^\ast)$, thus
\begin{align}
    \bigl|h(\theta^\ast)-h(\theta_n^\ast)\bigr|
    &= \bigl|h_n(\theta_n^\ast)-h(\theta_n^\ast)\bigr|
    \le \frac{C'}{n}.
    \label{eq:bound_on_g}
\end{align}
For all sufficiently large $n$, both $\theta^\ast$ and $\theta_n^\ast$ lie in the compact interval $[a,b]\subset (0,\theta_{\max})$ by~\eqref{eq:closed_set}. Since $b < \theta_{\max}$, the set
\begin{equation}
    \left\{\omega \colon 
    \frac{\sigma^2}{g(\omega)} \geq b \right\},
\end{equation}
has positive Lebesgue measure, call $\gamma>0$. Without loss of generality, assume $\theta_n^\ast < \theta^\ast$. Then
\begin{align}
    h&(\theta^\ast)-h(\theta_n^\ast) \\
    &= \frac{1}{2\pi}\int_{-\pi}^\pi \min\left(\theta^\ast, \frac{\sigma^2}{g(\omega)}\right) - \min\left(\theta_n^\ast, \frac{\sigma^2}{g(\omega)}\right) d\omega \\
    &\geq \frac{1}{2\pi}\int_{\left\{\omega \colon 
    \sigma^2/g(\omega) \geq b \right\} } \left(\theta^\ast - \theta_n^\ast\right) d\omega \\
    &= \frac{\gamma (\theta^\ast - \theta_n^\ast)}{2\pi} 
\end{align}
Repeating it for the other case, we get 
\begin{equation}
\label{eq:proof_theta_bound}
    |h(\theta^\ast)-h(\theta_n^\ast)| \geq \frac{\gamma |\theta^\ast - \theta_n^\ast|}{2\pi} 
\end{equation}
Therefore, from~\eqref{eq:bound_on_g} and~\eqref{eq:proof_theta_bound}, we have
    \begin{equation}
        |\theta^\ast - \theta_n^\ast| \leq  \frac{2\pi |h(\theta^\ast) - h(\theta_n^\ast)|}{\gamma} \le \frac{2\pi C_F }{\gamma n}.
    \end{equation}
\end{proof}

\section{Proof of Lemma~\ref{lem:shell_probability_lower_bound}}
\label{app:proof_shell_probability}
\begin{proof}
    
Define 
\begin{align}
\mathsf{\bar d}_{Y_i,j}(x_i, {\lambda}) &\triangleq \frac{\mathbb{E}\left[\mathsf{d}^j_i(x_i, Y_i)\exp(-\lambda \mathsf{d}_i(x_i, Y_i) )\right]}{\mathbb{E}\left[\exp(-\lambda \mathsf{d}_i(x_i, Y_i) )\right]},
\end{align}
and 
\begin{align}
    d_{\min | \myVec{X}, \myVec{Y^\ast}} &\triangleq \inf \{d \colon \mathbb{R}_n(\myVec{X}, \myVec{Y^\ast}, d) < \infty\}, \\
    d_{\max | \myVec{X}, \myVec{Y^\ast}} &\triangleq \frac{1}{n}\sum_{i=1}^n \mathbb{E}\left[\mathsf{d}_i(X_i,Y_i^\ast)\right]
\end{align}
We have the following properties, analogous to those in~\cite{yang99_RedundancyRD} and \cite{kostina12_FixedLength}:
\begin{enumerate}[label=(\Alph*)]
  \item $\mathbb{E}\left[\sum_{i=1}^n\mathsf{\bar d}_{Y_i,1}(X_i, \lambda^\ast_{\myVec{X}, \myVec{Y}})\right] = nd.$ \label{property:A}

  \item $\mathbb{E}[J_{\myVec{Y}}(\myVec{X},\lambda^\ast_{\myVec{X}, \myVec{Y}})] - \hat \lambda^\ast_{\myVec{X}, \myVec{Y}}nd= \sup_{\lambda >0}\{\mathbb{E}[J_{\myVec{Y}}(\myVec{X},\lambda)] - \lambda nd\}.$ \label{property:B}

  \item $J'_{Y_i}(x_i,\lambda) = \mathsf{\bar d}_{Y_i,1}(x_i, \lambda)$ for all $i\le n$. \label{property:C}

  \item $0 \le J'_{Y_i}(x_i, \lambda) \le \mathsf{\bar d}_{Y_i, 1}(x_i, 0),$ for all $i\le n.$\label{property:D}

  \item $\mathsf{\bar d}'_{Y_i, j}(x_i, \lambda) \leq 0$ for all $i\le n$. \label{property:E}

  \item $J''_{Y_i}(x_i,\lambda) = \left(\mathsf{\bar d}^2_{Y_i, 1}(x_i, \lambda) -  \mathsf{\bar d}_{Y_i, 2}(x_i, \lambda)\right) \leq 0$ for all $i\le n$. \label{property:F}
  
  \item $0 \leq -J''_{Y_i}(x_i,\lambda) \leq \mathsf{\bar d}_{Y_i, 2}(x_i, 0)$ for all $i\le n.$ \label{property:G}

  \item $d_{\min | \myVec{X}, \myVec{Y^\ast}} = \E{\alpha_\myVec{Y^\ast} (\myVec{X})}$ where $\alpha_\myVec{Y^\ast} (\myVec{X}) = \essinf \mathsf{d}(\myVec{x}, \myVec{Y^\ast})$ \label{property:H}
\end{enumerate}


Fix
\begin{equation}
    0 < \Delta < \frac{1}{3}\min( d - d_{\min|\myVec{X, \myVec{Y^\ast}}},\, d_{\max|\myVec{X, \myVec{Y^\ast}}} - d),
\end{equation}
uniformly in $n$, which is possible since there exists $\epsilon'$ $d_{\min|\myVec{X, \myVec{Y^\ast}}} < d_{\min} + \epsilon' < d < d_{\max} - \epsilon' < d_{\max|\myVec{X, \myVec{Y^\ast}}}$ by assumption~\ref{ass:dist_lev}. Define
\begin{align}
\underline{\lambda}_n &\triangleq - \mathbb{R}'_n\left(\myVec{X}, \myVec{Y}^\ast, d + \frac{3\Delta}{2}\right), \\
\bar{\lambda}_n &\triangleq - \mathbb{R}'_n\left(\myVec{X}, \myVec{Y}^\ast, d - \frac{3\Delta}{2}\right), \\
\delta_n &\triangleq  \frac{3\Delta}{2}\sup_{|\alpha| < 3\Delta/2}\mathbb{R}_n''\left(\myVec{X}, \myVec{Y}^\ast, d + \alpha \right), \label{eq:delta} \\
\mu_n'' &\triangleq \frac{1}{n} \E{\bigl|J_{\myVec{Y}^\ast}''(\myVec{X}, \lambda_n^\ast)\bigr|} \\
\overline{V_n}(\myVec{x})
&\triangleq \frac{1}{n} \sum_{i=1}^n
     \sup_{|\alpha|<\delta}
     \bigl| J_{Y_i^\ast}''(x_i, \lambda_n^\ast + \alpha) \bigr|\\
 \underline {V_n}(\myVec{x}) 
&\triangleq \frac{1}{n} \sum_{i=1}^n
     \inf_{|\alpha|<\delta}
     \bigl| J_{Y_i^\ast}''(x_i, \lambda_n^\ast + \alpha) \bigr|
\end{align}

Since $d < d_{\max} - \epsilon' $, choose $M_1$ sufficiently large so that $d~<~d_{\max} - \epsilon' - K_0^{1/3}/M_1$ where $K_0$ is the constant in assumption~\ref{ass:uni_moments}. Define the event
\begin{equation}
    \mathcal{M}_i = \{\mathsf{d}_i(X_i, Y_i) < M_1 \} 
\end{equation}
Then we have
\begin{align}
    d_{\max} - &\epsilon' - d \\
    &\leq \aver{1}{\E{\mathsf{\bar d}_{Y^\ast_i,1}(X_i, 0) - \mathsf{\bar d}_{Y^\ast_i,1}(X_i, {\lambda_n^\ast})}}  \label{eq:proof_dbar_def} \\ 
    &= \aver{1}{\E{\mathsf{d}_i(X_i, Y^\ast_i) - \mathsf{\bar d}_{Y^\ast_i,1}(X_i, {\lambda_n^\ast})}} \\
    &\leq \aver{1}{\E{\left( \mathsf{d}_i(X_i, Y^\ast_i) - \mathsf{\bar d}_{Y^\ast_i,1}(X_i, {\lambda_n^\ast})\right)\mathbbm{1}[\mathcal{M}_i]}} \notag \\
    & \quad + \aver{1}{\E{\mathsf{d}_i(X_i, Y^\ast_i) \mathbbm{1}[\mathcal{M}_i^c] }} \label{eq:proof_dbar_separate} \\
    &\leq \aver{1}{\E{\left( \mathsf{d}_i(X_i, Y^\ast_i) - \mathsf{\bar d}_{Y^\ast_i,1}(X_i, {\lambda_n^\ast})\right)\mathbbm{1}[\mathcal{M}_i]}} \notag \\
    & \quad + \aver{1}{\left(\E{\mathsf{d}^2_i(X_i, Y^\ast_i)}\right)^{\frac{1}{2}}  \mathbb{P} \left(\left[  \mathsf{d}_i(X_i, Y^\ast_i) \geq M_1 \right] \right)^{\frac{1}{2}} } \label{eq:proof_dbar_hölder} \\
    &\leq \aver{1}{\E{\left( \mathsf{d}_i(X_i, Y^\ast_i) - \mathsf{\bar d}_{Y^\ast_i,1}(X_i, {\lambda_n^\ast})\right)\mathbbm{1}[\mathcal{M}_i]}} \notag \\
    & \quad + \aver{1}{\left(\E{\mathsf{d}^2_i(X_i, Y^\ast_i)}\right)^{\frac{1}{2}} \frac{\left(\E{ \mathsf{d}^2_i(X_i, Y^\ast_i)}\right)^{\frac{1}{2}}}{M_1^2} } \label{eq:proof_dbar_markov} \\
    &\leq \aver{1}{\E{\left( \mathsf{d}_i(X_i, Y^\ast_i) - \mathsf{\bar d}_{Y^\ast_i,1}(X_i, {\lambda_n^\ast})\right) \mathbbm{1}[\mathcal{M}_i]}} \notag \\
    & \quad +\frac{K_0^{1/3}}{M_1} \label{eq:proof_dbar_ass}
\end{align}
where all the expectations are taken with respect to $P_{X_i}\times P_{Y_i^\ast}$. The inequality~\eqref{eq:proof_dbar_def} follows from property~\ref{property:A}, \eqref{eq:proof_dbar_separate} follows from the nonnegativity of $\mathsf{\bar d}_{Y^\ast_i,1}(x_i, {\lambda})$ via properties~\ref{property:C}-\ref{property:D}, \eqref{eq:proof_dbar_hölder} and \eqref{eq:proof_dbar_markov} follows from Hölder's and Markov's inequality, respectively, and \eqref{eq:proof_dbar_markov} follows from assumption~\ref{ass:uni_moments}. On the other hand, we have
\begin{align}
    |&J''_{Y^\ast_i}(x_i,\lambda_n^\ast)| \notag \\
    &= \mathsf{\bar d}_{Y^\ast_i, 2}(x_i, \lambda_n^\ast) -\mathsf{\bar d}^2_{Y^\ast_i, 1}(x_i, \lambda_n^\ast) \label{eq:proof_J''_def}\\
    &= \frac{\mathbb{E}\left[\left(\mathsf{d}_i(x_i, Y^\ast_i) - \mathsf{\bar d}_{Y^\ast_i, 1}(x_i, \lambda_n^\ast) \right)^2\exp(-\lambda \mathsf{d}_i(x_i, Y^\ast_i) )\right]}{\mathbb{E}\left[\exp(-\lambda \mathsf{d}_i(x_i, Y^\ast_i) )\right]} \\
    &\geq \exp(-C_1M_1) \mathbb{E}\left[\left(\mathsf{d}_i(x_i, Y^\ast_i) - \mathsf{\bar d}_{Y^\ast_i, 1}(x_i, \lambda_n^\ast) \right)^2\mathbbm{1}[ \mathcal{M}_i]\right] \label{eq:proof_J''_trunc} \\
    &\geq \exp(-C_1M_1) \mathbb{E}\left[\left(\mathsf{d}_i(x_i, Y^\ast_i) - \mathsf{\bar d}_{Y^\ast_i, 1}(x_i, \lambda_n^\ast) \right)\mathbbm{1}[\mathcal{M}_i]\right]^2 \label{eq:proof_J''_Jensen}
\end{align}
where~\eqref{eq:proof_J''_def} follows from property~\ref{property:F}, ~\eqref{eq:proof_J''_trunc} follows from the bounds $0 \leq \lambda_n^\ast \leq C_1$ from Lemma~\ref{lem:bdd_curv}, and \eqref{eq:proof_J''_Jensen} is due to Jensen's inequality. Taking expectation of \eqref{eq:proof_J''_Jensen} with respect to $P_{X_i}$, averaging over the components, and using Jensen's inequality together with \eqref{eq:proof_dbar_markov} gives
\begin{align}
   \mathbb{E}\Bigg[&{\frac{1}{n}|J''_{\myVec{Y^\ast}} (\myVec{X},\lambda_n^\ast)|}\Bigg] \notag \\
    &\geq \exp(-C_1M_1) \left(d_{\max} - \epsilon'  - \frac{K_0^{1/3}}{M_1} - d\right)^2.
\end{align}
Hence there exists $\kappa_0 >0$ such that 
\begin{align}
\label{eq:uni_curv}
    \liminf_{n\rightarrow \infty} \E{\frac{1}{n}|J''_{\myVec{Y^\ast}} (\myVec{X},\lambda_n^\ast)|} > \kappa_0,
\end{align}
Moreover, from property~\ref{property:A} and~\ref{property:C}, we have
\begin{align}
    \mathbb{R}''_n\left(\myVec{X}, \myVec{Y}^\ast, d \right) 
    &= \frac{1}{\mathbb{E}\left[\frac{1}{n}|J''_{\myVec{Y^\ast}}(\myVec{X}, \lambda_n^\ast)|\right]} \\
    &\leq \frac{1}{\kappa_0},
    \label{eq:R''upperbound2}
\end{align}
In particular, since $\mathsf{\bar d}_{\myVec{Y}^\ast,1}(\myVec{x}, \lambda)$ is continuous and non-decreasing in $\lambda$, the arguments leading to~\eqref{eq:R''upperbound2} applies to every $d' \in [d - 3\Delta/2, d + 3\Delta/2]$ with the same constants. Hence, it provides an uniform upper bound on $\mathbb{R}_n''\bigl(\myVec{X}, \myVec{Y}^\ast, d+\alpha\bigr)$ for $|\alpha| \leq 3\Delta/2$, and $\delta_n$ in \eqref{eq:delta} is uniformly bounded by $\delta \triangleq \frac{3\Delta}{2\kappa_0}$. 

Next, construct $\mathcal{T}_n$ as the set of all sequences $\myVec{x}$ that satisfy
\begin{align}
\aver{1}{\alpha_\myVec{Y^\ast}(x_i) } &< d_{\min | \myVec{X}, \myVec{Y^\ast}} + \Delta \label{eq:typical_dmin} \\
\aver{1}{\mathsf{\bar d}_{Y^\ast_i,1}(x_i,0)} &> d_{\max | \myVec{X}, \myVec{Y^\ast}} - \Delta \label{eq:typical_dmax} \\
\aver{1}{\mathsf{\bar d}_{Y_i^\ast,1}(x_i, \underline \lambda_n)} &> d + \Delta 
\label{eq:const1} \\
\aver{1}{\mathsf{\bar d}_{Y_i^\ast,1}(x_i, \bar \lambda_n)} &< d - \Delta
\label{eq:const2} \\
\aver{1}{\mathsf{\bar d}_{Y_i^\ast,3}(x_i, 0)} &
\le \aver{1}{\mathbb{E}\left[\mathsf{\bar d}_{Y^\ast_i,3}(X_i,0) \right]} + \Delta
\label{eq:const3}\\
\overline{V_n}(\myVec{x})
&\ge
\frac{\mu_n''}{2}
\label{eq:const4} \\
\underline{V_n}(\myVec{x})
&\le
\frac{3\mu_n''}{2}.
\label{eq:const5}
\end{align}
For $\myVec{x} \in \mathcal{T}_n$, from~\eqref{eq:const1}--\eqref{eq:const2}, we have
\begin{equation}
\aver{1}{\mathsf{\bar d}_{Y_i^\ast,1}(x_i, \bar \lambda_n)}
< d <
\aver{1}{\mathsf{\bar d}_{Y_i^\ast,1}(x_i, \underline \lambda_n)} .
\end{equation}
Moreover,
\begin{align}
d
&=
\mathbb{E}\left[\aver{1}{\mathsf{\bar d}_{Y_i,1}\left(\hat X_i, \hat \lambda_n^\ast(\myVec{X})\right)}\right]
\label{eq:distortion_expec} \\
&=
\aver{1}{\mathsf{\bar d}_{Y_i^\ast,1}\left(x_i, \hat \lambda_n^\ast(\myVec{x})\right)} ,
\label{eq:distortion_single_letter}
\end{align}
where~\eqref{eq:distortion_expec} follows from property~\ref{property:A} and is well defined due to~\eqref{eq:typical_dmin}-\eqref{eq:typical_dmax}, and~\eqref{eq:distortion_single_letter} follows from the choice of proxy measure in~\eqref{eq:proxy}. By property~\ref{property:E}, it follows that
\begin{equation}
    \underline \lambda_n < \hat \lambda_n^\ast(\myVec{x}) < \bar{\lambda}_n.
\end{equation}
Applying Taylor's theorem, we obtain
\begin{equation}
    -\frac{3\Delta}{2}\,\mathbb{R}''_n\left(\myVec{X}, \myVec{Y}^\ast, \bar d \right)
    + \lambda_n^\ast
    < \hat \lambda_n^\ast(\myVec{x})
    < \lambda_n^\ast
    + \frac{3\Delta}{2}\,\mathbb{R}''_n\left(\myVec{X}, \myVec{Y}^\ast, \underline d \right),
\end{equation}
for some $\bar d \in \bigl[d,\, d+\tfrac{3\Delta}{2}\bigr]$ and $\underline d \in \bigl[d-\tfrac{3\Delta}{2},\, d\bigr]$. By~\eqref{eq:delta},
\begin{equation}
    |\hat \lambda_n^\ast(\myVec{x}) - \lambda_n^\ast| < \delta.
\end{equation}

For the constraints~\eqref{eq:typical_dmin}--\eqref{eq:const3}, the left-hand sides have expectations equal to the first terms on the corresponding right-hand sides. By assumption~\ref{ass:uni_moments} and Hölder's inequality, these terms have uniformly bounded variances. Therefore, by Chebyshev inequality, the probability of violating any of~\eqref{eq:typical_dmin}--\eqref{eq:const3} is $O\left(\frac{1}{n}\right)$.

By continuity of $J_{Y_i^\ast}''(x_i, \cdot)$ for all $i\le n$, we have
\begin{align}
    \mu_n''
    & = \lim_{\delta\downarrow0}
    \mathbb{E}\left[\frac{1}{n}\sum_{i=1}^n
    \inf_{|\alpha|<\delta}
    \bigl| J_{Y_i^\ast}''(X_i, \lambda_n^\ast + \alpha) \bigr|\right] \\
    &= \lim_{\delta\downarrow0}
    \mathbb{E}\left[\frac{1}{n}\sum_{i=1}^n
    \sup_{|\alpha|<\delta}
    \bigl| J_{Y_i^\ast}''(X_i, \lambda_n^\ast + \alpha) \bigr|\right].
\end{align}
Consequently, for sufficiently small $\Delta$,
\begin{equation}
    \frac{3\mu_n'' }{4}
    \le
    \mathbb{E}\left[\underline{V_n}(\myVec{X})\right]
    \le
    \mathbb{E}\left[\overline{V_n}(\myVec{X})\right]
    \le
    \frac{5\mu_n''}{4}.
\end{equation}
By property~\ref{property:G} and assumption~\ref{ass:uni_moments}, the variances of $\underline{V_n}(\myVec{X})$ and $\overline{V_n}(\myVec{X})$ are finite and uniformly bounded. Hence, by Chebyshev's inequality, the probability of violating~\eqref{eq:const4}--\eqref{eq:const5} is also $O\left(\frac{1}{n}\right)$.

Next, we bound the moments appearing in the Berry-Essen Theorem (Appendix~\ref{app:berry_esseen}) for random variables $W_i = \mathsf{d}_i(x_i, \hat Z_i^\ast)$. Define
\begin{align}
  \mu(\myVec{x}) &\triangleq \frac{1}{n} \sum_{i=1}^n
     \mathbb{E}\left[ \mathsf{d}_i(x_i, \hat Z_i^\ast)\,\big|\, \hat{X}_i = x_i \right]
\\ 
  &= \frac{1}{n} \sum_{i=1}^n
     \mathsf{\bar d}_{Y_i^\ast,1}\left(x_i, \hat \lambda^\ast_n(\myVec{x})\right) \\
 &= d
\\
  V(\myVec{x})
  &\triangleq \frac{1}{n} \sum_{i=1}^n
     \left[ 
       \mathsf{\bar d}_{Y_i^\ast,2}\left(x_i, \hat \lambda^\ast_n(\myVec{x})\right)
       -  \mathsf{\bar d}^2_{Y_i^\ast, 1}\left(x_i, \hat \lambda^\ast_n(\myVec{x})\right)
     \right]
\\
  &= -\,\frac{1}{n} \sum_{i=1}^n
      J_{Y_i^\ast}''\left(x_i, \hat \lambda^\ast_n(\myVec{x})\right)
\\
  T(\myVec{x})
  &\triangleq \frac{1}{n} \sum_{i=1}^n
     \mathbb{E}\Bigg[
       \bigg| \mathsf{d}_i(x_i,\hat Z_i^\ast)
       \notag \\
       & \qquad - \mathbb{E}\left[\mathsf{d}_i(x_i,\hat Z_i^\ast)\mid \hat{X}_i = x_i\right] 
       \bigg|^{3}
       \,\Big|\, \hat{X}_i = x_i
     \Bigg]
\\
  &\le \frac{8}{n} \sum_{i=1}^n
     \mathbb{E}\left[
       \mathsf{d}_i^3(x_i,\hat Z_i^\ast)\,\big|\, \hat X_i = x_i
     \right]
\\
  &= \frac{8}{n} \sum_{i=1}^n
     \mathsf{\bar d}_{Y_i^\ast, 3}(x_i, \hat \lambda^\ast_n(\myVec{x}))
\\
  &\le \frac{8}{n} \sum_{i=1}^n
     \mathsf{\bar d}_{Y_i^\ast, 3}(x_i,0)
\end{align}
Thus, when $\myVec{x} \in \mathcal{T}_n$, we have
\begin{align}
  \frac{\mu_n''}{2} &\leq V(\myVec{x}) \leq \frac{3\mu_n''}{2}
  \label{eq:vx} \\
  T(\myVec{x})
  &\le
  \frac{8}{n} \sum_{i=1}^n \mathbb{E}\left[\mathsf{\bar d}_{Y^\ast_i,3}(X_i,0) \right] + 8\Delta \label{eq:tx}
\end{align}
where~\eqref{eq:vx} follows from~\eqref{eq:lambda_concentration}, \eqref{eq:const4}, and~\eqref{eq:const5}, and~\eqref{eq:tx} follows from~\eqref{eq:const3}. By~\eqref{eq:uni_curv}
we have $V(\myVec{x}) > 0$. assumption~\ref{ass:uni_moments} implies that $T(\myVec{x})$ is uniformly bounded for all sufficiently large $n$. Therefore, by the Berry--Esseen Theorem (Appendix~\ref{app:berry_esseen}), the shell probability~\eqref{eq:shell} is on the order of $O\left(\frac{1}{\sqrt{n}}\right)$.
\end{proof}

\section{Proof of Lemma~\ref{lem:tilted_concentration}}
\label{app:proof_tilted_concentration}
\begin{proof}

For all $\myVec{x} \in \mathcal{T}_n$, we have
\begin{align}
    \sum_{i=1}^n &\bigl[ J_{Y^\ast_i}(x_i,\hat \lambda_n^\ast (\myVec{x})) 
      - J_{Y^\ast_i}(x_i,\lambda^\ast_n) - \hat \lambda_n^\ast (\myVec{x})d + \lambda^\ast_n d \bigr]
    \\ 
    &= \sup_{|\beta|<\delta} \sum_{i=1}^n
       \bigl[ J_{Y^\ast_i}(x_i,\lambda^\ast_n + \beta) - J_{Y^\ast_i}(x_i,\lambda^\ast_n) - \beta d \bigr] \label{eq:sup_J} \\
    &= \sup_{|\beta|<\delta} \Bigg\{
         \beta \sum_{i=1}^n \bigl(J_{Y^\ast_i}'(x_i,\lambda^\ast_n) - d \bigr) \notag \\
         &\qquad \qquad \qquad + \frac{\beta^2}{2} \sum_{i=1}^n J_{Y^\ast_i}''(x_i,\lambda^\ast_n + \xi_n)
       \Bigg\} \label{eq:taylor_J} \\
    &\le \sup_{|\beta|<\delta} \left\{
         \beta \Sigma'(\myVec{x}) - \frac{\beta^2}{2} \Sigma''(\myVec{x})
       \right\} \label{eq:Sigma_J}  \\
    &\le \frac{(\Sigma'(\myVec{x}))^2}{2 \Sigma''(\myVec{x})} \label{eq:SigmaMax_J}.
\end{align}
Here:
\begin{itemize}
    \item \eqref{eq:sup_J} follows from~\eqref{eq:lambda_concentration}, property~\ref{property:B}, and by the identity 
    \begin{equation}
        \mathbb{E}\bigl[J_{\myVec{Y^\ast}}(\myVec{\hat X},\hat \lambda_n^\ast(\myVec{X}))\bigr]
        = \sum_{i=1}^n J_{Y^\ast_i}\bigl(x_i,\hat \lambda_n^\ast(\myVec{x})\bigr),
    \end{equation}
    due to choice of proxy measure in~\eqref{eq:proxy}.
    \item \eqref{eq:taylor_J} follows from Taylor's theorem for some $\xi_n$ satisfying $\lvert \xi_n \rvert \le \delta$.
    \item In~\eqref{eq:Sigma_J}, we defined
    \begin{align}
    \Sigma'(\myVec{x})
        &\triangleq \sum_{i=1}^n \Bigl(J'_{Y^\ast_i}(x_i, \lambda^\ast_n) - d\Bigr), \\
    \Sigma''(\myVec{x})
        &\triangleq \sum_{i=1}^n \inf_{\lvert \alpha \rvert < \delta}
            \left\lvert J''_{Y^\ast_i}(x_i, \lambda^\ast_n + \alpha) \right\rvert .
    \end{align}
    \item Finally,~\eqref{eq:SigmaMax_J} is obtained by maximizing the quadratic expression over $\beta$.
\end{itemize}

From properties~\ref{property:A} and~\ref{property:C},
\begin{equation}
    \mathbb{E}\left[\frac{1}{n}\Sigma'(\myVec{X})\right]
    = \mathbb{E}\left[\frac{1}{n}J'_{\myVec{Y^\ast}}(\myVec{X}, \lambda^\ast_n)\right] - d .
\end{equation}
Denote
\begin{align}
W_i &\triangleq J'_{Y_i^\ast}(X_i,\lambda_n^\ast)-d_i, \\
V_n' &\triangleq \frac{1}{n} \sum_{i=1}^n \mathrm{Var}\bigl[W_i\bigr].
\end{align}


Let $\epsilon'>0$, and set
\begin{equation}
    b_n \triangleq n^{1/4},
    \qquad
    A_n \triangleq V_n'+\epsilon' .
\end{equation}
By assumption~\ref{ass:uni_moments} and property~\ref{property:D}, $V_n'$ is bounded for all sufficiently large $n$. Hence, there exists a constant
$M'<\infty$ such that $A_n\le M'$ 
\begin{equation}
\label{proof:epsbound}
    2b_n\sqrt{A_n\frac{\log n}{n}}
    \le 2\sqrt{M'}\, n^{-1/4}\sqrt{\log n}
    \le \epsilon'
\end{equation}
for all sufficiently large $n$. Therefore, 
\begin{align}
\mathbb P&\left[
    \sum_{i=1}^n W_i
    > \sqrt{A_n n\log n}
\right]
\\
&\le
\mathbb P\left[
    \sum_{i=1}^n W_i
    > \sqrt{A_n n\log n},
    \ \max_{1\le i\le n}|W_i|<b_n
\right] \nonumber \\
&\quad
+
\mathbb P\left[
    \max_{1\le i\le n}|W_i|\ge b_n
\right]  \\
&\le
\exp\left(
    -\frac{3A_n n\log n}
    {6nV_n' + 2b_n\sqrt{A_n n\log n}}
\right)
+
\sum_{i=1}^n \mathbb P\left[|W_i|\ge b_n\right] \label{proof:Bernstein_union} \\
&=
\exp\left(
    -\frac{3A_n \log n}
    {6V_n' + 2b_n\sqrt{A_n\frac{\log n}{n}}}
\right)
+
\sum_{i=1}^n \mathbb P\left[|W_i|^6\ge b_n^6\right]  \\
&\le
\exp\left(
    -\frac{3(V_n+\epsilon')\log n}
    {6V_n' + 2\epsilon'}
\right)
+ \sum_{i=1}^n \frac{\mathbb E[|W_i|^6]}{b_n^6} \label{proof:markov_epsbound} \\
&\le
\frac{1}{\sqrt n}
+
\sum_{i=1}^n
\frac{\mathbb E[\mathsf d_i^6(X_i,Y_i^\ast)]}{n^{3/2}} \label{proof:proportyD}\\
&\le
\frac{1+K_0}{\sqrt n}, \label{proof:ass_dist}
\end{align}
for all sufficiently large $n$. Here, \eqref{proof:Bernstein_union} follows from Bernstein's inequality and the union bound, \eqref{proof:markov_epsbound} follows from Markov's inequality and~\eqref{proof:epsbound}, \eqref{proof:proportyD} is due to property~\ref{property:D}, and \eqref{proof:ass_dist} is due to assumption~\ref{ass:uni_moments}. Consequently,
\begin{equation}
\label{eq:Sigma'}
    \mathbb P\left[
        \Sigma'(\mathbf X)
        >
        \sqrt{A_n n\log n}
    \right]
    \le \frac{1+K_0}{\sqrt n}
\end{equation}
for all sufficiently large $n$. The arguments leading up to~\eqref{eq:Sigma'} deviate from Kostina and Verdú's proof of~\cite[Lemma~5]{kostina12_FixedLength}, in which they apply the Berry--Esseen theorem to show~\eqref{eq:Sigma'}. In our setting, the variance is not guaranteed to be uniformly bounded away from zero. Hence, we choose $A_n$ to be separated from $V_n'$ by a small positive constant. If $V_n'$ is bounded away from zero for all sufficiently large $n$, then Kostina and Verdú's arguments apply and we recover the same remainder term. To handle the case where $V_n'$ may converge to zero, we instead use Bernstein's inequality, at the expense of an additional $\epsilon'$ overhead in the remainder $O\left(\frac{\log}{n}\right)$.

By property~\ref{property:G} and assumption~\ref{ass:uni_moments}, the random variable $\Sigma''(\myVec{X})$ has finite variance. In addition, 
\eqref{eq:uni_curv} ensures that $\mu_n'' \ge \kappa_0$. Consequently, Chebyshev inequality gives
\begin{align}
\mathbb{P}\Bigl[ \Sigma''(\myVec{X}) < n \tfrac{\mu_n''}{2} \Bigr]
&\le \mathbb{P}\Bigl[ \mathbb{E}[\Sigma''(\myVec{X})] - \Sigma''(\myVec{X})
    > n \tfrac{\mu_n''}{4} \Bigr] \\
&\le \frac{K_2''}{n}
\end{align}
for some constant $K_2'' >0$. 

\begin{figure*}[t]
\centering
\resizebox{\textwidth}{!}{$
\mathsf{A}_n^T \mathsf{A}_n
=
\left[
\begin{array}{ccccccc|cccc}
\hat g_0 & \hat g_1 & \cdots & \hat g_m & 0 & \cdots & 0
& 0 & \cdots & 0 & 0 \\
\hat g_1 & \hat g_0 & \ddots & \vdots   & \hat g_m & \ddots & \vdots
& \vdots & \ddots & \vdots & \vdots \\
\vdots   & \ddots   & \ddots & \hat g_1 & \vdots   & \ddots & 0
& \vdots & \ddots & \ddots & \vdots \\
\hat g_m & \cdots   & \hat g_1 & \hat g_0 & \hat g_1 & \cdots & \hat g_m
& 0 & \cdots & 0 & 0 \\
0        & \ddots   & \vdots & \hat g_1 & \hat g_0 & \ddots & \vdots
& \hat g_m & \ddots & \vdots & \vdots \\
\vdots   & \ddots   & \ddots & \vdots   & \ddots   & \ddots & \hat g_1
& \vdots & \ddots & 0 & \vdots \\
0        & \cdots   & 0 & \hat g_m & \cdots & \hat g_1 & \hat g_0
& \hat g_1 & \hat g_2 & \cdots & \hat g_m \\ \hline
0        & \cdots   & \cdots & 0 & \hat g_m & \cdots & \hat g_1
& a_0^2+\cdots+a_{m-1}^2
& a_0a_1+\cdots+a_{m-2}a_{m-1}
& \cdots
& a_0a_{m-1} \\
0        & \cdots   & \cdots & 0 & 0 & \cdots & \hat g_2
& a_0a_1+\cdots+a_{m-2}a_{m-1}
& a_0^2+\cdots+a_{m-2}^2
& \ddots & \vdots \\
\vdots   & \ddots   & \ddots & \vdots & \vdots & \ddots & \vdots & \vdots & \ddots & \ddots & a_0a_1 \\
0        & \cdots   & \cdots & 0 & 0 & \cdots & \hat g_m
& a_0a_{m-1} & \cdots & a_0a_1 & a_0^2
\end{array}
\right].
$}
\caption{Structure of $\mathsf{A}_n^T \mathsf{A}_n$.}
\label{fig:phi_n}
\end{figure*}

Let $G_n$ be the set of $\myVec{x} \in \mathcal{T}^{n}$ satisfying both
\begin{align}
\big( \Sigma'(\myVec{x}) \big)^{2} &\le A_n' n \log n \label{eq:vnlogn} \\
\Sigma''(\myVec{x}) &\ge n \frac{\mu_n''}{2}.\label{eq:nmun}
\end{align}
Since $A_n$ is bounded above and $\mu_n'' > \kappa_0$ for sufficeintly large $n$, there exists a constant $C_4 >0$ such that
\begin{equation}
    \frac{2A_n'}{\mu_n''} \le C_4. \label{eq:c4}
\end{equation}
Finally, denoting
\begin{equation}
g(\myVec{x}) = \sum_{i=1}^n J_{Y^\ast_i}(x_{i}, \hat \lambda_n^\ast (\myVec{x})) 
          - \sum_{i=1}^n J_{Y^\ast_i}(x_{i}, \lambda_n^{\ast}) 
          - \big( \hat \lambda_n^\ast (\myVec{x}) - \lambda^\ast_n \big) n d 
\end{equation}
we have
\begin{align}
\mathbb{P}&\big[g(\myVec{X}) > C_{4}\log n\big] \\
&= \mathbb{P}\left[
    g(\myVec{X}) > C_{4}\log n,\ 
    g(\myVec{X}) \le \frac{(\Sigma'(\myVec{X}))^{2}}{2\Sigma''(\myVec{X})}
\right] \nonumber\\
&\qquad
+ \mathbb{P}\left[
    g(\myVec{X}) > C_{4}\log n,\ 
    g(\myVec{X}) > \frac{(\Sigma'(\myVec{X}))^{2}}{2\Sigma''(\myVec{X})}
\right] \\
&\le \mathbb{P}\left[
    \frac{(\Sigma'(\myVec{X}))^{2}}{2\Sigma''(\myVec{X})} > C_{4}\log n
\right]
+ \frac{K_1}{\sqrt{n}} \label{eq:typical1}\\
&= \mathbb{P}\left[
    \frac{(\Sigma'(\myVec{X}))^{2}}{2\Sigma''(\myVec{X})} > C_{4}\log n,\ 
    \myVec{X}\in G_n
\right] \notag \\
&\qquad
+ \mathbb{P}\left[
    \frac{(\Sigma'(\myVec{X}))^{2}}{2\Sigma''(\myVec{X})} > C_{4}\log n,\ 
    \myVec{X}\notin G_n
\right] + \frac{K_1}{\sqrt{n}} \\
&< 0 + \frac{K_2'}{\sqrt{n}} + \frac{K_2''}{n} + \frac{K_1} {\sqrt{n}}\label{eq:final}
\end{align}
where \eqref{eq:typical1} follows from~\eqref{eq:typical} and \eqref{eq:final} follows from~\eqref{eq:vnlogn}-\eqref{eq:c4}.    
\end{proof}
\section{Proof of 
Lemma~\ref{lemm:GARnon_asymptotic_eigenvalues}}
\label{sec:proof_GAR_eig}

Define
\begin{equation}
p(\omega) \triangleq \sum_{k=0}^m a_k e^{-jk\omega},
\end{equation}
so that $g(\omega)=|p(\omega)|^2$. Then
\begin{align}
|g'(\omega)|
&= \left|p'(\omega)\overline{p(\omega)} + p(\omega)\overline{p'(\omega)}\right| \\
&\le 2|p(\omega)|\,|p'(\omega)| \\
&\le 2\left(\sum_{k=0}^m |a_k|\right)\left(\sum_{k=0}^m k|a_k|\right). \label{eq:Lg}
\end{align}
Hence $g$ is Lipschitz on $[-\pi,\pi]$ with Lipschitz constant $L_g$ given by~\eqref{eq:Lg}.

The entries of $\mathsf{A}_n^T \mathsf{A}_n$ are given by
\begin{equation}
    [\mathsf{A}_n^T \mathsf{A}_n]_{(i,j)}
    =
    \sum_{k=0}^{\,n-\max(i,j)} a_k a_{k+|i-j|}.
\end{equation}

Since $a_k=0$ for $k>m$, every entry lying more than $m$ diagonals away from the main diagonal is zero. The structure of $\mathsf{A}_n^T \mathsf{A}_n$ is shown in Fig.~\ref{fig:phi_n}, where
\begin{align}
\label{eq:hat_g}
     \hat g_k &\triangleq \frac{1}{2\pi}\int_{-\pi}^{\pi} g(\omega)e^{jk\omega}\,d\omega \\ 
    &=\sum_{i=0}^{m} a_i a_{k+i}, \qquad k=0,1,\dots,m,
\end{align}
The entries depend only on $|i-j|$, except for those in the lower-right $m\times m$ block. 
Let $\mathsf{T}_n(g)$ be the Toeplitz matrix with entries
\begin{equation}
    [\mathsf{T}_n(g)]_{(i,j)} = \hat{g}_{|i-j|}.
\end{equation} 
Since $\mathsf{A}_n^T\mathsf{A}_n$ differs from the Toeplitz matrix $\mathsf{T}_n(g)$ only in the lower-right block, we use $\mathsf{T}_n(g)$ to approximate the eigenvalues $\{\mu_{n,i}\}_{i=1}^n$. We first approximate these eigenvalues by the eigenvalues $\{\xi_{n,i}\}_{i=1}^n$ of $\mathsf{T}_n(g)$. However, the eigenvalues of Toeplitz matrices are generally not known in closed form. Thus, we approximate the latter by sorted samples of $g(\omega)$. The following lemma makes this connection precise.

\begin{lemma}
\label{lem:GAR_eigBound}
Fix any $\{a_k\}_{k=1}^m$. Let $\{\mu_{n,i}\}_{i=1}^n$ be the eigenvalues of $\mathsf{A}_n^T\mathsf{A}_n$, with $\mathsf{A}_n$ in~\eqref{eq:def_A_n}. Then, there exists a constant $C_g>0$, depending on $\{a_k\}_{k=0}^m$, such that
\begin{equation}
\label{eq:GAR_eigBound}
|\mu_{n,i}- g_{n,i}| \le \frac{C_g}{n},
\qquad i=m+1,\dots,n-m,
\end{equation}
where $g_{n,1} \le \cdots \le g_{n,n}$ is the nondecreasing rearrangement of
\begin{equation}
\label{eq:g_samples}
g\left(\frac{i\pi}{n+1}\right), \qquad i=1,\dots,n,
\end{equation}
with $g(\omega)$ defined in~\eqref{eq:g}.
\end{lemma}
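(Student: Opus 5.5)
The plan is to compare $\mathsf{A}_n^T\mathsf{A}_n$ with a matrix whose spectrum is known exactly and equals the samples~\eqref{eq:g_samples}, and to absorb the difference with rank-interlacing. That matrix is the $\tau$-algebra (discrete sine transform) matrix $\tau_n(g) \triangleq \mathsf{T}_n(g) - \mathsf{H}_n(g)$. Here $\mathsf{H}_n(g)$ is the Hankel correction with entries $[\mathsf{H}_n(g)]_{(i,j)} = \hat g_{i+j}$, plus the symmetric term $\hat g_{2(n+1)-i-j}$ at the opposite corner. Since $g(\omega) = \hat g_0 + 2\sum_{k=1}^m \hat g_k\cos(k\omega)$ is a cosine polynomial of degree $m$, the classical $\tau$-algebra identity shows the following. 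The vectors $v^{(\ell)}_i = \sin\bigl(i\ell\pi/(n+1)\bigr)$, $\ell=1,\dots,n$, are eigenvectors of $\tau_n(g)$ with eigenvalues exactly $g\bigl(\ell\pi/(n+1)\bigr)$. I will verify this directly from the product-to-sum formula $2\cos(k\omega)\sin(i\omega) = \sin((i+k)\omega) + \sin((i-k)\omega)$. The boundary terms with $i\pm k \notin \{1,\dots,n\}$ are exactly those cancelled by the Hankel correction. Hence the sorted eigenvalues of $\tau_n(g)$ are $g_{n,1}\le\cdots\le g_{n,n}$.

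Next I write $\mathsf{A}_n^T\mathsf{A}_n = \tau_n(g) + \mathsf{E}_{\mathrm{top}} + \mathsf{E}_{\mathrm{bot}}$. From the structure in Fig.~\ref{fig:phi_n}, $\mathsf{A}_n^T\mathsf{A}_n$ coincides with $\mathsf{T}_n(g)$ except in the lower-right $m\times m$ block, and $\mathsf{H}_n(g)$ is supported in the upper-left and lower-right $m\times m$ corners because $\hat g_k=0$ for $k>m$. So $\mathsf{E}_{\mathrm{top}}$ and $\mathsf{E}_{\mathrm{bot}}$ are symmetric, supported on $m\times m$ corner blocks, and each has rank at most $m$; the total perturbation has rank at most $2m$. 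By Weyl's interlacing inequalities for finite-rank perturbations, adding a symmetric matrix with $p$ positive and $q$ negative eigenvalues gives
\begin{equation*}
g_{n,i-q} \le \mu_{n,i} \le g_{n,i+p}.
\end{equation*}
Here the convention is $g_{n,j}=-\infty$ for $j<1$ and $g_{n,j}=+\infty$ for $j>n$. I will try to exploit signs to get the index shift down to $m$ on each side, which is what the range $i=m+1,\dots,n-m$ suggests. Concretely, $\mathsf{E}_{\mathrm{top}} = \mathsf{H}_{\mathrm{top}}$, while $\mathsf{E}_{\mathrm{bot}} = \mathsf{H}_{\mathrm{bot}} - \mathsf{B}^T\mathsf{B}$, where $\mathsf{B}^T\mathsf{B}$ is the positive semidefinite block of terms missing from the truncated rows. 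If this sign bookkeeping fails, I fall back on the crude bound with shift $2m$ and handle the few remaining indices via the uniform bounds $\alpha\le\mu_{n,i},g_{n,i}\le\beta$. At worst this adjusts the index range or the constant, which is harmless for the Riemann-sum application in Lemma~\ref{lemm:GARnon_asymptotic_eigenvalues}.

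It then remains to show that sorted samples are locally dense in value, i.e. $g_{n,i+r}-g_{n,i-r}\le C_g/n$ for fixed $r\le 2m$. I expect this step to be the main obstacle, since Lipschitz continuity alone does not control gaps in the sorted sequence. I will use that $g$ is a cosine polynomial of degree $m$, so $[0,\pi]$ splits into at most $2m$ intervals on which $g$ is monotone. On each monotone piece, consecutive grid samples differ in value by at most $L_g\pi/(n+1)$, with $L_g$ from~\eqref{eq:Lg}. Now fix the value $v=g_{n,i}$. Any value interval $(v, v+h]$ with $h = (r+1)(2m+1)L_g\pi/(n+1)$ and $v+h\le \max g$ must contain at least $r$ samples. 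To see this, take a piece whose range covers $[v,v+h]$; such a piece exists because the ranges of the monotone pieces cover $[\min g,\max g]$ contiguously. Walking along that piece's grid points produces samples in steps of at most $L_g\pi/(n+1)$. When $v+h>\max g$, the gap to the top is already $O(1/n)$. The symmetric argument works downward. Together these give $|g_{n,i\pm r}-g_{n,i}|\le C/n$. Combining this with the interlacing sandwich yields~\eqref{eq:GAR_eigBound} with $C_g$ depending only on $m$, $L_g$ and hence on $\{a_k\}_{k=0}^m$.
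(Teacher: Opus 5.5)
Your route differs from the paper's. You compare $\mathsf{A}_n^T\mathsf{A}_n$ with the $\tau$-matrix $\tau_n(g)=\mathsf{T}_n(g)-\mathsf{H}_n(g)$, which is diagonalized by the sine vectors with eigenvalues exactly $g(\ell\pi/(n+1))$, and you write $\mathsf{A}_n^T\mathsf{A}_n=\tau_n(g)+\mathsf{H}_{\mathrm{top}}+\mathsf{H}_{\mathrm{bot}}-\mathsf{B}^T\mathsf{B}$. Both of these are correct.

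\textbf{The gap is the index range.} Your sign bookkeeping works on the upper side: $\mathsf{E}_{\mathrm{bot}}\preceq\mathsf{H}_{\mathrm{bot}}$, and each anti-triangular Hankel corner has at most $\lceil (m-1)/2\rceil$ positive eigenvalues, so $p\le m$. It fails on the lower side, because $-\mathsf{B}^T\mathsf{B}$ adds up to $m$ negative eigenvalues on top of those of the Hankel corners.

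Concretely, take the stationary AR(2) source with $a_1=0$, $a_2=-\tfrac12$. Then $\hat g_1=0$, $\hat g_2=-\tfrac12$, and $\mathsf{B}^T\mathsf{B}=\tfrac14(e_{n-1}e_{n-1}^T+e_ne_n^T)$. The total perturbation is $-\tfrac12e_1e_1^T-\tfrac14e_{n-1}e_{n-1}^T-\tfrac34e_ne_n^T$, so $q=3>m$. Weyl then gives $\mu_{n,i}\ge g_{n,i-3}$ only for $i\ge4$, whereas the lemma claims the bound at $i=m+1=3$.

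Your fallback does not close this. The bounds $\alpha\le\mu_{n,i},g_{n,i}\le\beta$ only give $|\mu_{n,i}-g_{n,i}|\le\beta-\alpha=O(1)$ on the leftover indices. Moreover, for nonstationary $\{a_k\}$, which the lemma allows, $\alpha=0$. No choice of constant turns this into $C_g/n$, so you would be proving the estimate only on a shortened range. That weaker version suffices for Lemma~\ref{lemm:GARnon_asymptotic_eigenvalues}, which tolerates $O(1)$ exceptional indices, but it is not the stated lemma.

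\textbf{The missing ingredient} is a sharp bound at the bottom boundary, which handles the boundedly many indices $m+1\le i\le q$:
\begin{itemize}
\item Since $\mathsf{A}_n^T\mathsf{A}_n=\mathsf{T}_n(g)-\mathsf{B}^T\mathsf{B}$ with $\mathrm{rank}(\mathsf{B}^T\mathsf{B})\le m$, you get $\mu_{n,i}\ge\xi_{n,i-m}\ge\xi_{n,1}$ for $i\ge m+1$.
\item The identity $x^T\mathsf{T}_n(g)x=\frac{1}{2\pi}\int_{-\pi}^{\pi}g(\omega)\bigl|\sum_k x_ke^{jk\omega}\bigr|^2d\omega$ gives $\xi_{n,1}\ge\min_\omega g(\omega)$.
\item On the other side, $g_{n,i}\le\min_\omega g(\omega)+i\eta$ with $\eta=L_g\pi/(n+1)$.
\end{itemize}

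\textbf{Comparison with the paper.} The paper never uses an exactly diagonalizable comparison matrix. It gets $\xi_{n,i-m}\le\mu_{n,i}\le\xi_{n,i+m}$ by Cauchy interlacing through the common principal submatrix $\mathsf{T}_{n-m}(g)$. It then imports from Ekstr\"om et al.\ the bound $|\xi_{n,j}-g_{n,j}|\le C_6/(n+1)$, valid for every $j$. The only index loss is therefore the single shift $m$, which is exactly where the range $m+1,\dots,n-m$ comes from. Your $\tau$-algebra route buys self-containedness, essentially reproving the cited sampling theorem in the banded case, at the price of the extra boundary work above.

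\textbf{The step you flag as the main obstacle is easy.} Lipschitz continuity does control the gaps of the sorted samples. The unsorted sequence $g(i\pi/(n+1))$ moves in steps of at most $\eta$, so it must cross every gap $(g_{n,j},g_{n,j+1})$ in a single step. Hence $g_{n,j+1}-g_{n,j}\le\eta$ and $|g_{n,i\pm r}-g_{n,i}|\le r\eta$; this is the paper's one-line remark. Your monotone-pieces argument is also not right as written: no single piece's range need cover all of $[v,v+h]$, so a pigeonhole step over the pieces would be needed. It is better replaced by this observation.
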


We prove Lemma~\ref{lem:GAR_eigBound} using two tools from literature stated below. The first is Cauchy's interlacing theorem, which compares the eigenvalues of a Hermitian matrix with those of a principal submatrix. The second is the eigenvalue sampling theorem which appeared in~\cite{Ekstrom02102018}, using the results of Bogoya \textit{et. al.}\cite{Bogoya2015}

\begin{theorem}[Cauchy interlacing theorem, \cite{bhatia1997matrix} Thm. 26.1]
\label{thm:cauchy}
Let $H\in\mathbb{C}^{n\times n}$ be a Hermitian matrix partitioned as
\begin{equation}
H=
\begin{pmatrix}
P & \star \\
\star & \star
\end{pmatrix},
\end{equation}
where $P\in\mathbb{C}^{(n-m)\times(n-m)}$ is a principal submatrix of $H$.
Let
\begin{equation}
    \lambda_1(H)\le \lambda_2(H)\le \cdots \le \lambda_n(H)
\end{equation}
and
\begin{equation}
\lambda_1(P)\le \lambda_2(P)\le \cdots \le \lambda_{n-m}(P)
\end{equation}
denote the eigenvalues of $H$ and $P$, respectively. Then
\begin{equation}
\lambda_i(H)\le \lambda_i(P)\le \lambda_{i+m}(H),
\qquad i=1,\dots,n-m.
\end{equation}
\end{theorem}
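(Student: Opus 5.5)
The plan is to derive both inequalities from the Courant--Fischer min-max characterization of eigenvalues of a Hermitian matrix, and to compare the Rayleigh quotients of $H$ and $P$ on a common family of subspaces. Let $E\subset\mathbb{C}^n$ denote the $(n-m)$-dimensional coordinate subspace of vectors whose last $m$ entries vanish. The map $\iota\colon\mathbb{C}^{n-m}\to E$ that pads with $m$ zeros is an isometry. Because $P$ is the leading principal block of $H$, it satisfies
\begin{equation*}
(\iota y)^{*} H (\iota y) = y^{*} P y, \qquad y\in\mathbb{C}^{n-m}.
\end{equation*}
Thus the Rayleigh quotient of $P$ is exactly the restriction of the Rayleigh quotient of $H$ to $E$. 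The whole argument rests on this identity.

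For the left inequality I will use the min-max form
\begin{equation*}
\lambda_i(H)=\min_{\dim S=i}\ \max_{x\in S,\ \|x\|=1} x^{*}Hx .
\end{equation*}
Applying the same formula to $P$, $\lambda_i(P)$ is the minimum over $i$-dimensional subspaces $S'\subset\mathbb{C}^{n-m}$. Each such $S'$ maps under $\iota$ to an $i$-dimensional subspace $\iota(S')\subset\mathbb{C}^n$ with the same maximal Rayleigh quotient. So $\lambda_i(P)$ is a minimum over a subfamily of the subspaces admissible for $H$, and hence $\lambda_i(H)\le\lambda_i(P)$.

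For the right inequality I will use the dual max-min form
\begin{equation*}
\lambda_k(H)=\max_{\dim S=n-k+1}\ \min_{x\in S,\ \|x\|=1} x^{*}Hx .
\end{equation*}
Take $k=i+m$, so the admissible subspaces have dimension $n-i-m+1$. For $P$, which has size $n-m$, the eigenvalue $\lambda_i(P)$ is a maximum over subspaces of dimension $(n-m)-i+1$, which is the same number. Embedding these subspaces by $\iota$ shows that $\lambda_i(P)$ is a maximum over a subfamily of the subspaces admissible for $\lambda_{i+m}(H)$. This gives $\lambda_i(P)\le\lambda_{i+m}(H)$.

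There is no substantial obstacle. The only point needing care is the bookkeeping of dimensions in the max-min formula: one must check that $n-(i+m)+1=(n-m)-i+1$, so that the shift by $m$ comes out exactly. A fallback route, if preferred, is to prove the case $m=1$ by the same argument and then iterate $m$ times, deleting one row and column at a time. The shifts then accumulate to $\lambda_i(H)\le\lambda_i(P)\le\lambda_{i+m}(H)$.
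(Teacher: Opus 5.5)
Your proof is correct. Because the zero-padding isometry $\iota$ preserves Rayleigh quotients, $\lambda_i(P)$ is an extremum over a subfamily of the subspaces admissible for $\lambda_i(H)$ in the min-max form and for $\lambda_{i+m}(H)$ in the max-min form, and the check $n-(i+m)+1=(n-m)-i+1$ is exactly what gives the shift by $m$. The paper does not prove this statement itself; it simply cites Bhatia, and your argument is the standard Courant--Fischer proof of that textbook result.
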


\begin{theorem}[Ekström \textit{et al.} \cite{Ekstrom02102018}, Eq. (1-3)]
\label{lem:Bogoya_eig_sampling_Cn}
Let $\mathsf{T}_n(g) \in \mathbb{R}^{n\times n}$ denote the Toeplitz matrix generated by $g(\omega)$ in~\eqref{eq:g}, namely,
\begin{equation}
\mathsf{T}_n(g) = \bigl(\hat g_{|i-j|}\bigr)_{i,j=1}^n.
\end{equation}
where $\hat g_k$ is defined in~\eqref{eq:hat_g}. Let $\xi_{n,1}\le \cdots \le \xi_{n,n}$ denote the eigenvalues of $\mathsf{T}_n(g)$. Also, let $g_{n,1} \le \cdots \le g_{n,n}$ denote the samples of $g(\omega)$ defined in~\eqref{eq:g_samples}. Then there exists a constant $C_6>0$ such that
\begin{equation}
\label{eq:sample_bound}
\left|\xi_{n,i} - g_{n,i}\right|
\le \frac{C_6}{n+1},
\qquad i=1,\dots,n.
\end{equation}
\end{theorem}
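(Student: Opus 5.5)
The plan is to compare $\mathsf{T}_n(g)$ with a matrix whose spectrum is \emph{exactly} the sample set $\{g(i\pi/(n+1))\}_{i=1}^n$, and then control the difference by a low-rank interlacing argument. Since the $a_k$ are real, $g(\omega)=\hat g_0+2\sum_{k=1}^m \hat g_k\cos(k\omega)$ is a real cosine polynomial of degree $m$. The natural comparison matrix is therefore the $\tau$-algebra (discrete sine transform) matrix
\begin{equation*}
\tau_n(g)\triangleq \mathsf{T}_n(g)-\mathsf{H}_n,
\end{equation*}
where $\mathsf{H}_n$ is the Hankel correction with entries $\hat g_{i+j}$ in the upper-left corner and the mirrored entries in the lower-right corner.

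First I would verify the two properties of this matrix that the argument needs.
\begin{itemize}
\item $\tau_n(g)=\sum_k \hat g_k\,\tau_n(\cos k\omega)$ is diagonalized by the sine transform with vectors $v_\ell(i)=\sin\bigl(i\ell\pi/(n+1)\bigr)$. This uses the identity $\sin((i-k)\theta)+\sin((i+k)\theta)=2\cos(k\theta)\sin(i\theta)$, together with the boundary reflection $\sin(-j\theta)=-\sin(j\theta)$ that the Hankel part implements. Hence the eigenvalues of $\tau_n(g)$ are exactly $g(\ell\pi/(n+1))$, $\ell=1,\dots,n$, whose sorted version is $g_{n,1}\le\cdots\le g_{n,n}$.
\item $\mathsf{H}_n$ is supported on two $(m-1)\times(m-1)$ corner blocks, so $\operatorname{rank}\mathsf{H}_n\le r\triangleq 2(m-1)$, independently of $n$.
\end{itemize}

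Next, I would apply Weyl's interlacing for low-rank Hermitian perturbations. This is the rank-$r$ analogue of Theorem~\ref{thm:cauchy}, obtained from Courant--Fischer by restricting to the kernel of the perturbation. It gives
\begin{equation*}
g_{n,i-r}\le \xi_{n,i}\le g_{n,i+r},
\end{equation*}
with the conventions $g_{n,j}=\min g$ for $j<1$ and $g_{n,j}=\max g$ for $j>n$. These conventions are legitimate because the Rayleigh quotients of $\mathsf{T}_n(g)$ lie in $[\min g,\max g]$. Thus it suffices to show that the spread $g_{n,i+r}-g_{n,i-r}$, and the gaps from the extreme samples to $\min g$ and $\max g$, are $O(1/n)$.

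The main obstacle is this step: bounding gaps between \emph{sorted} samples, since sorting destroys the grid adjacency. I would first bound a single gap. Suppose $g_{n,j}<g_{n,j+1}$ are consecutive sorted values. Then no sample lies in the open interval between them, but some samples lie at or below $g_{n,j}$ and some at or above $g_{n,j+1}$. Walking along the grid $\omega_\ell=\ell\pi/(n+1)$, there must be two adjacent grid points $\omega_\ell,\omega_{\ell+1}$ with one value at most $g_{n,j}$ and the other at least $g_{n,j+1}$. By the Lipschitz bound~\eqref{eq:Lg},
\begin{equation*}
g_{n,j+1}-g_{n,j}\le L_g\pi/(n+1).
\end{equation*}
Summing $2r$ such gaps gives $|\xi_{n,i}-g_{n,i}|\le 2rL_g\pi/(n+1)$ for interior indices. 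For the boundary indices, the extreme samples are within $L_g\pi/(n+1)$ of $\min g$ and $\max g$: every point of $[0,\pi]$ is within one grid step of a sample, and $g$ is even and $2\pi$-periodic, so its extrema over $[-\pi,\pi]$ are attained on $[0,\pi]$. This yields~\eqref{eq:sample_bound} with $C_6=(2r+1)\pi L_g$, which depends only on $\{a_k\}$.
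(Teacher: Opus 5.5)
Your proof is correct, and it takes a genuinely different route from the paper. The paper does not prove this statement at all; it imports it from Ekstr\"om et al., which in turn rests on the Bogoya--B\"ottcher--Grudsky--Maximenko asymptotic expansion of Toeplitz eigenvalues.

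That expansion is much finer than what is needed here. It provides higher-order corrections in $h=1/(n+1)$ and pairs the $i$th eigenvalue with the $i$th grid frequency in natural order. However, it is established for smooth simple-loop symbols, i.e.\ $g$ monotone on $[0,\pi]$. The AR symbols fed into Lemma~\ref{lem:GAR_eigBound} need not be monotone; this fails already for $m=2$ whenever the spectrum has a peak inside $(0,\pi)$, as in the peaked AR(2)--AR(8) examples in the figures.

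Your argument uses only that $g$ is a real cosine polynomial of degree $m$. It combines four ingredients:
\begin{itemize}
\item the exact sine-transform diagonalization of $\tau_n(g)$;
\item the rank bound $2(m-1)$ on the corner Hankel correction;
\item Weyl's low-rank interlacing, with the Rayleigh-quotient bounds $\min g\le \xi_{n,i}\le \max g$ handling the edge indices;
\item the grid-walk bound on consecutive sorted gaps.
\end{itemize}
Together these give exactly the first-order, sorted-sample estimate the paper needs, for arbitrary non-monotone $g$, with the explicit constant $C_6=(2r+1)\pi L_g$.

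There are two side benefits.
\begin{itemize}
\item Your grid-walk step proves the fact that the paper invokes without justification in its proof of Lemma~\ref{lem:GAR_eigBound}: sorting the samples cannot increase the largest consecutive gap.
\item The matrix $\mathsf{A}_n^T\mathsf{A}_n-\tau_n(g)$ is supported in two corner blocks of total rank at most $2m-1$. So the same interlacing would give Lemma~\ref{lem:GAR_eigBound} directly, bypassing $\mathsf{T}_n(g)$ altogether. The resulting interior index range is slightly smaller, which is harmless for Lemma~\ref{lemm:GARnon_asymptotic_eigenvalues}.
\end{itemize}
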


\begin{proof}[Proof of Lemma~\ref{lem:GAR_eigBound}]

Since $\mathsf{T}_{n-m}(g)$ is a principal $(n-m)\times(n-m)$ submatrix of
$\mathsf{A}_n^T\mathsf{A}_n$ (see Fig.~\ref{fig:phi_n}), Theorem~\ref{thm:cauchy} gives
\begin{equation}
\mu_{n,i} \le \xi_{(n-m),i} \le \mu_{n,(i+m)}
\qquad i=1,\dots,n-m.
\end{equation}
Similarly, since $\mathsf{T}_{n-m}(g)$ is also a principal submatrix of
$\mathsf{T}_n(g)$, we have
\begin{equation}
\xi_{n,i} \le \xi_{(n-m),i} \le \xi_{n, (i+m)},
\qquad i=1,\dots,n-m.
\end{equation}
Combining these two inequalities yields
\begin{equation}
\label{eq:mu_xi}
\xi_{n, (i-m)}
\le
\mu_{n,i}
\le
\xi_{n,(i+m)},
\quad i=m+1,\dots,n-m. 
\end{equation}

Next, we use Theorem~\ref{lem:Bogoya_eig_sampling_Cn} to approximate $\xi_{n,i}$ by $g_{n,i}$. For $m+1 \le i \le n-m$, we have
\begin{align}
    &|\mu_{n,i} - g_{n,i}| \\
    &\leq \max\left( \left|\xi_{n,(i+m)} - g_{n,i}\right|, \left|\xi_{n,(i-m)} - g_{n,i}\right| \right) \label{eq:sample_first_ineeq} \\
    &\leq \max \left( \left|\xi_{n,(i+m)} - g_{n,(i+m)}\right|, \left|\xi_{n,(i-m)} - g_{n,(i-m)}\right| \right) \notag 
    \\ 
    &\qquad + \max \left( \left| g_{n,(i+m)} - g_{n,i}\right|, \left| g_{n,(i-m)} - g_{n,i}\right| \right) \label{eq:sample_second_ineeq}\\
    &\leq \frac{C_6}{n + 1} + \frac{m\pi L_g}{n+1} \label{eq:sample_last_ineeq} 
\end{align}
where~\eqref{eq:sample_first_ineeq} follows from~\eqref{eq:mu_xi}, ~\eqref{eq:sample_second_ineeq} follows from triangle inequality, the first bound in~\eqref{eq:sample_last_ineeq} follows from~\eqref{eq:sample_bound}, and the second one follows from $g$ being Lipschitz with constant $L_g$ in~\eqref{eq:Lg} and from the fact that rearranging the samples can only decrease the difference between consecutive samples.
\end{proof}

\begin{proof}[Proof of Lemma~\ref{lemm:GARnon_asymptotic_eigenvalues}]
Since $F$ is $L_F$-Lipschitz, and $g$ is $L_g$-Lipschitz with constant given in~\eqref{eq:Lg}, the composition $F\circ g$ is Lipschitz on $[-\pi,\pi]$ with constant at most $L_FL_g$. Thus, the standard Riemann-sum estimate gives
\begin{equation}
\label{eq:integral_est}
|S_n-I|\le \frac{C_5}{n}
\end{equation}
for some constant $C_5>0$ that depends on $L_F$ and $\{a_k\}_{k=0}^m$. Now define
\begin{equation}
S_n \triangleq \frac{1}{n}\sum_{i=1}^n
F\left(g\left(\frac{i\pi}{n+1}\right)\right),
\quad
Q_n \triangleq \frac{1}{n}\sum_{i=1}^n F\bigl(\mu_{n,i}\bigr).
\end{equation}
Since $\{g_{n,i}\}_{i=1}^n$ is just a rearrangement of
\begin{equation}
    \left\{g\left(\frac{i\pi}{n+1}\right)\right\}_{i=1}^n,
\end{equation}
we also have
\begin{equation}
S_n = \frac{1}{n}\sum_{i=1}^n F\bigl(g_{n,i}\bigr).
\end{equation}

Using~\eqref{eq:GAR_eigBound} and the Lipschitz property of $F$, for $i=m+1,\dots,n-m$,
\begin{equation}
\left|F(\mu_{n,i})-F(g_{n,i})\right|
\le \frac{L_FC_g}{n}.
\end{equation}
For the remaining $2m$ indices, we use the trivial bound
\begin{equation}
\left|F(\mu_{n,i})-F(g_{n,i})\right|
\le 2\|F\|_\infty.
\end{equation}
Therefore,
\begin{equation}
\label{eq:sum_est}
|Q_n-S_n|
\le \frac{L_FC_g+4m\|F\|_\infty}{n}.
\end{equation}

Combining~\eqref{eq:sum_est} and~\eqref{eq:integral_est}, we obtain
\begin{equation}
|Q_n-I|
\le |Q_n-S_n|+|S_n-I|
\le \frac{C_F}{n},
\end{equation}
where $C_F>0$ depends on $L_F$, $\|F\|_\infty$, and $\{a_k\}_{k=0}^m$.
\end{proof}

\bibliographystyle{IEEEtran}
\bibliography{ref.bib}

\end{document}